%
%
%

%
%
\documentclass[preprint,showpacs,preprintnumbers,amsmath,amssymb]{revtex4}
\usepackage{amsmath}



\usepackage{graphicx}
\usepackage{pdfpages}
\usepackage{dcolumn}
\usepackage{bm}
\usepackage{amsmath} 
\usepackage{amsfonts}
\usepackage{amssymb}
\usepackage{url}
\usepackage{microtype}
\usepackage{graphicx}%


\newtheorem{theorem}{Theorem}[section]
\newtheorem{lemma}[theorem]{Lemma}
\newtheorem{proposition}[theorem]{Proposition}

\newenvironment{proof}[1][Proof]{\begin{trivlist}
\item[\hskip \labelsep {\bfseries #1}]}{\end{trivlist}}

\newcommand{\qed}{\nobreak \ifvmode \relax \else
      \ifdim\lastskip<1.5em \hskip-\lastskip
      \hskip1.5em plus0em minus0.5em \fi \nobreak
      \vrule height0.75em width0.5em depth0.25em\fi}
\newcommand{\norm}[1]{\left\lVert#1\right\rVert}

\begin{document}

\preprint{}
\title{Master Lovas-Andai and Equivalent Formulas Verifying the $\frac{8}{33}$ Two-Qubit Hilbert-Schmidt Separability Probability and Companion Rational-Valued Conjectures}
\author{Paul B. Slater}
 \email{slater@kitp.ucsb.edu}
\affiliation{%
Kavli Institute for Theoretical Physics, University of California, Santa Barbara, CA 93106-4030\\
}
\date{\today}
            
\begin{abstract}
We begin by investigating relationships between two forms of Hilbert-Schmidt two-rebit and two-qubit ``separability functions''--those recently advanced by Lovas and Andai ({\it J. Phys. A} {\bf{50}} [2017] 295303), and those earlier presented by Slater ({\it J. Phys. A} {\bf{40}} [2007] 14279). In the Lovas-Andai framework, the independent variable $\varepsilon  \in [0,1]$ is the ratio $\sigma(V)$ of the singular values of the $2 \times 2$ matrix  $V=D_2^{1/2} D_1^{-1/2}$ formed from the two $2 \times 2$ diagonal blocks ($D_1, D_2$) of a 
 $4 \times 4$ density matrix $D= \norm{\rho_{ij}}$. In the Slater setting, the independent variable $\mu$ is the diagonal-entry ratio $\sqrt{\frac{\rho_ {11} \rho_ {44}}{\rho_ {22} \rho_ {33}}}$--with, of central importance, $\mu=\varepsilon$ or $\mu=\frac{1}{\varepsilon}$ when both $D_1$ and  $D_2$ are themselves diagonal.
 Lovas and Andai  established that their two-rebit ``separability function'' $\tilde{\chi}_1 (\varepsilon )$ ($\approx \varepsilon$) yields the 
previously conjectured Hilbert-Schmidt separability probability of $\frac{29}{64}$. We are able, in the Slater framework (using cylindrical algebraic decompositions [CAD] to enforce positivity constraints), to reproduce this result. Further, we newly find its  two-qubit, two-quater[nionic]-bit and ``two-octo[nionic]-bit'' counterparts, $\tilde{\chi_2}(\varepsilon)  =\frac{1}{3} \varepsilon ^2 \left(4-\varepsilon ^2\right)$, 
$\tilde{\chi_4}(\varepsilon) =\frac{1}{35} \varepsilon ^4 \left(15 \varepsilon ^4-64 \varepsilon ^2+84\right)$ and $\tilde{\chi_8} (\varepsilon )= \frac{1}{1287}\varepsilon ^8 \left(1155 \varepsilon ^8-7680 \varepsilon ^6+20160 \varepsilon
   ^4-25088 \varepsilon ^2+12740\right)$. These immediately lead to predictions of  Hilbert-Schmidt separability/PPT-probabilities of $\frac{8}{33}$, $\frac{26}{323}$  and $\frac{44482}{4091349}$, in full agreement with those of the ``concise formula'' ({\it J. Phys. A} {\bf{46}} [2013] 445302), and, additionally, of a ``specialized induced measure'' formula. Then, we find a Lovas-Andai ``master formula'', $\tilde{\chi_d}(\varepsilon)= \frac{\varepsilon ^d \Gamma (d+1)^3 \,
   _3\tilde{F}_2\left(-\frac{d}{2},\frac{d}{2},d;\frac{d}{2}+1,\frac{3
   d}{2}+1;\varepsilon ^2\right)}{\Gamma \left(\frac{d}{2}+1\right)^2}$, encompassing both even and odd values of $d$. Remarkably, we are able to obtain the 
   $\tilde{\chi_d}(\varepsilon)$ formulas, $d=1,2,4$, applicable to full (9-, 15-, 27-) dimensional sets of density matrices, by analyzing (6-, 9, 15-) dimensional sets, with not only diagonal $D_1$ and $D_2$, but also an additional pair of nullified entries. Nullification of a further  pair still, leads to X-matrices, for which a distinctly different, simple Dyson-index phenomenon is noted. C. Koutschan, then, using his HolonomicFunctions program, develops an order-4 recurrence satisfied by the predictions of the several formulas,  establishing their equivalence. A two-qubit separability probability of $1-\frac{256}{27 \pi ^2}$ is obtained based on the {\it operator 
   monotone} function $\sqrt{x}$, with the use of $\tilde{\chi_2}(\varepsilon)$.
\end{abstract}

\pacs{Valid PACS 03.67.Mn, 02.50.Cw, 02.40.Ft, 02.10.Yn, 03.65.-w}
\keywords{$2 \cdot 2$ quantum systems,  Lovas-Andai formulas, Peres-Horodecki conditions,  partial transpose, two-qubits,  two-rebits, Hilbert-Schmidt measure,   separability probabilities, separability functions,  random matrix theory, Dyson indices, rebit-retrits, qubit-qutrits, singular value ratio,  polylogarithms, quaternions, octonions, Moore determinant, two-quaterbits, X-states, MeijerG, creative telescoping}

\maketitle
\tableofcontents
\section{Introduction and initial analyses}
To begin our investigations, focusing on interesting recent work of Lovas and Andai \cite{lovasandai}, we examined a certain possibility--motivated by a number of previous studies (e.g. \cite{slater833,slaterJGP2,JMP2008,ratios2}) and the apparent strong relevance there of the Dyson-index vantage upon random matrix theory \cite{dumitriu2002matrix}. More specifically, we ask whether the sought Lovas-Andai ``separability  function'' $\tilde{\chi}_2(\varepsilon)$  for the standard (complex) two-qubit systems might be simply proportional (or even  equal) to the square of their successfully constructed two-rebit separability function   \cite[eq. (9)]{lovasandai}, 
\begin{equation} \label{BasicFormula}
\tilde{\chi}_1 (\varepsilon ) = 1-\frac{4}{\pi^2}\int\limits_\varepsilon^1 
\left(
s+\frac{1}{s}-
\frac{1}{2}\left(s-\frac{1}{s}\right)^2\log \left(\frac{1+s}{1-s}\right)
\right)\frac{1}{s}
\mbox{d}  s 
\end{equation}
\begin{displaymath}
 = \frac{4}{\pi^2}\int\limits_0^\varepsilon
\left(
s+\frac{1}{s}-
\frac{1}{2}\left(s-\frac{1}{s}\right)^2\log \left(\frac{1+s}{1-s}\right)
\right)\frac{1}{s}
\mbox{d} s .
\end{displaymath}	
Let us note that 
$\tilde{\chi}_1 (\varepsilon )$ has a closed form,
\begin{equation} \label{poly}
\frac{2 \left(\varepsilon ^2 \left(4 \text{Li}_2(\varepsilon )-\text{Li}_2\left(\varepsilon
   ^2\right)\right)+\varepsilon ^4 \left(-\tanh ^{-1}(\varepsilon )\right)+\varepsilon ^3-\varepsilon
   +\tanh ^{-1}(\varepsilon )\right)}{\pi ^2 \varepsilon ^2},    
\end{equation}
where the polylogarithmic function is defined by the infinite sum
	\begin{equation*}
		\text{Li}_s (z) =
		\sum\limits_{k=1}^\infty 
		\frac{z^k}{k^s},
	\end{equation*}
for arbitrary complex $s$ and for all complex arguments $z$ with $|z|<1$.
Let us further observe that in the proof of (\ref{BasicFormula}), the authors were able to formulate the problem of finding $\tilde{\chi}_1(\varepsilon)$ 
rather concisely in terms of a ``defect function" \cite[App. A]{lovasandai},
\begin{equation} \label{Defect}
\Delta(\delta) =\frac{2 \pi^2}{3} - \tilde{\chi}_1 (e^{-\delta})=\frac{16}{3} \int_{0}^{\delta} \cosh(t) -\sinh(t)^2 t \log(\frac{e^t +1}{e^t-1}) \mbox{d}t.
\end{equation}

We will be able in sec.~\ref{Reproduction} to obtain the formula (\ref{poly})  for $\tilde{\chi}_1 (\varepsilon )$
by alternative (cylindrical algebraic decomposition \cite{strzebonski2016cylindrical}) means. Further, in sec.~\ref{Verification}, we will apply the same basic methodology to obtain (the much simpler) polynomial formula (\ref{VerifiedFormula}) for $\tilde{\chi}_2 (\varepsilon )$. Then, we will be able (sec.~\ref{GeneralConstruction}) to develop a general procedure for finding $\tilde{\chi}_d (\varepsilon )$ for integer $d>0$. The (rational) separability/PPT-probabilities predicted using these functions will--as extensive symbolic and numerical testing reveals--be identically the same as those yielded by the ``concise'' formula reported in \cite[eqs. (1)-(3)]{slaterJModPhys},
\begin{equation} \label{Hou1}
\mathcal{P}_{sep/PPT}(\alpha) =\Sigma_{i=0}^\infty f(\alpha+i),
\end{equation}
where
\begin{equation} \label{Hou2}
f(\alpha) = \mathcal{P}_{sep/PPT}(\alpha)-\mathcal{P}_{sep/PPT}(\alpha +1) = \frac{ q(\alpha) 2^{-4 \alpha -6} \Gamma{(3 \alpha +\frac{5}{2})} \Gamma{(5 \alpha +2})}{3 \Gamma{(\alpha +1)} \Gamma{(2 \alpha +3)} 
\Gamma{(5 \alpha +\frac{13}{2})}},
\end{equation}
and
\begin{equation} \label{Hou3}
q(\alpha) = 185000 \alpha ^5+779750 \alpha ^4+1289125 \alpha ^3+1042015 \alpha ^2+410694 \alpha +63000 = 
\end{equation}
\begin{displaymath}
\alpha  \bigg(5 \alpha  \Big(25 \alpha  \big(2 \alpha  (740 \alpha
   +3119)+10313\big)+208403\Big)+410694\bigg)+63000.
\end{displaymath}
(Here, $\alpha = \frac{d}{2}$. This set of relationships was developed with the [high-precision] use of a  probability distribution
reconstruction method \cite{Provost}, 
using formulas for the moments of the determinants of density matrices and of their partial transposes. This was followed by an application by  Qing-Hu Hou of
``Zeilberger's algorithm'' (``creative telescoping") \cite{paule1995mathematica} to the large hypergeometric-based expression so-obtained, displayed in Fig. 3 of  \cite{slaterJModPhys}. See also (\ref{InducedMeasureCase}), for a quite distinct, but--as will eventually be shown here--equivalent formula [Fig~\ref{fig:Order4Recurrence}].)

As part of their analysis, Lovas and Andai assert  \cite[p. 13]{lovasandai} that 
\begin{equation} \label{sepR}
\mathcal{P}_{sep}(\mathbb{R}) =    
\frac{\int\limits_{-1}^1\int\limits_{-1}^x  \tilde{\chi}_1 \left(
\left.\sqrt{\frac{1-x}{1+x}}\right/ \sqrt{\frac{1-y}{1+y}}	
	\right)(1-x^2)(1-y^2) (x-y) \mbox{d} y\mbox{d} x}{\int\limits_{-1}^1\int\limits_{-1}^x  (1-x^2)(1-y^2)(x-y)  \mbox{d} y \mbox{d} x},
\end{equation}
with the denominator evaluating to $\frac{16}{35}$.
Here, $\mathcal{P}_{sep}(\mathbb{R})$ is the Hilbert-Schmidt separability probability for the 
nine-dimensional convex set of two-rebit states \cite{carl}. With the indicated use of $\tilde{\chi}_1 (\varepsilon )$ this probability evaluates to $\frac{29}{64}$ 
(the numerator of (\ref{sepR}) equalling $\frac{16}{35} -\frac{1}{4}=\frac{29}{140}$, with $\frac{29}{64} =\frac{\frac{29}{140}}{\frac{16}{35}}$),
a result that had been strongly anticipated by prior analyses \cite{FeiJoynt2,slaterJModPhys,MomentBased}.

If the (Dyson-index) 
proportionality relationship
\begin{equation}
\tilde{\chi}_2 (\varepsilon ) \propto \tilde{\chi}_1^2 (\varepsilon )
\end{equation}
held, we would have 
\begin{equation} \label{sepC}
\mathcal{P}_{sep}(\mathbb{C}) \propto   
\frac{\int\limits_{-1}^1\int\limits_{-1}^x  \tilde{\chi}_1^2 \left(
\left.\sqrt{\frac{1-x}{1+x}}\right/ \sqrt{\frac{1-y}{1+y}}	
	\right)(1-x^2)^2(1-y^2)^2 (x-y)^2 \mbox{d} y\mbox{d} x}{\int\limits_{-1}^1\int\limits_{-1}^x  (1-x^2)^2(1-y^2)^2(x-y)^2  \mbox{d} y \mbox{d} x}.
\end{equation}
Here,  $\mathcal{P}_{sep}(\mathbb{C})$ is--in the Lovas-Andai framework--the Hilbert-Schmidt separability probability for the 
fifteen-dimensional convex set of the (standard/complex) two-qubit states \cite{Gamel}.
They expressed hope that  they too would be able to demonstrate that $\mathcal{P}_{sep}(\mathbb{C}) =\frac{8}{33}$, as also has been strongly
indicated is, in fact, the case \cite{FeiJoynt2,slaterJModPhys,MomentBased,shang2015monte,zhou2012topology}.
("It is interesting whether this observation has some deep background or is an accidental fact only" \cite{khvedelidze2015geometric}.)
We generalized (from $\alpha=\frac{1}{2}$) the denominator of the ratio (\ref{sepC})   to 
\begin{equation} \label{General}
\int\limits_{-1}^1\int\limits_{-1}^x  (1-x^2)^{2 \alpha} (1-y^2)^{2 \alpha} (x-y)^{2 \alpha}  \mbox{d} y \mbox{d} x=
\frac{\pi  2^{6 \alpha +1} 3^{-3 \alpha } \alpha  \Gamma (3 \alpha ) \Gamma (2 \alpha
   +1)^2}{\Gamma \left(\alpha +\frac{5}{6}\right) \Gamma \left(\alpha +\frac{7}{6}\right)
   \Gamma (5 \alpha +2)}.
\end{equation}

Our Dyson-index-based ansatz, then, is that 
\begin{equation} \label{sepX}  
\frac{\int\limits_{-1}^1\int\limits_{-1}^x  \tilde{\chi}_{2 \alpha} \left(
\left.\sqrt{\frac{1-x}{1+x}}\right/ \sqrt{\frac{1-y}{1+y}}	
	\right)(1-x^2)^{2 \alpha}(1-y^2)^{2 \alpha} (x-y)^{2 \alpha} \mbox{d} y\mbox{d} x}{\int\limits_{-1}^1\int\limits_{-1}^x  (1-x^2)^{2 \alpha}(1-y^2)^{2 \alpha}(x-y)^{2 \alpha}  \mbox{d} y \mbox{d} x}
\end{equation}
gives the generalized ($\alpha$-th) Hilbert-Schmidt separability probability. For $\alpha = \frac{1}{2}$, we recover the two-rebit formula 
(\ref{sepR}), while for  $\alpha=1$, under the ansatz, we would conjecturally obtain the two-qubit value of $\frac{8}{33}$, while for $\alpha=2$, the two-quater[nionic]bit value of $\frac{26}{323}$ would be gotten, and similarly, for $\alpha=4$, the (presumably) two-octo[nionic]bit value of $\frac{44482}{4091349}$ \cite{slateroctonionic}. (The volume forms listed in \cite[Table 1]{lovasandai} for the sets of self-adjoint matrices $\mathcal{M}^{sa}_{2,\mathbb{R}}$, $\mathcal{M}^{sa}_{2,\mathbb{C}}$,
are $\frac{|x-y|}{\sqrt{2}}$ in the $\alpha=\frac{1}{2}$ case, and $\frac{(x-y)^2 \sin{\phi}}{2}$ in the $\alpha =1$ case, respectively. Our calculations of the term 
$\mbox{det}(1-Y^2)^d$ appearing in the several Lovas-Andai volume formulas \cite[pp. 10, 12]{lovasandai}, such as this one for the volume of separable states,
\begin{align}\label{eq:cndVol}
\begin{split}
\text{Vol}\left(\mathcal{D}_{\{4,\mathbb{K}\}}^s(D)\right)	&= 
\frac{\det (D)^{4d-\frac{d^2}{2}}}{2^{6d}} \\
&\times
\int\limits_{\mathcal{E}_{2,\mathbb{K}}}
\det (I-Y^2)^d\times
\chi_d\circ\sigma\left(\sqrt{\frac{I-Y}{I+Y}}\right)
\mbox{d} \lambda_{d+2}(Y),
\end{split}
\end{align}
[the function $\sigma(V)=\varepsilon$ being the ratio of the two singular values of the $2 \times 2$ matrix $V$] appear to be  consistent with the use of the $(1-x^2)^{2 \alpha} (1-y^2)^{2 \alpha}$ terms in the  ansatz (\ref{sepX}).)

The values $\alpha=\frac{1}{2}, 1, 2, 4$ themselves correspond to the real, complex, quaternionic and octonionic division algebras. We can, further, look at the other nonnegative (non-division algebra) integral values of $\alpha$. So, for $\alpha=3$, we have the formal prediction \cite{MomentBased,LatestCollaboration} of $\frac{2999}{103385}$.

In this context, let us first note that for the denominator of  (\ref{sepC}), corresponding to $\alpha=1$, we obtain $\frac{256}{1575}$ (a result we later importantly employ (\ref{Denominator})). Using high-precision numerical integration (\url{https://mathematica.stackexchange.com/q/133556/29989})
for the 
corresponding numerator of (\ref{sepC}), we obtained 0.0358226206958479506059638010848. The resultant ratio (dividing by $\frac{256}{1575}$) is 0.220393076546720789860910104330, within $90\%$ of 
0.242424. However, somewhat disappointingly, it was not readily apparent as to what exact values these figures might correspond.

The analogous numerator-denominator ratio in the $\alpha = 2$ (two-quaterbit) instance was 0.0534499, while the predicted separablity probability is $\frac{26}{323} \approx 0.0804954$. It can then be seen that the required constant of proportionality ($\frac{0.0534499}{0.0804954} =0.664013$) in the $\alpha =2$ case is not particularly close to the square of that in the $\alpha =1$ instance ($0.909106^2 =0.826473$). Similarly, in the $\alpha =4$ case, the numerator-denominator ratio is 0.00319505, while the predicted value would be $\frac{44482}{4091349} =0.0108722$ (with the ratio of these two values being 0.293873). So, our
ansatz (\ref{sepX}) would not seem to extend to the sequence of constants of proportionality themselves conforming to the Dyson-index pattern. But the analyses so far could only address this specific issue concerning constants of proportionality.
\section{Expanded analyses}
We, then, broadened the scope of the inquiry with the use of this particular formula of Lovas and Andai for the Hilbert-Schmidt volume of separable states \cite[p. 11]{lovasandai},
\begin{equation*}
\text{Vol}(\mathcal{D}^s_{\{4,\mathbb{K}\}})=	\int\limits_{\begin{array}{c}
		D_1,D_2>0 \\
		\mbox{Tr} (D_1+D_2) = 1
		\end{array}
	}  
	\det (D_1 D_2)^d f(D_2 D_1^{-1})
	\mbox{d} \lambda_{2d+3}(D_1,D_2), 
	\end{equation*}
where 
\begin{equation} \label{relationship}
f(D_2 D_1^{-1}) = \chi_d\circ
	\exp \left({-\cosh^{-1}\left(
		\frac{1}{2}
		\sqrt{\frac{\det (D_1)}{\det (D_2)}}
		\mbox{Tr} \left(D_2 D_1^{-1}\right)
		\right)}\right).
\end{equation}
Here $D_1$ denotes the upper diagonal $2 \times 2$ block, and $D_2$, the lower diagonal $2 \times 2$  block of the $4 \times 4$ density 
matrix \cite[p. 3]{lovasandai},
\begin{equation*}
	D = \left(
	\begin{array}{cc}
	D_1    & C \\
	C^\ast & D_2
	\end{array}
	\right).
\end{equation*}
The Lovas-Andai parameter $d$ is defined as 1 in the two-rebit case and 2 in the standard two-qubit case (that is, in our notation,
$\alpha=\frac{d}{2}$). Further, the relevant division algebra 
$\mathbb{K}$ is $\mathbb{R}$, $\mathbb{C}$ or $\mathbb{Q}$,
according to $d=1,2,4$. The exponential term in (\ref{relationship}) corresponds to the ``singular value ratio'', 
\begin{equation}
\sigma(V) =  \exp \left({-\cosh^{-1}\left(
		\frac{||V||^2_{HS}}{2 \det (V)}
		\right)}\right)= \exp \left({-\cosh^{-1}\left(
		\frac{1}{2}
		\sqrt{\frac{\det (D_1)}{\det (D_2)}}
		\mbox{Tr} \left(D_2 D_1^{-1}\right)
		\right)}\right),  
\end{equation}
of the  matrix $V=D_2^{1/2} D_1^{-1/2}$, where the Hilbert-Schmidt norm is indicated. (In \cite[sec. IV]{yin2015empirical} the ratio of singular values of $2 \times 2$ ``empirical polarization matrices'' is investigated.) 
\subsection{Generation of random density matrices}
\subsubsection{Two-rebit case}
Firstly, taking $d=1$, we generated 687 million random (with respect to Hilbert-Schmidt measure) $4 \times 4$ density matrices situated in the 9-dimensional convex set of two-rebit states \cite[App. B]{generating} \cite{carl,batle2}. Of these, 311,313,185 were separable (giving a sample probability of 0.453149, close to the value of $\frac{29}{64} \approx 0.453125$, now formally established by Lovas and Andai). Additionally, we  binned the two sets (separable and all) of density matrices into 200 subintervals of $[0,1]$, based on their corresponding values 
of $\sigma(V)$ (Fig.~\ref{fig:rebitpair}). Fig.~\ref{fig:rebitsep} is a plot of the estimated separability probabilities (remarkably close to linear with slope 1--as previously observed \cite[Fig. 1]{lovasandai}), while Fig.~\ref{fig:rebitdiff} shows the result of subtracting from this curve  the very well-fitting (as we, of course, expected from the Lovas-Andai proof) function
$\tilde{\chi}_1 (\varepsilon )$, as given by ((\ref{BasicFormula}),(\ref{poly})). (If one replaces $\tilde{\chi}_1 (\varepsilon)$ by simply its close
approximant $\varepsilon$, then the corresponding integrations would yield a  ``separability probability", not of $\frac{29}{64} \approx 0.453125$, but of 
$\frac{16}{9}-\frac{35 \pi ^2}{256} \approx 0.428418$. If we similarly employ $\varepsilon^2$ in the two-qubit case, rather than the
[previously undetermined] $\tilde{\chi}_2 (\varepsilon)$, the corresponding integrations yield $\frac{13}{66} \approx 0.19697$, and not the presumed correct
result of $\frac{8}{33} \approx 0.242424$.) Fig.~\ref{fig:qubitdiffZ} will serve as the two-qubit analogue of Fig.~\ref{fig:rebitdiff}, further validating the  formula (\ref{VerifiedFormula}) for  $\tilde{\chi}_2 (\varepsilon)$ to be obtained.

\begin{figure}
    \centering
    \includegraphics{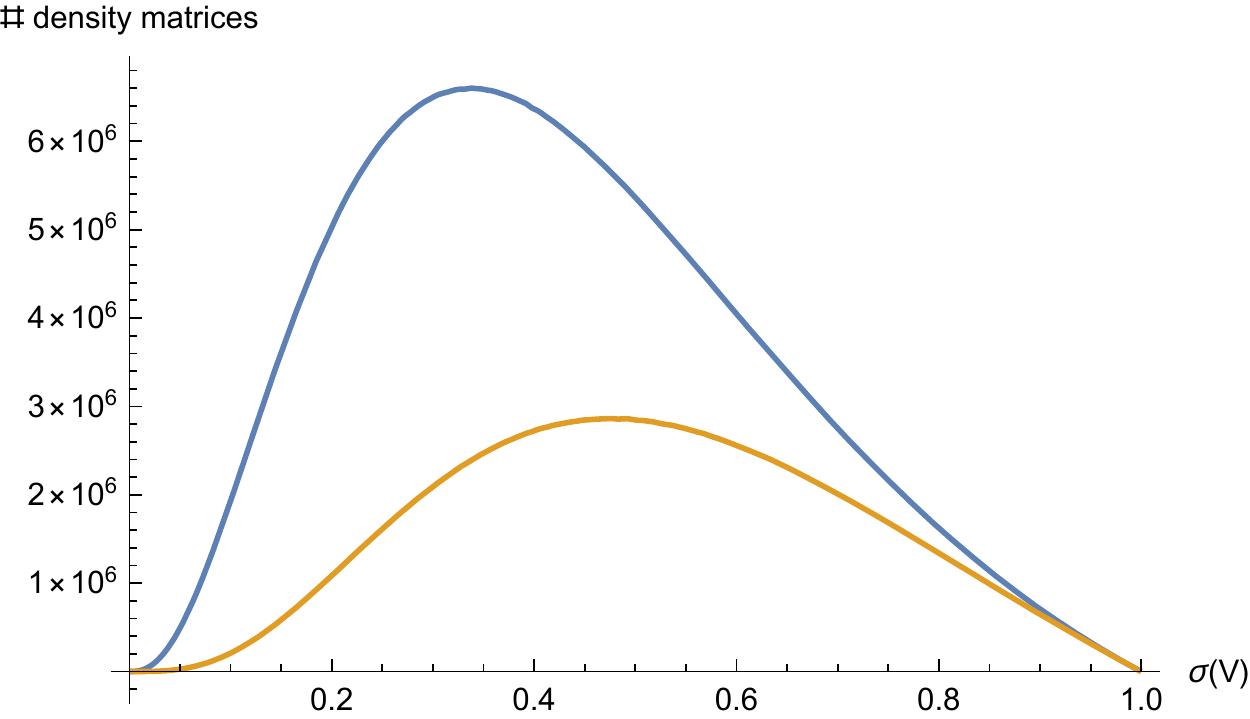}
    \caption{Recorded counts by binned values of the singular value ratio $\sigma(V)$  of 687 million two-rebit density matrices randomly generated (with respect to Hilbert-Schmidt measure), along with the accompanying (lesser) counts of separable density matrices}
    \label{fig:rebitpair}
\end{figure}
\begin{figure}
    \centering
    \includegraphics{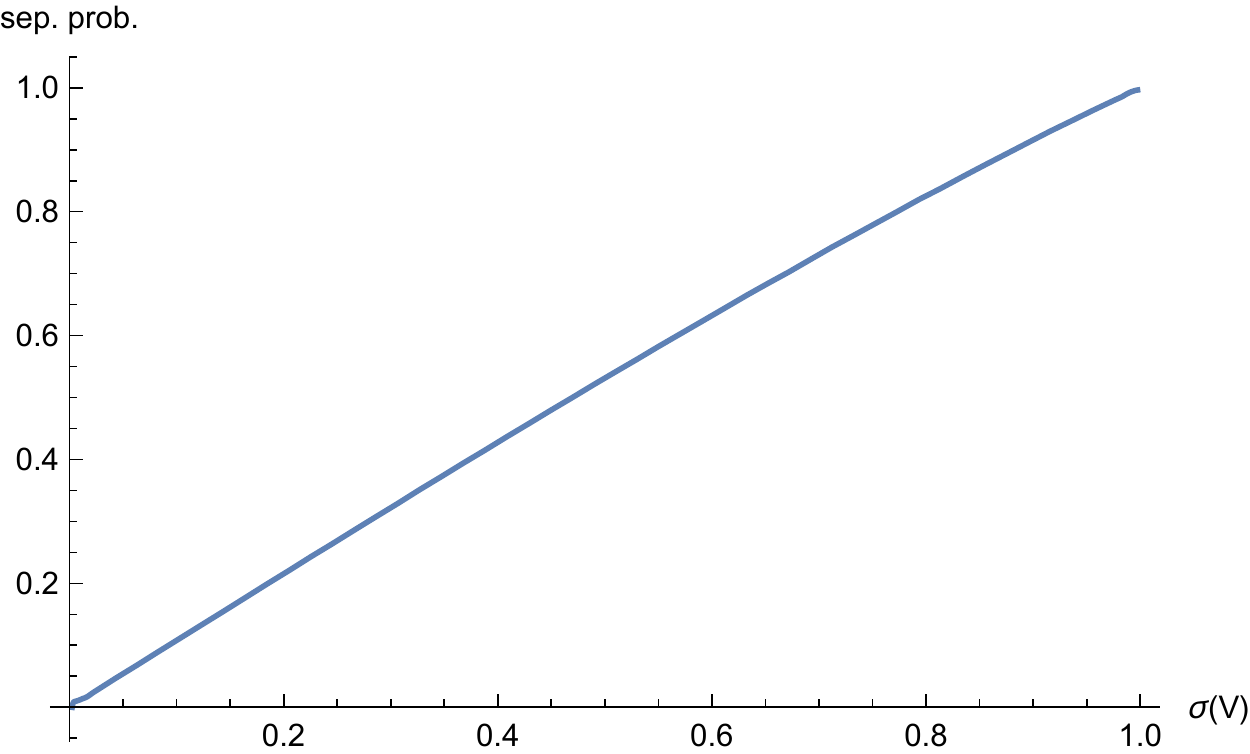}
    \caption{Estimated two-rebit separability probabilities (close to linear with slope 1)}
    \label{fig:rebitsep}
\end{figure}
\begin{figure}
    \centering
    \includegraphics{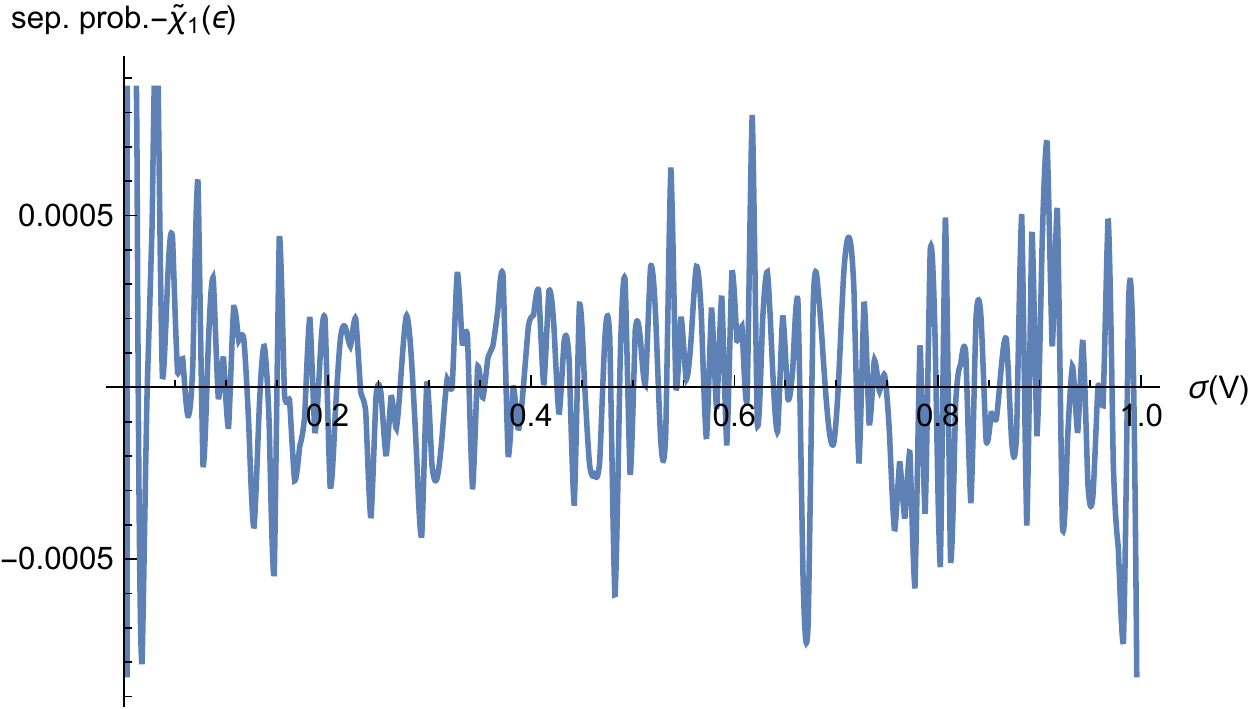}
    \caption{Result of subtracting $\tilde{\chi}_1 (\varepsilon )$ from the estimated two-rebit separability probability curve (Fig.~\ref{fig:rebitsep}). Fig.~\ref{fig:qubitdiffZ} will be the two-qubit analogue.}
    \label{fig:rebitdiff}
\end{figure}
\subsubsection{Two-qubit case}
We, next, to test a Dyson-index ansatz, taking $d=2$, generated 6,680 million random (with respect to Hilbert-Schmidt measure) $4 \times 4$ density matrices situated in the 15-dimensional convex set of 
(standard) two-qubit states \cite[eq. (15)]{generating}. Of these, 1,619,325,156 were separable (giving a sample probability of 0.242414, close to the conjectured, well-supported [but not yet formally proven] value of $\frac{8}{33} \approx 0.242424$) (cf. \cite{singh2014relative}). We, again,  binned the two sets (separable and all) of density matrices into 200 subintervals of $[0,1]$, based on their corresponding values 
of $\sigma(V)$ (Fig.~\ref{fig:qubitpair}). Fig.~\ref{fig:qubitsep} is a plot (now, clearly non-linear [cf. Fig.~\ref{fig:rebitsep}]) of the estimated separability probabilities, along with the quite 
closely fitting, but mainly slightly subordinate $\tilde{\chi}_1^2 (\varepsilon )$ curve. Fig.~\ref{fig:qubitdiff} shows the result/residuals (of relatively small magnitude) of subtracting
$\tilde{\chi}_1^2 (\varepsilon )$ from the estimated separability probability curve. 
\begin{figure}
    \centering
    \includegraphics{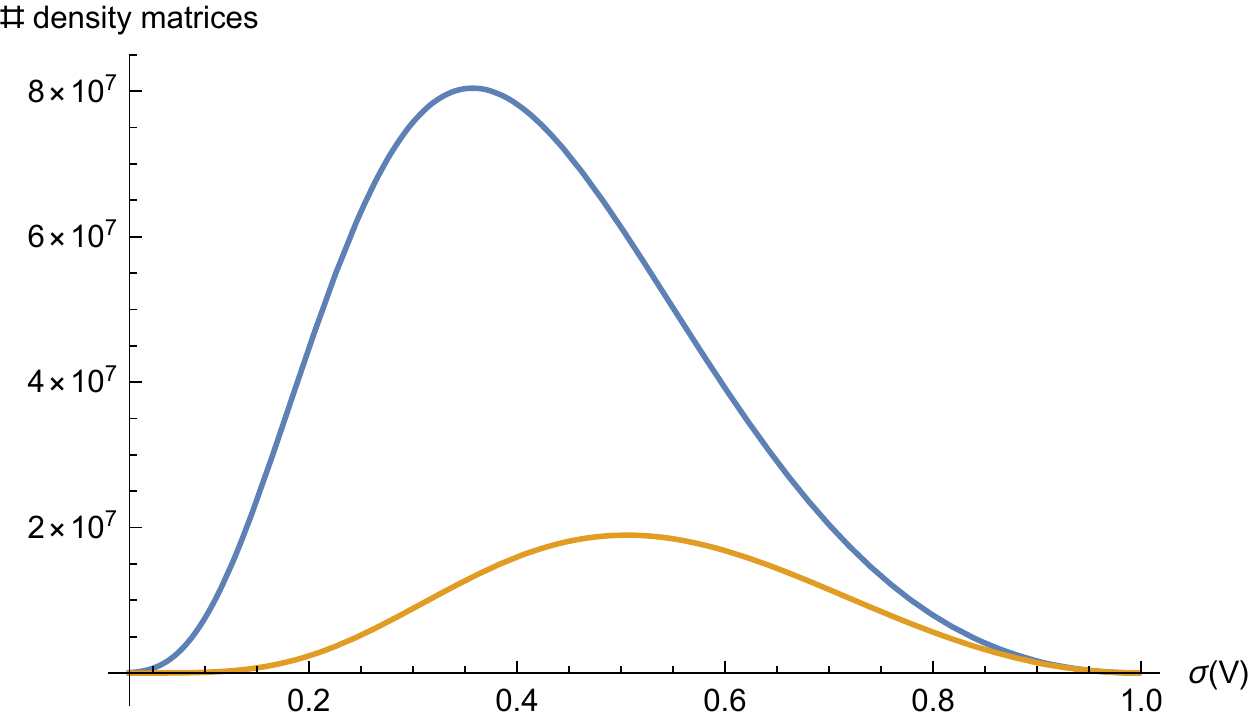}
    \caption{Recorded counts by binned values of the singular value ratio $\sigma(V)$  of all 6,680 million two-qubit density matrices randomly generated (with respect to Hilbert-Schmidt measure), along with the accompanying (lesser) counts of separable density matrices}
    \label{fig:qubitpair}
\end{figure}
\begin{figure}
    \centering
    \includegraphics{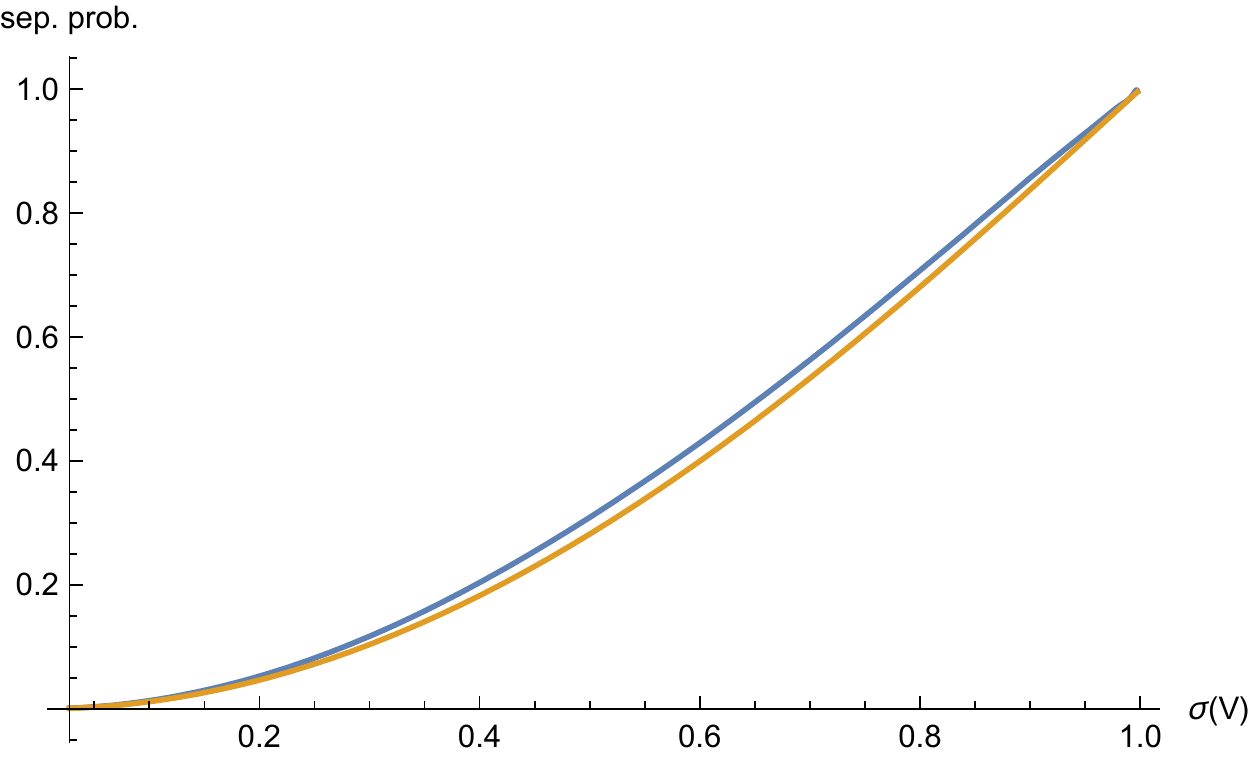}
    \caption{Estimated two-qubit separability probabilities together with  the slightly subordinate curve $\tilde{\chi}_1^2 (\varepsilon )$}
    \label{fig:qubitsep}
\end{figure}
\begin{figure}
    \centering
    \includegraphics{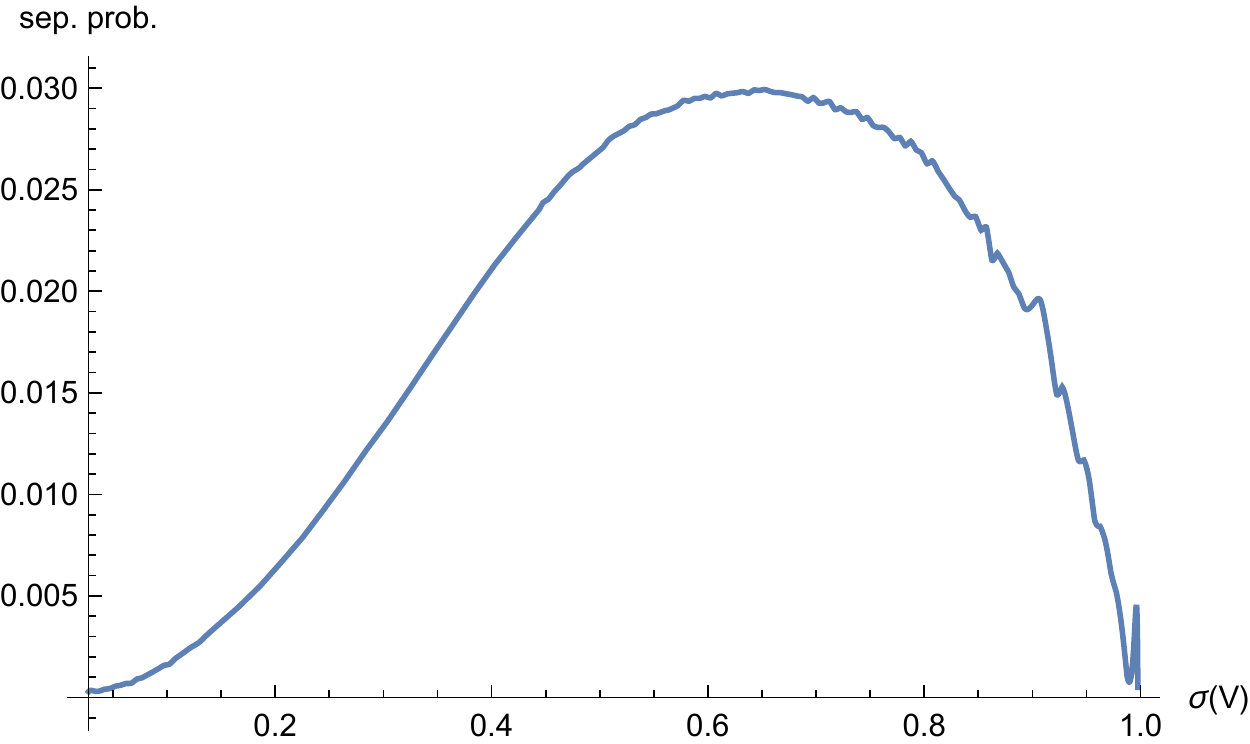}
    \caption{Result of subtracting the (slightly subordinate) $\tilde{\chi}_1^2 (\varepsilon )$ curve from the estimated two-qubit separability probability curve in Fig.~\ref{fig:qubitsep}}
    \label{fig:qubitdiff}
\end{figure}
So, it would seem that the square of the  explicitly-constructed Lovas-Andai two-rebit separability function $\tilde{\chi}_1 (\varepsilon )$ provides, at least, an interesting approximation to the sought  two-qubit separability function $\tilde{\chi}_2 (\varepsilon )$.

The Dyson-index ansatz--the focus earlier in the paper--appears to hold in some trivial/degenerate sense if we employ rather than the Lovas-Andai or Slater separability functions discussed above, the ``Milz-Strunz'' ones
\cite{milzstrunz}. Then, rather than the singular-value ratio $\varepsilon$ or the ratio of diagonal entries $\mu$, one would use as the dependent/predictor variable, the Casimir invariants of the reduced systems \cite{slater2016invariance}. In these cases, the separability functions become simply constant in nature. In the two-rebit and two-qubit cases, this invariant is the Bloch radius ($r$) of one of the two reduced systems. From the arguments of Lovas and Andai 
\cite[Cor. 2, Thm. 2]{lovasandai}, it appears that one can  assert that the Milz-Strunz form of two-rebit separability function assumes the constant value $\frac{29}{64}$ for $r \in [0,1]$. Then, it would seem that the two-qubit counterpart would be the constant value $\frac{8}{33}$ for $r \in [0,1]$, with the corresponding (Dyson-index ansatz) constant of proportionality being $\frac{\frac{8}{33}} {(\frac{29}{64})^2} =\frac{32768}{27753} \approx 1.1807$.
\section{Relations between $\epsilon =\sigma(V)$ and Bloore/Slater variable $\mu$}
Let us now note a quite interesting phenomenon, apparently relating the Lovas-Andai analyses to  previous ones of Slater \cite{slater833}. If we perform the indicated integration in the denominator of (\ref{sepR}), following the integration-by-parts scheme adopted by  Lovas and Andai 
\cite[p. 12]{lovasandai}, at an intermediate stage we arrive at the univariate integrand,
\begin{equation} \label{jacLA}
\frac{128 t^3 \left(5 \left(5 t^8+32 t^6-32 t^2-5\right)-12 \left(\left(t^2+2\right)
   \left(t^4+14 t^2+8\right) t^2+1\right) \log (t)\right)}{3 \left(t^2-1\right)^8}.
\end{equation}
(Its integral over $t \in [0,1]$ equals the noted value of $\frac{16}{35}$, where $\frac{\frac{16}{35}-\frac{1}{4}}{\frac{16}{35}}=\frac{29}{64}$.)
This, interestingly,  bears a very close (almost identical) structural resemblance to the  jacobian/volume-element
\begin{equation} \label{jacS}
\mathcal{H}_{real}(\mu)=-\frac{\mu ^4 \left(5 \left(5 \mu ^8+32 \mu ^6-32 \mu ^2-5\right)-12 \left(\left(\mu
   ^2+2\right) \left(\mu ^4+14 \mu ^2+8\right) \mu ^2+1\right) \log (\mu )\right)}{1890
   \left(\mu ^2-1\right)^9}    
\end{equation}
(integrating to $\frac{\pi ^2}{2293760}$ over $\mu \in [0,1]$) reported by Slater in \cite[eq. (15)]{slater833} and \cite[eq. (10)]{slaterPRA2},
also in the context of two-rebit separability functions. (We change the notation in those references
from $\mathcal{J}_{real}(\nu)$ to $\mathcal{H}_{real}(\mu)$ here, since we have made the 
transformation $\nu \rightarrow \mu^2$, to facilitate this comparison, and the analogous one below in the two-qubit 
context--with the approach of Lovas and Andai. However, we will still
note some results below in the original [$\nu$] framework.)
To faciltate the comparison between these two functions, we set $t = \mu= \tilde{t}$, and then divide (\ref{jacLA}) by (\ref{jacS}), obtaining the simple ratio
\begin{equation} \label{firstratio}
\frac{80640 \left(1-\tilde{t} ^2\right)}{\tilde{t}}.    
\end{equation}
But we note that in in \cite{slater833} and \cite{slaterPRA2}--motivated by work in a $3 \times  3$ density matrix context  of 
Bloore \cite{bloore1976geometrical}--the variable 
$\mu$ was taken to be the ratio $\sqrt{\frac{\rho_ {11} \rho_ {44}}{\rho_ {22} \rho_ {33}}}$ of the square root of the product of the (1,1) and (4,4) diagonal entries of the density matrix \cite[eq. (1)]{slater833}
\begin{equation} \label{Matrix}
D=
\left(
\begin{array}{cccc}
 \rho_ {11} & z_{12} \sqrt{\rho_ {11} \rho_ {22}} & z_{13} \sqrt{\rho_ {11} \rho_ {33}} &
   z_{14} \sqrt{\rho_ {11} \rho_ {44}} \\
 z_{12} \sqrt{\rho_ {11} \rho_ {22}} & \rho_ {22} & z_{23} \sqrt{\rho_ {22} \rho_ {33}} &
   z_{24} \sqrt{\rho_ {22} \rho_ {44}} \\
 z_{13} \sqrt{\rho_ {11} \rho_ {33}} & z_{23} \sqrt{\rho_ {22} \rho_ {33}} & \rho_ {33} &
   z_{34} \sqrt{\rho_ {33} \rho_ {44}} \\
 z_{14} \sqrt{\rho_ {11} \rho_ {44}} & z_{24} \sqrt{\rho_ {22} \rho_ {44}} & z_{34}
   \sqrt{\rho_ {33} \rho_ {44}} & \rho_ {44} \\
\end{array}
\right)    
\end{equation}
to the product of the (2,2) and (3,3) ones, while in \cite{lovasandai}, it would be  the ratio $\sigma(V)$ of the 
singular values of the noted $2 \times 2$ matrix $D_2^{1/2}  D_1^{-1/2}$. From \cite[eq. (91)]{slater833}, we can deduce that one must multiply
$\mathcal{H}_{real}(\mu)$ by $\frac{1048576}{\pi ^2}$, so that its integral from 0 to 1 equals the Lovas-Andai counterpart result 
of $\frac{16}{35}$. (The jacobian of the transformation to the two-rebit density matrix parameterization
(\ref{Matrix}) is $(\rho_{11} \rho_{22} \rho_{33} \rho_{44})^{3/2}$, and for the two-qubit counterpart, 
$(\rho_{11} \rho_{22} \rho_{33} \rho_{44})^3$ \cite[p. 4]{slaterPRA2}. These jacobians are also reported in \cite{andai2006volume}.)

In \cite[eq. (93)]{slater833}, the two-rebit separability function $S_{real}(\nu)$ was taken to be proportional to the incomplete beta 
function $B_{\nu}(\nu,\frac{1}{2},2)=\frac{2}{3}(3 -\nu) \sqrt{\nu}$--an apparently much simpler function
than the Lovas-Andai counterpart (\ref{poly}) above. Given the just indicated scaling  by $\frac{1048576}{\pi ^2}$, to achieve the $\frac{29}{140}$ separability probability numerator result of Lovas and Andai, we must take the hypothesized separability
function to be, then,  $\frac{3915 \pi ^2 (3-\nu ) \sqrt{\nu }}{131072}$.

A parallel phenomenon is observed in the two-qubit case, where \cite[eq. (11)]{slaterPRA2},
\begin{equation} \label{Qcomplex}
\mathcal{H}_{complex}(\mu)=-\frac{\mu ^7 (h_1+h_2)}{1801800 (\mu^2 -1)^{15}},
\end{equation}
with 
\begin{displaymath}
h_1=(\mu -1) (\mu +1) \left(363 \mu ^{12}+10310 \mu ^{10}+58673 \mu ^8+101548 \mu ^6+58673
   \mu ^4+10310 \mu ^2+363\right)
\end{displaymath}
and
\begin{displaymath}
h_2=-140 \left(\mu ^2+1\right) \left(\mu ^{12}+48 \mu ^{10}+393 \mu ^8+832 \mu ^6+393 \mu
   ^4+48 \mu ^2+1\right) \log (\mu ).
\end{displaymath}
Setting $\alpha=1$ in the denominator formula (\ref{General}), and again following the integration-by-parts
scheme of Lovas and Andai, while setting  $t = \mu= \tilde{t}$,
the simple ratio (proportional to the square of (\ref{firstratio})) is now
\begin{equation} \label{final}
\frac{210862080 \left(1- \tilde{t} ^2\right)^2}{\tilde{t} ^2}.
\end{equation}
To achieve the $\frac{256}{1575}$ Lovas-Andai two-qubit denominator result, we must multiply 
$\mathcal{H}_{complex}(\nu)$ (\ref{Qcomplex}) by 328007680.

The two-qubit separability function $S_{complex}(\nu)$ advanced in \cite{slater833} was proportional to the square
of that--$B_{\nu}(\nu,\frac{1}{2},2)=\frac{2}{3}(3 -\nu) \sqrt{\nu}$--employed in the two-rebit context.
Now, to obtain the two-qubit numerator result of $\frac{2048}{51975}$ necessary for the
$\frac{8}{33}$ separability probability outcome, we took the associated separability function to simply be
$\frac{6}{71} (3-\nu )^2 \nu$. We refer the reader to Figure 2 in \cite{slater833} (and Figs.~\ref{fig:QubitNUDiff} and \ref{fig:QubitMUDiff} below) to see the extraordinarily
good fit of this function. (However, the two-rebit fit displayed there does not appear quite as good.) 

Let us now supplement the earlier plots in \cite{slater833}, with some newly generated ones. 
(Those 2007 plots were based on quasi-Monte Carlo [``low-discrepancy'' point \cite{bratley1992implementation}] sampling, while the ones presented here
are based on more ``state-of-the-art'' sampling methods \cite{generating}, with many more density matrices [but, of ``higher-discrepancy"] generated.) In Figs.~\ref{fig:RebitNU} and \ref{fig:RebitMU} we show
the two-rebit separability probabilities as a function, firstly, of $\nu=\frac{\rho_ {11} \rho_ {44}}{\rho_ {22} \rho_ {33}}$ and, secondly, as a function of $\mu=\sqrt{\nu}= \sqrt{\frac{\rho_ {11} \rho_ {44}}{\rho_ {22} \rho_ {33}}}$,
together with the curves $\frac{3915 \pi ^2 (3-\nu ) \sqrt{\nu }}{131072}$ and $\frac{3915 \pi ^2 (3-\mu^2 ) \mu}{131072}$, respectively.
\begin{figure}
\includegraphics{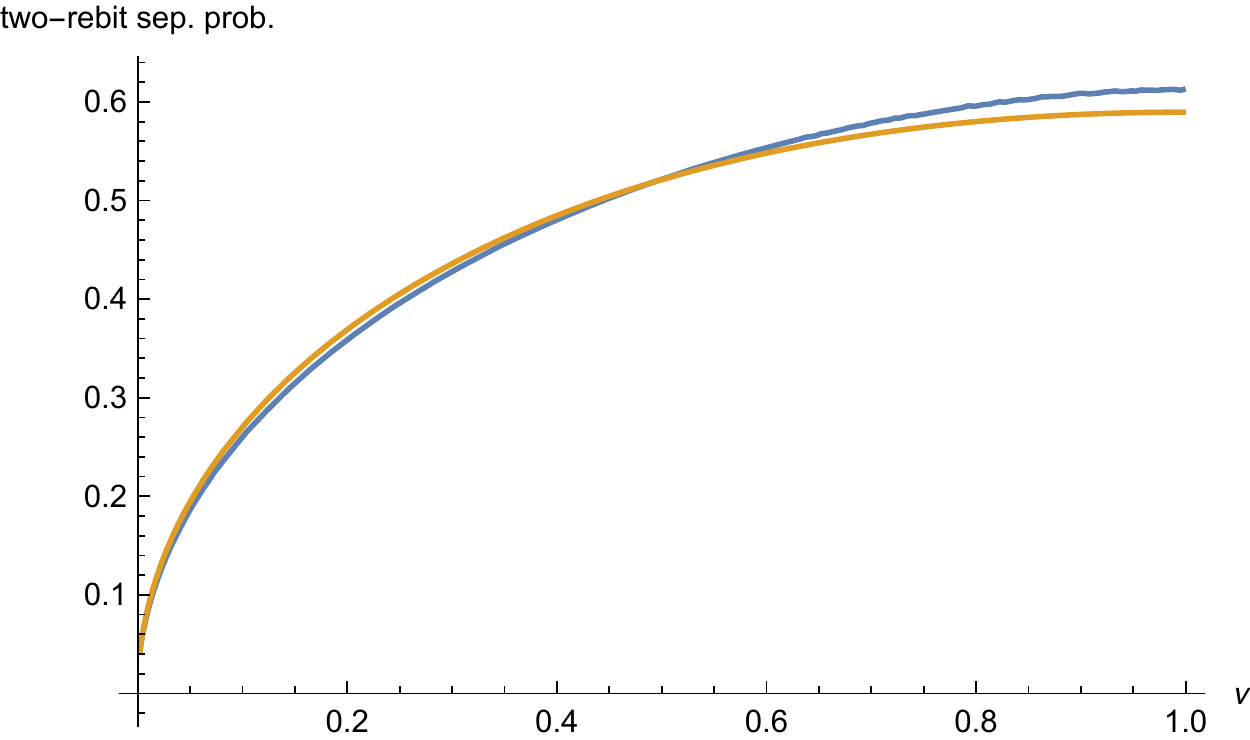}
\caption{Estimated two-rebit Hilbert-Schmidt separability probabilities, based on 687 million randomly-generated density matrices, together with
the hypothesized  (slightly subordinate) separability function $\frac{3915 \pi ^2 (3-\nu ) \sqrt{\nu }}{131072}$}
\label{fig:RebitNU}
\end{figure}
\begin{figure}
\includegraphics{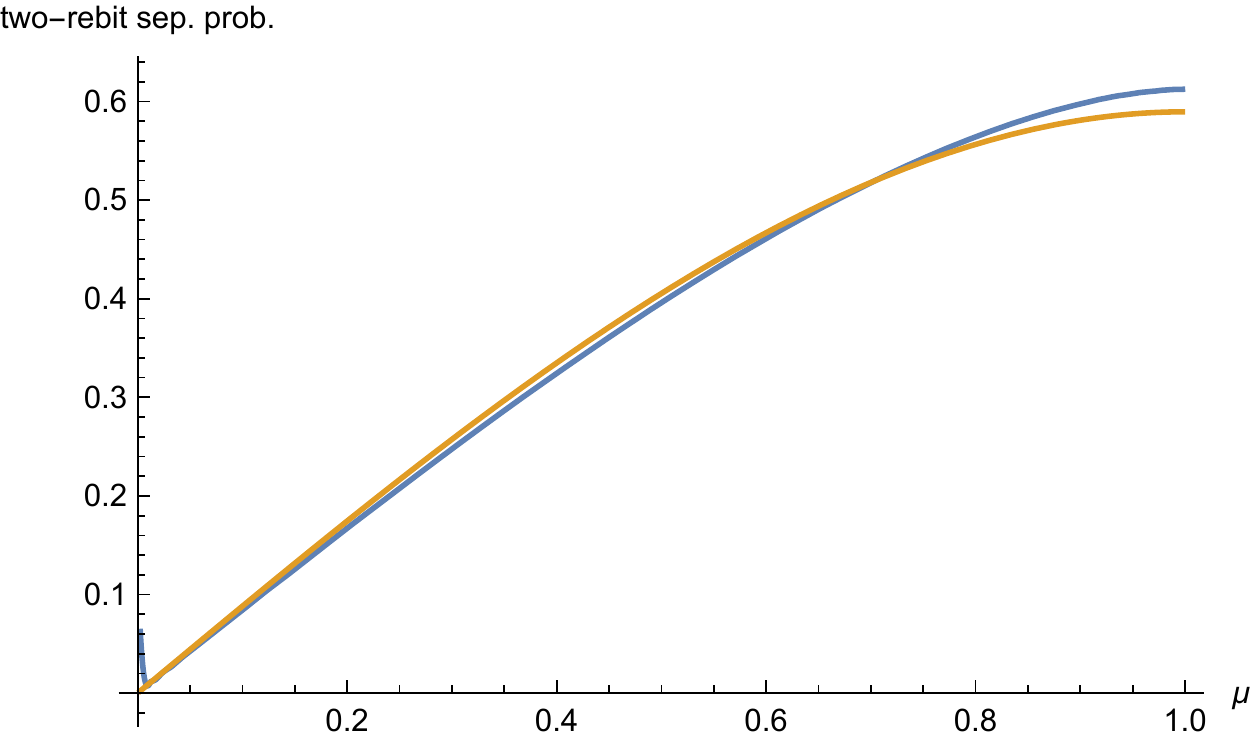}
\caption{Estimated two-rebit Hilbert-Schmidt separability probabilities, based on 5,077  million randomly-generated density matrices, together with
the hypothesized  (slightly subordinate) separability function $\frac{3915 \pi ^2 (3-\mu^2 ) \mu}{131072}$}
\label{fig:RebitMU}
\end{figure}
In Figs.~\ref{fig:QubitNU} and \ref{fig:QubitMU} we show
the two-qubit separability probabilities as a function, firstly, of $\nu$ and, secondly, as a function of $\mu$,
together with the curves $\frac{6}{71} (3-\nu)^2 \nu$ and $\frac{6}{71} (3-\mu^2)^2 \mu^2$, respectively.
\begin{figure}
\includegraphics{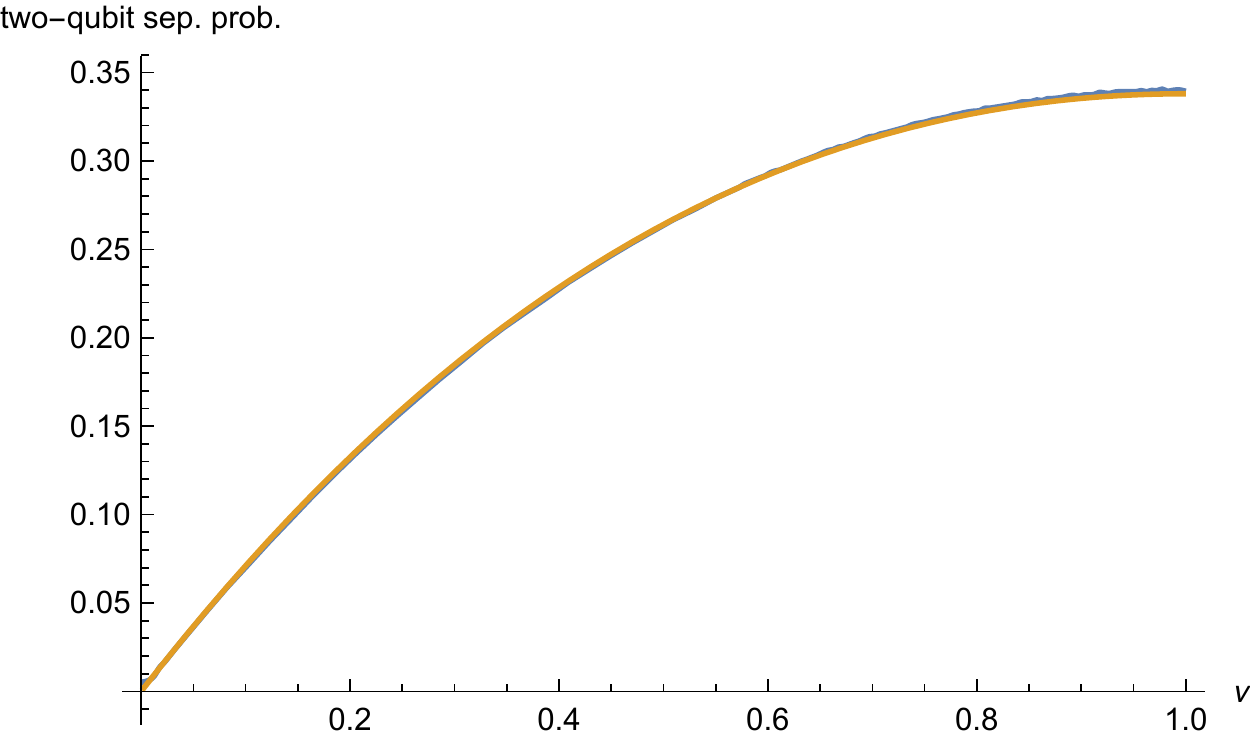}
\caption{Estimated two-qubit Hilbert-Schmidt separability probabilities, based on 507 million randomly-generated density matrices, together with
the (indiscernibly different) separability function $\frac{6}{71} (3-\nu)^2 \nu$ (cf. Fig.~\ref{fig:QubitNUDiff} and \cite[Fig. 2]{slater833} for the residuals from the fit)}
\label{fig:QubitNU}
\end{figure}
\begin{figure}
\includegraphics{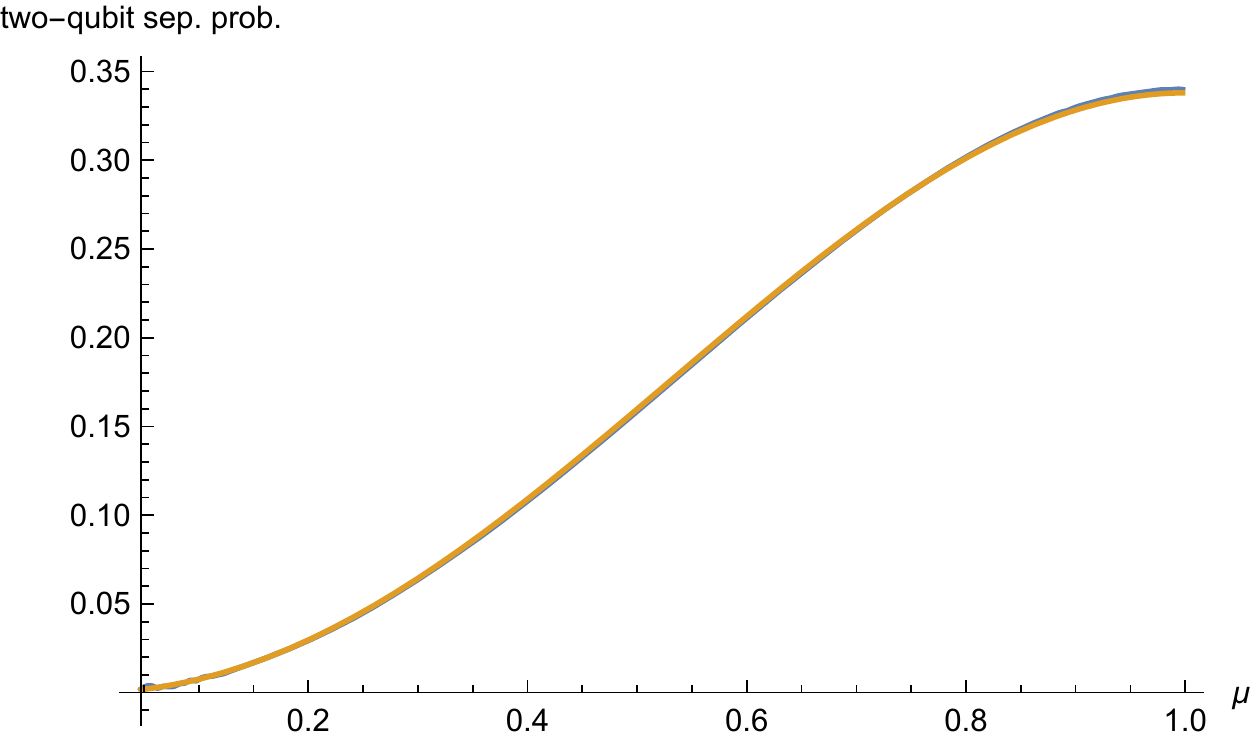}
\caption{Estimated two-qubit Hilbert-Schmidt separability probabilities, based on 3,715 million randomly-generated density matrices, together with
the (indiscernibly different) separability function $\frac{6}{71} (3-\mu^2)^2 \mu^2$}
\label{fig:QubitMU}
\end{figure}

In Figs.~\ref{fig:RebitNUDiff}, \ref{fig:RebitMUDiff}, \ref{fig:QubitNUDiff} and \ref{fig:QubitMUDiff}, rather than showing
the estimated separability probabilities together with the separability functions as in the previous four figures, we show
the estimated separability probabilities {\it minus} the separability functions, that is, the {\it residuals} from this fits.
\begin{figure}
\includegraphics{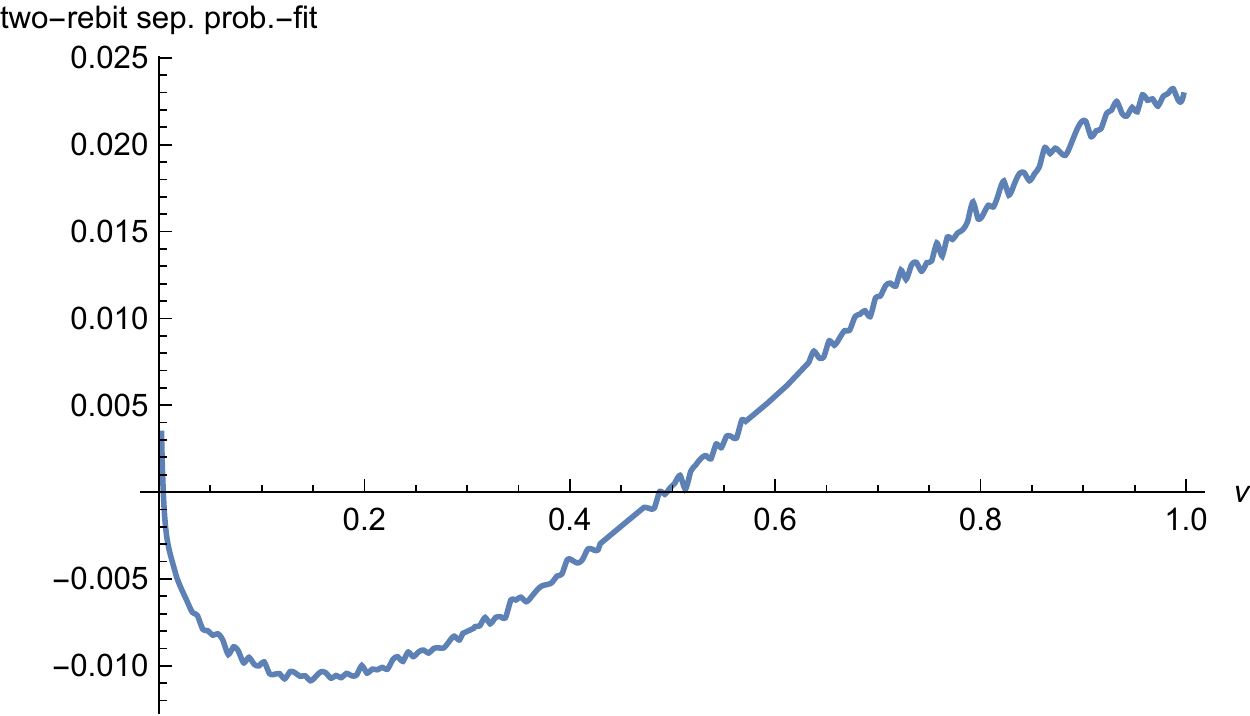}
\caption{Estimated two-rebit Hilbert-Schmidt separability probabilities, based on 687 million randomly-generated density matrices, {\it minus}
the separability function $\frac{3915 \pi ^2 (3-\nu ) \sqrt{\nu }}{131072}$}
\label{fig:RebitNUDiff}
\end{figure}
\begin{figure}
\includegraphics{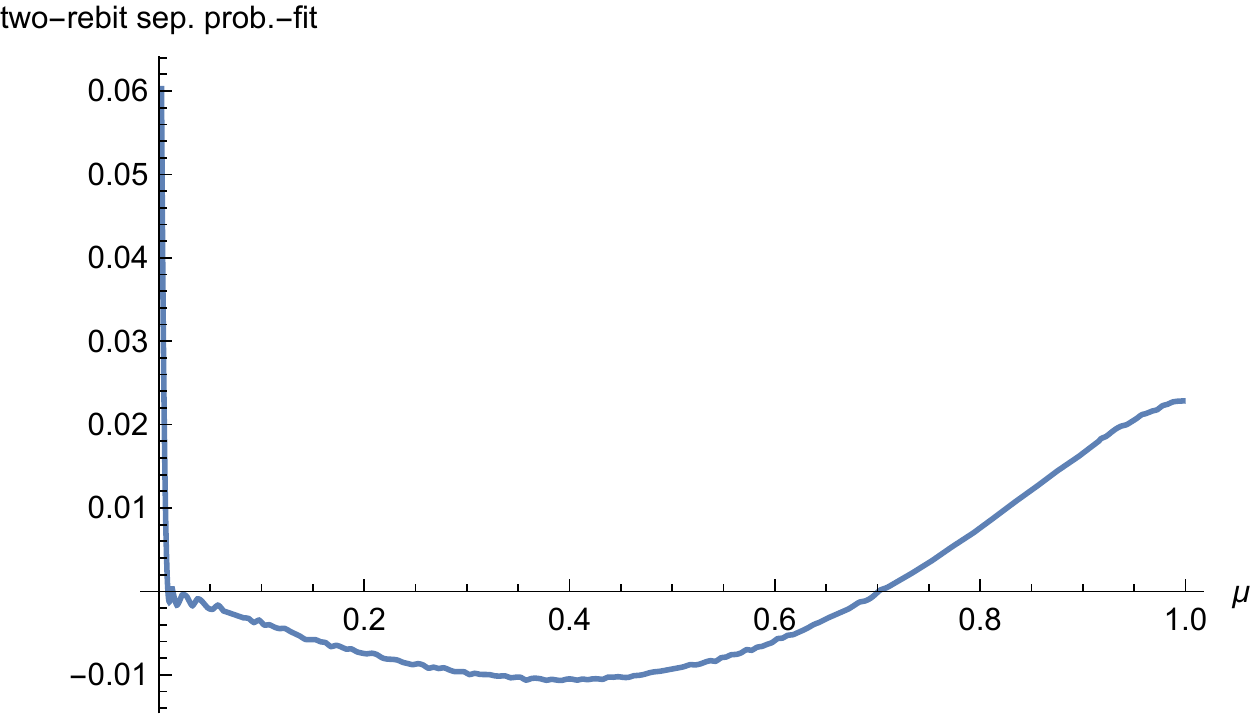}
\caption{Estimated two-rebit Hilbert-Schmidt separability probabilities, based on 5,077  million randomly-generated density matrices, {\it minus}
the separability function $\frac{3915 \pi ^2 (3-\mu^2 ) \mu}{131072}$}
\label{fig:RebitMUDiff}
\end{figure}
\begin{figure}
\includegraphics{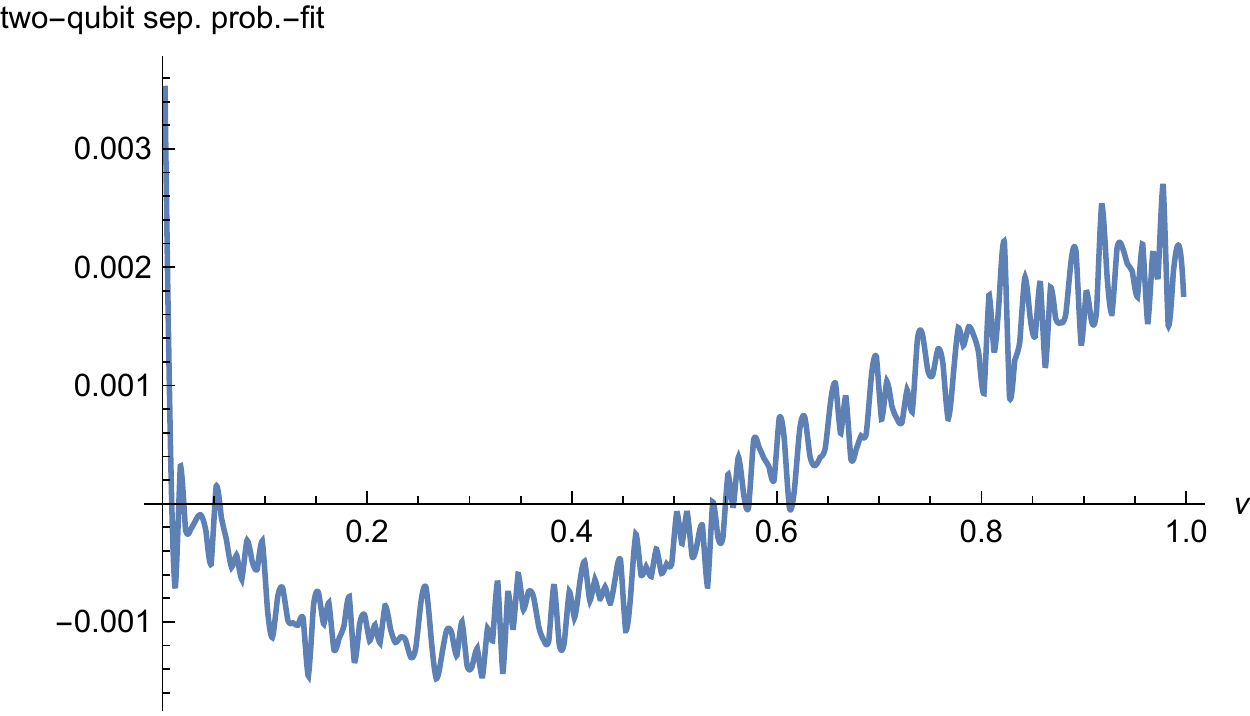}
\caption{Estimated two-qubit Hilbert-Schmidt separability probabilities, based on 3,715 million randomly-generated density matrices, {\it minus}
the separability function $\frac{6}{71} (3-\nu)^2 \nu$}
\label{fig:QubitNUDiff}
\end{figure}
\begin{figure}
\includegraphics{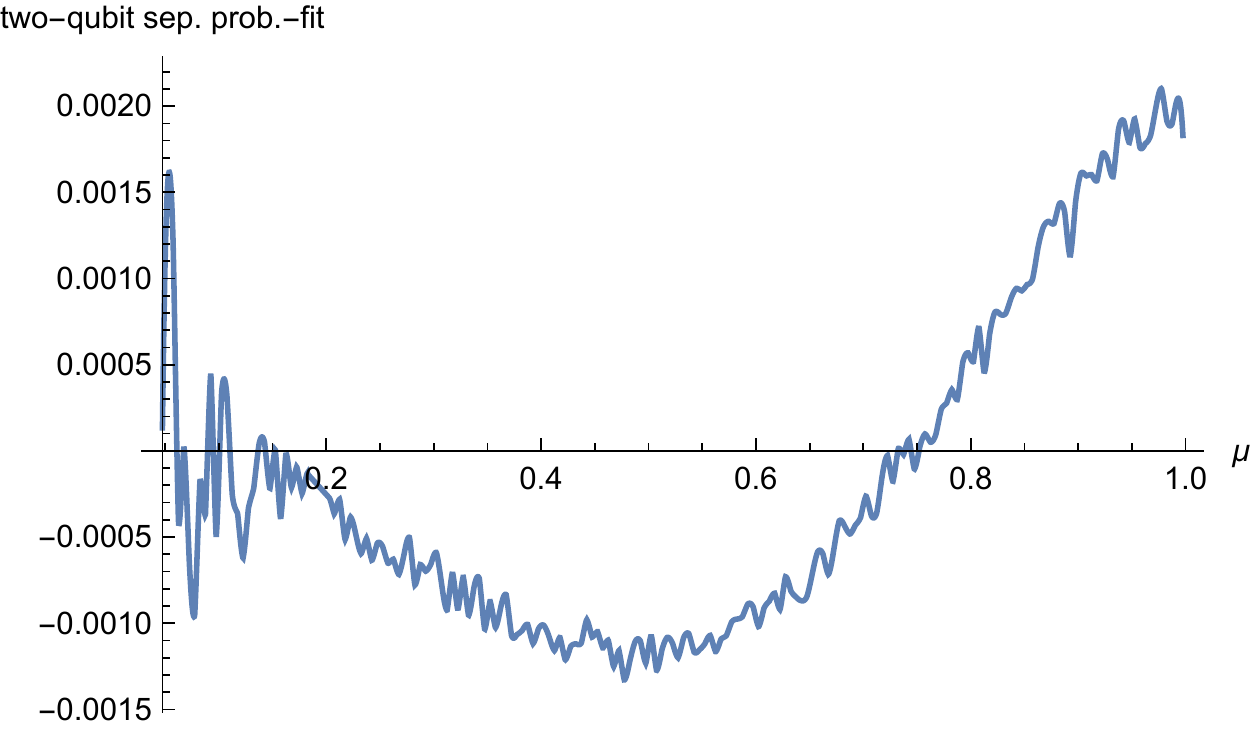}
\caption{Estimated two-qubit Hilbert-Schmidt separability probabilities, based on 3,715 randomly-generated density matrices, {\it minus}
the separability function $\frac{6}{71} (3-\mu^2)^2 \mu^2$}
\label{fig:QubitMUDiff}
\end{figure}

So, at this stage,
the evidence is certainly strong that the Dyson-index ansatz is at least of some value in  approximately fitting
the relationships between two-rebit and two-qubit Hilbert-Schmidt separability functions.
\subsection{Formulas linking the Lovas-Andai variable $\varepsilon$ and the Slater/Bloore variable $\mu$}
Using the two-rebit density matrix parameterization (\ref{Matrix}), then,
taking the previously indicated relationship (\ref{relationship}), 
which has the explicit form in this case of 
\begin{equation} \label{Linkage}
\varepsilon= \exp \left(-\cosh ^{-1}\left(\frac{-\mu ^2+2 \mu  z_{12} z_{34}-1}{2 \mu 
   \sqrt{z_{12}^2-1} \sqrt{z_{34}^2-1}}\right)\right),
\end{equation}
and inverting it, we find
\begin{equation} \label{relationship1} 
\mu = \frac{1}{2} \left(\lambda -\sqrt{\lambda ^2-4}\right),
\end{equation}
where
\begin{displaymath}
\lambda= 2 z_{12} z_{34}-\sqrt{z_{12}^2-1} \sqrt{z_{34}^2-1} \left(\frac{1}{\varepsilon ^2}+1\right)
   \varepsilon.
\end{displaymath}
For the two-qubit counterpart, 
we have
\begin{equation}
\varepsilon=  \exp \left(-\cosh ^{-1}\left(\frac{-\mu ^2+2 \mu  \left(y_{12} y_{34}+z_{12}
   z_{34}\right)-1}{2 \mu  \sqrt{y_{12}^2+z_{12}^2-1}
   \sqrt{y_{34}^2+z_{34}^2-1}}\right)\right). 
\end{equation}
The $z_{ij}$'s are as in the two-rebit case (\ref{7DimFunct}), and the $y_{ij}$'s are now the 
corresponding imaginary parts in the natural extension of the two-rebit density matrix
parameterization (\ref{Matrix}). 
A similar inversion yields 
\begin{equation} \label{relationship2}
\mu = \frac{1}{2} \left(\tilde{\lambda}-\sqrt{\tilde{\lambda} ^2-4} \right),
\end{equation}
where
\begin{displaymath}
\tilde{\lambda}=-\left(\frac{1}{\varepsilon ^2}+1\right) \varepsilon  \sqrt{y_{12}^2+z_{12}^2-1}
   \sqrt{y_{34}^2+z_{34}^2-1}+2 y_{12} y_{34}+2 z_{12} z_{34}.
\end{displaymath}

It appears to be a challenging problem, using these relations 
 (\ref{relationship}),  (\ref{relationship1}) and   (\ref{relationship2}), to  transform 
the $\varepsilon$-parameterized volume forms and separability functions in the Lovas-Andai
framework to the $\mu$-parameterized ones in the Slater setting, and {\it vice versa}. (The
presence of the $z$ and $y$ variables in the formulas, undermining any immediate one-to-one 
relationship between $\varepsilon$ and $\mu$, is a complicating factor.)
 
The correlation between the $\varepsilon$ and $\mu$ variables, estimated on the basis of one million
randomly-generated (with respect to Hilbert-Schmidt measure) density matrices was 0.631937 in the two-rebit instance, and 0.496949 in the two-qubit one. 

Also, in these two sets of one million cases, $\mu$ was always larger than $\varepsilon$.
This dominance effect (awaiting formal verification) is reflected in Figs.~\ref{fig:JointRebitPlot} and \ref{fig:JointQubitPlot}, being plots
of the separability probabilities (again based on samples of size 5,077 and 3,715 million, respectively) as {\it joint} functions of $\varepsilon$ and $\mu$, with no results appearing in the regions $\varepsilon > \mu$.
\begin{figure}
\includegraphics{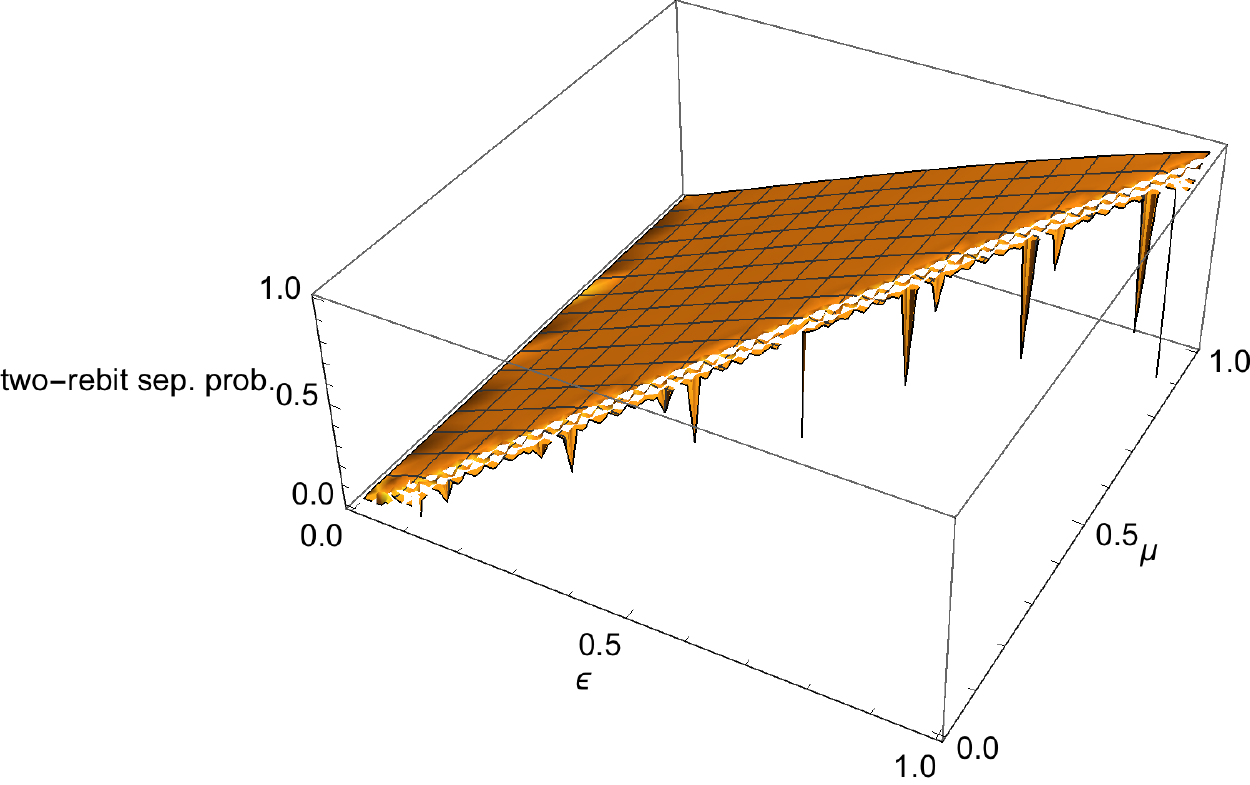}
\caption{\label{fig:JointRebitPlot}Two-rebit separability probabilities as joint
function of $\varepsilon$ and $\mu$, based on 5,077 million randomly-generated density matrices. 
Note the vacant region $\varepsilon > \mu$.}
\end{figure}
\begin{figure}
\includegraphics{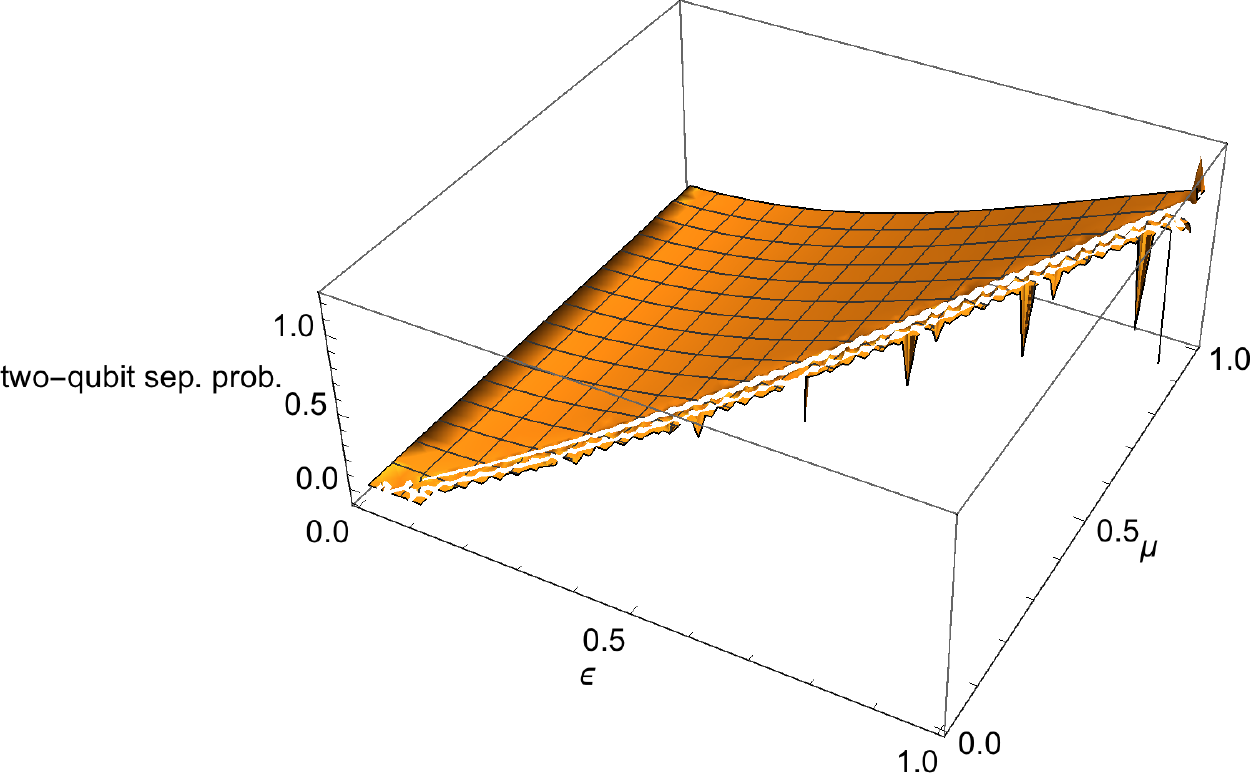}
\caption{\label{fig:JointQubitPlot}Two-qubit separability probabilities as joint
function of $\varepsilon$ and $\mu$, based on 3,715 million randomly-generated 
density matrices. Note the vacant region $\varepsilon > \mu$.}
\end{figure}
It has been noted 
(\url{https://mathoverflow.net/q/262943/47134} that for a diagonal $4 \times 4$ density matrix $D$ that $\varepsilon =\mu$ (inverting ratios, if necessary, so that both
are less than or greater than 1). This equality can also be observed by setting
$z_{12}=z_{24}=0$ (and $y_{12}=y_{24}=0$) in the equations immediate above.

Let us now display three plots that support, but only approximately, the possible
relevance of the Dyson-index ansatz for two-rebit and two-qubit separability functions.
In Fig.~\ref{fig:SlaterRatio}, we show the ratio of the square of the two-rebit separability
probabilities to the two-qubit separability probabilities, in terms of the variable
employed by Slater, $\mu=\sqrt{\frac{\rho_ {11} \rho_ {44}}{\rho_ {22} \rho_ {33}}}$.
In Fig.~\ref{fig:LovasAndaiRatio}, we show the Lovas-Andai counterpart, that is, in
terms of the ratio of singular values variable, $\varepsilon= \sigma(V)$. Further, in Fig.~\ref{fig:TwoDimensionalRatio}
we display the ratio of the {\it square} of the two-dimensional two-rebit plot (Fig.~\ref{fig:JointRebitPlot}) to the 
two-dimensional two-qubit plot (Fig.~\ref{fig:JointQubitPlot}). These three figures all manifest an upward trend
in the ratios as $\varepsilon$ and/or $\mu$ increase.
\begin{figure}
\includegraphics{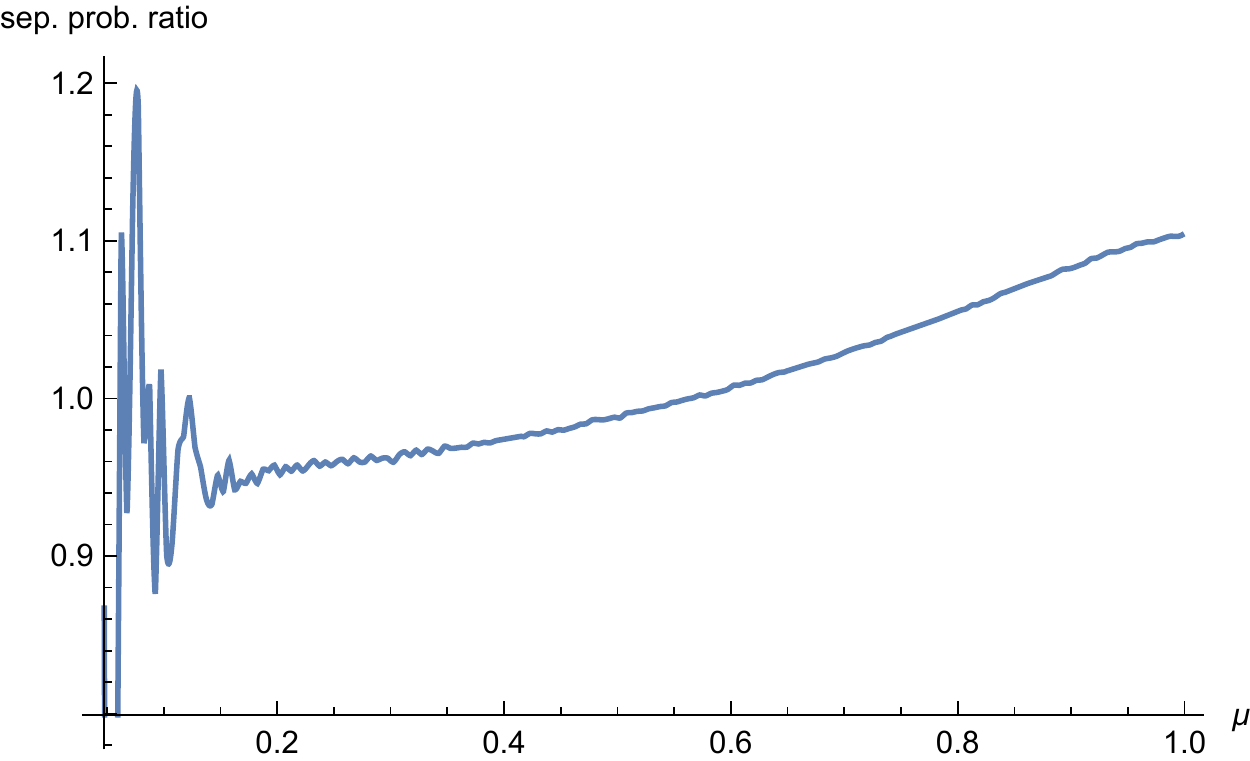}
\caption{\label{fig:SlaterRatio}Ratio of the {\it square} of the estimated two-rebit separability probabilities (Fig.~\ref{fig:RebitMU})
to the estimated two-qubit separability probabilities (Fig.~\ref{fig:QubitMU}), as a function of  $\mu=\sqrt{\frac{\rho_ {11} \rho_ {44}}{\rho_ {22} \rho_ {33}}}$}
\end{figure}
\begin{figure}
\includegraphics{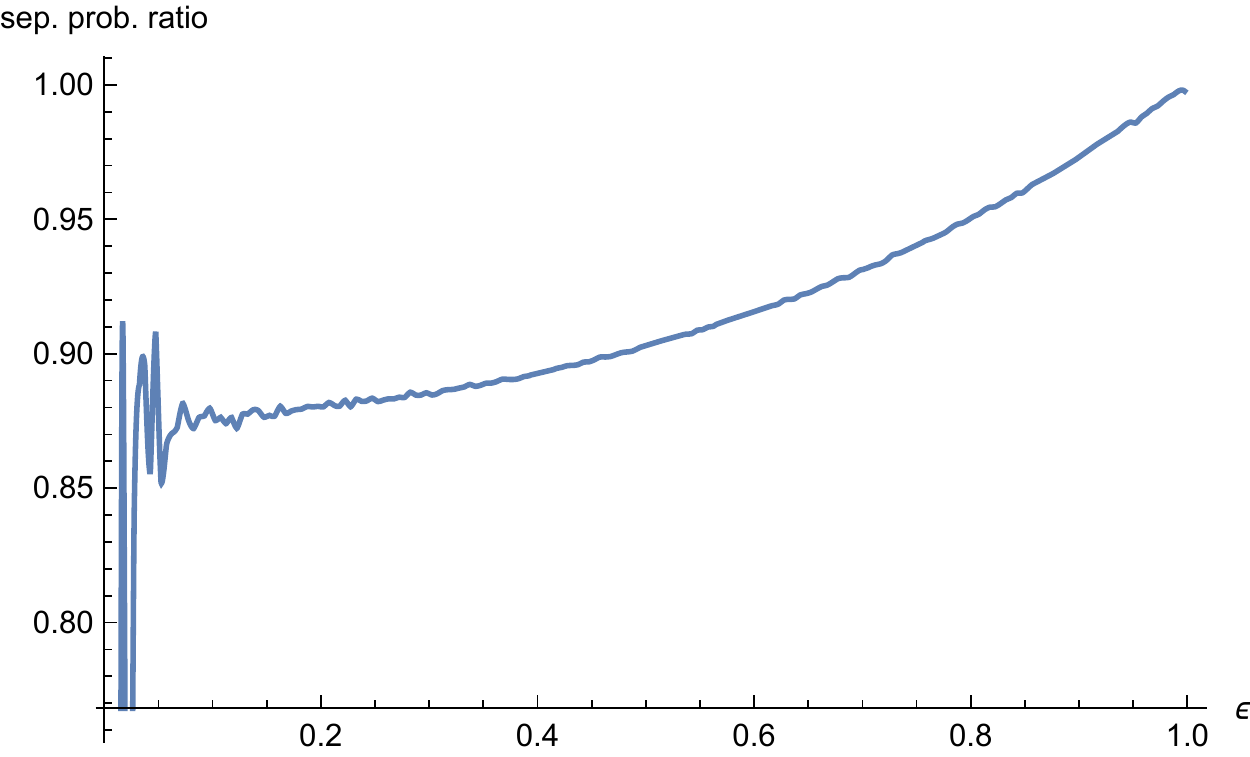}
\caption{\label{fig:LovasAndaiRatio}Ratio of the {\it square} of the estimated two-rebit separability probabilities (Fig.~\ref{fig:rebitsep})
to the estimated two-qubit separability probabilities (Fig.~\ref{fig:qubitsep}), as a function of the ratio of singular values variable, $\varepsilon= \sigma(V)$}
\end{figure}
\begin{figure}
\includegraphics{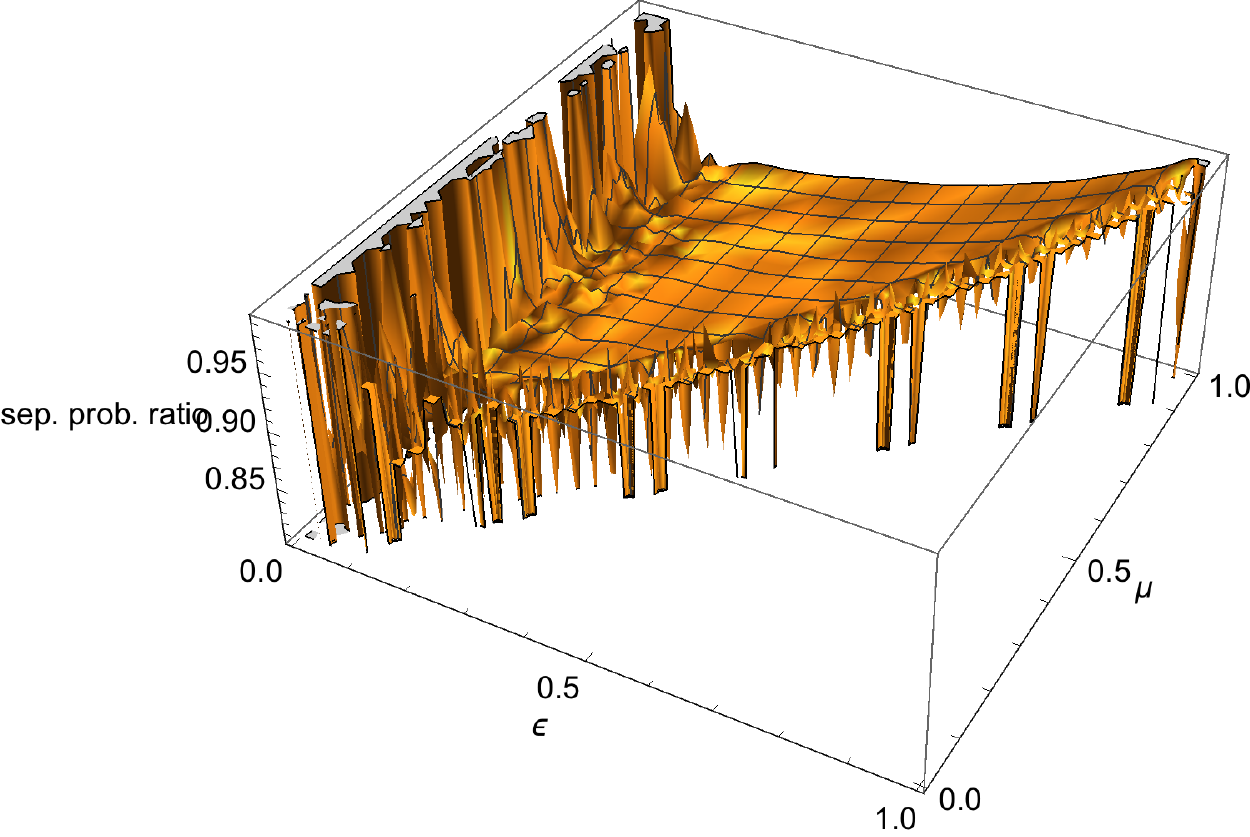}
\caption{\label{fig:TwoDimensionalRatio}The ratio of the {\it square} of the two-dimensional two-rebit plot (Fig.~\ref{fig:JointRebitPlot}) to the 
two-dimensional two-qubit plot (Fig.~\ref{fig:JointQubitPlot})}
\end{figure}
\section{Scenarios for which  $\varepsilon =\mu$ or $\varepsilon =\frac{1}{\mu}$}
\subsection{Seven-dimensional convex set of two-rebit states} \label{7Dsection}
If we set $z_{12}=z_{34}=0$ in the relation (\ref{Linkage}), we obtain $\varepsilon =\mu$ or $\frac{1}{\mu}$. So, let us try to obtain the separability function when these null conditions are fulfilled. First, we found that the Hilbert-Schmidt volume of the seven-dimensional convex set is equal to 
$\frac{1}{5040} \cdot \frac{2 \pi^2}{3} =\frac{\pi ^2}{7560} \approx 0.0013055$, with a jacobian for the transformation to $\mu$ equal to
\begin{equation} \label{sevendimJacobian}
\frac{\mu ^3 \left(-11 \mu ^6-27 \mu ^4+27 \mu ^2+6 \left(\mu ^6+9 \mu ^4+9 \mu
   ^2+1\right) \log (\mu )+11\right)}{210 \left(\mu ^2-1\right)^7}.
\end{equation}
 ($\frac{2 \pi^2}{3}$ is the normalization constant corresponding to $\chi_1(\varepsilon)$ \cite[Table 2]{lovasandai}, appearing in the 
 ``defect function'' (\ref{Defect}), as well as the volume of the standard unit ball in the normed vector space of $2 \times 2$ matrices with real entries, denoted by $\mathcal{B}_1(\mathbb{R}^{2 \times 2})$.)

We were, further, able to impose the condition that two of the principal $3 \times 3$ minors of the partial transpose are positive. 
The resultant separability function (Fig.~\ref{fig:SevenDimensionalSepProb})
was
\begin{equation} \label{7DimFunct}
\begin{cases}
 \frac{2 \left(\sqrt{\mu ^2-1}+\mu ^2 \csc ^{-1}(\mu )\right)}{\pi  \mu ^2} & \mu >1 \\
 \frac{2 \sqrt{1-\mu ^2} \mu +2 i \log \left(\mu +i \sqrt{1-\mu ^2}\right)+\pi }{\pi } &
   0<\mu <1
\end{cases},
\end{equation}
with an associated separability probability of $\frac{71}{105} \approx 0.67619$.
\begin{figure}
\centering
    \includegraphics{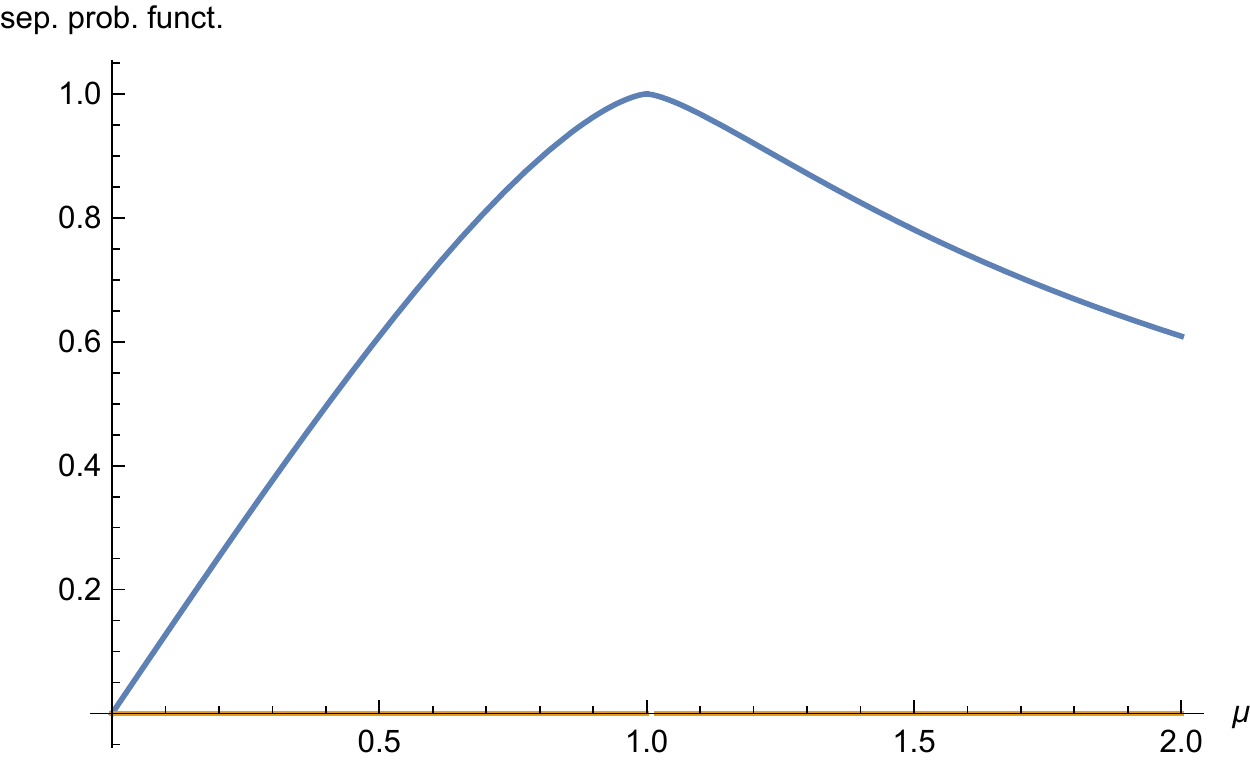}
    \caption{Two-rebit separability probability function (\ref{7DimFunct}) for the seven-dimensional convex set for which  $\varepsilon =\mu$ or $\frac{1}{\mu}$, based on the positivity of two
    principal $3 \times 3$ minors of the partial transpose}
    \label{fig:SevenDimensionalSepProb}    
\end{figure}

We, then, sought to impose--as both necessary and sufficient for separability \cite{asher,michal}--the positivity of the partial transpose of the density matrix. First, we found that the associated separability function assumes the value 1 at $\mu=1$. For $\mu=2, 3$, we formulated four-dimensional constrained integration problems. Mathematica reduced them to two-dimensional integration problems, for which we were able to perform  high precision 
calculations.
Remarkably, the values obtained agreed (using (\ref{poly})) with those for $\tilde{\chi}_1(\frac{1}{2})=\tilde{\chi}_1(2)$  and $\tilde{\chi}_1(\frac{1}{3})=\tilde{\chi}_1(3)$ to more than twenty decimal places. 
The two-dimensional integrands Mathematica yielded for $\mu=2$ were of the form
\begin{equation} \label{2Dintegrations1}
 \frac{3 \left(\pi  \sqrt{-4 z_{13}^2-z_{14}^2+4}-8 \sqrt{-z_{13}^2-z_{14}^2+1} \sin
   ^{-1}\left(\frac{z_{14}}{2 \sqrt{1-z_{13}^2}}\right)+2 \sqrt{-4 z_{13}^2-z_{14}^2+4}
   \sin ^{-1}\left(\frac{z_{14}}{\sqrt{1-z_{13}^2}}\right)\right)}{8 \pi ^2}  
\end{equation}
for 
\begin{displaymath}
-1<z_{13}<1\land z_{14}+\sqrt{1-z_{13}^2}>0\land z_{14}<0
\end{displaymath}
and 
\begin{equation} \label{2Dintegrations2}
 \frac{3 \left(\pi  \sqrt{-4 z_{13}^2-z_{14}^2+4}+8 \sqrt{-z_{13}^2-z_{14}^2+1} \sin
   ^{-1}\left(\frac{z_{14}}{2 \sqrt{1-z_{13}^2}}\right)-2 \sqrt{-4 z_{13}^2-z_{14}^2+4}
   \sin ^{-1}\left(\frac{z_{14}}{\sqrt{1-z_{13}^2}}\right)\right)}{8 \pi ^2}   
\end{equation}
for
\begin{displaymath}
-1<z_{13}<1\land z_{14}>0\land \sqrt{1-z_{13}^2}-z_{14}>0.
\end{displaymath}
So, in light of this evidence, we are confident in concluding that the Lovas-Andai two-rebit separability function $\tilde{\chi}_1(\varepsilon)$ serves as {\it both} the Lovas-Andai and Slater separability functions in this seven-dimensional setting. 

To still more formally proceed, we were able to generalize the pair of two-dimensional integrands for the specific case $\mu=2$ given in (\ref{2Dintegrations1}) and (\ref{2Dintegrations2}) 
to $\mu =1,2,3,\ldots$, obtaining
\begin{equation} \label{2Dintegrations3}
\frac{3 \left(2 \mu  \sqrt{-\mu ^2 z_{14}^2-z_{13}^2+1} \sin
   ^{-1}\left(\frac{z_{14}}{\sqrt{1-z_{13}^2}}\right)-2 \sqrt{-z_{13}^2-z_{14}^2+1} \sin
   ^{-1}\left(\frac{\mu  z_{14}}{\sqrt{1-z_{13}^2}}\right)+\pi 
   \sqrt{-z_{13}^2-z_{14}^2+1}\right)}{2 \pi ^2}   
\end{equation}
for
\begin{displaymath}
1<z_{13}<1\land z_{14}>0\land \sqrt{1-z_{13}^2}-\mu  z_{14}>0
\end{displaymath}
and
\begin{equation} \label{2Dintegrations4}
\frac{3 \left(-2 \mu  \sqrt{-\mu ^2 z_{14}^2-z_{13}^2+1} \sin
   ^{-1}\left(\frac{z_{14}}{\sqrt{1-z_{13}^2}}\right)+2 \sqrt{-z_{13}^2-z_{14}^2+1} \sin
   ^{-1}\left(\frac{\mu  z_{14}}{\sqrt{1-z_{13}^2}}\right)+\pi 
   \sqrt{-z_{13}^2-z_{14}^2+1}\right)}{2 \pi ^2}    
\end{equation}
for
\begin{displaymath}
-1<z_{13}<1\land \mu  z_{14}+\sqrt{1-z_{13}^2}>0\land z_{14}<0.
\end{displaymath}

\subsubsection{Reproduction of Lovas-Andai two-rebit separability function
$\tilde{\chi}_1 (\varepsilon )$} \label{Reproduction}
Making use of these last set of relations, we were able to reproduce the Lovas-Andai two-rebit separability function
$\tilde{\chi}_1 (\varepsilon )$, given in (\ref{poly}). We accomplished this by, first, reducing the 
(general for integer $\mu> 1$) two-dimensional 
integrands (\ref{2Dintegrations3}) and  (\ref{2Dintegrations4}) to two piecewise one-dimensional ones of the form
\begin{equation} \label{Verify1}
\frac{4 \left(\mu ^2 \sqrt{1-s^2} \sin ^{-1}\left(\frac{s}{\mu }\right)+\sqrt{\mu
   ^2-s^2} \cos ^{-1}(s)\right)}{\pi ^2 \mu ^2}    
\end{equation}
over $s \in [0,1]$
and 
\begin{equation} \label{Verify2}
\frac{2 \pi  \sqrt{\mu ^2-s^2}-4 \mu ^2 \sqrt{1-s^2} \sin ^{-1}\left(\frac{s}{\mu
   }\right)+4 \sqrt{\mu ^2-s^2} \sin ^{-1}(s)}{\pi ^2 \mu ^2}    
\end{equation}
over $s \in [-1,0]$.

To obtain these one-dimensional  integrands, which we then were able to explicitly evaluate, we made the substitution $z_{14}\to \frac{s \sqrt{1-z_{13}^2}}{\mu }$, then integrated over $z_{13} \in [-1,1]$, with $\mu \geq 1$, so that $\varepsilon =\frac{1}{\mu}$. Let us note that in this approach, the dependent variable ($\mu$) appears in the integrands, while in the Lovas-Andai derivation, the dependent 
variable ($\varepsilon$) appears as a limit of integration.
The counterpart set of two piecewise integrands to (\ref{Verify1}) and (\ref{Verify2}) for the reciprocal case of $0<\mu \leq 1$ are
\begin{equation}
\frac{4 \left(\mu ^2 \sqrt{1-s^2} \cos ^{-1}\left(\frac{s}{\mu }\right)+\sqrt{(\mu -s)
   (\mu +s)} \sin ^{-1}(s)\right)}{\pi ^2 \mu ^2}
\end{equation}
over $s \in [0,1]$ with $\mu > s$ and
\begin{equation}
\frac{2 \mu ^2 \sqrt{1-s^2} \left(2 \sin ^{-1}\left(\frac{s}{\mu }\right)+\pi \right)-4
   \sqrt{(\mu -s) (\mu +s)} \sin ^{-1}(s)}{\pi ^2 \mu ^2}
\end{equation}
over $s \in [-1,0]$ with $\mu >-s$. The corresponding univariate integrations then directly yield the Lovas-Andai two-rebit separability function $\tilde{\chi}_1 (\varepsilon )$, given in (\ref{poly}), now with
$\varepsilon =\mu$, rather than $\varepsilon =\frac{1}{\mu}$.

As an interesting aside, let us note that we can obtain $\varepsilon =\mu$ in (\ref{Linkage}), in a nontrivial fashion (that is, not just by taking 
$z_{12}=z_{34}=0$), by setting
\begin{equation}
z_{34}=    \frac{z_{12} \left(-2 \left(\mu ^3+\mu \right)+\mu ^4
   \left(-\sqrt{z_{12}^2-1}\right)+\sqrt{z_{12}^2-1}\right)}{\left(\mu ^2-1\right)^2
   z_{12}^2-\left(\mu ^2+1\right)^2},
\end{equation}
leading to an eight-dimensional framework. However, this result did not seem readily amenable to further study/analysis.
\subsection{Eleven-dimensional convex set of two-qubit states} \label{11Dsection}
Let us repeat  for the 15-dimensional convex set of two-qubit states, the successful form of analysis in the preceding section, 
again nullifying  the (1,2), (2,1), (3,4), (4,3) entries of $D$, so that the two diagonal $2 \times 2$ blocks $D_1, D_2$ are themselves diagonal. This leaves  us in an 11-dimensional setting. The associated volume we computed as  $\frac{1}{9979200} \cdot \frac{\pi^4}{6} = \frac{\pi ^4}{59875200} 
\approx 1.62687 \cdot 10^{-6}$.
(Here $\frac{\pi^4}{6}$ is the normalization constant corresponding to $\chi_2(1)$ \cite[Table 2]{lovasandai}, as well as the volume of the standard unit ball in the normed vector space of $2 \times 2$ matrices with complex entries, denoted by $\mathcal{B}_1(\mathbb{C}^{2 \times 2})$.) The associated jacobian for the transformation to the $\mu$ variable is
\begin{equation}
\frac{\mu ^5 \left(A (\mu -1) (\mu +1)-60 \left(6 \mu ^{10}+75 \mu ^8+200 \mu ^6+150 \mu ^4+30
   \mu ^2+1\right) \log (\mu )\right)}{83160 \left(\mu ^2-1\right)^{12}}
\end{equation}
with
\begin{displaymath}
A=5 \mu ^{10}+647 \mu ^8+4397 \mu ^6+6397 \mu ^4+2272 \mu ^2+142.
\end{displaymath}
The imposition of positivity for one of the $3 \times 3$ principal minors of the partial transpose yielded a separability function of 
$\frac{2 \mu ^2-1}{\mu ^4}$ for $\mu>1$, with an associated bound on the true separability probability of this set of eleven-dimensional two-qubit density matrices of $\frac{126}{181} \approx 0.696133$. (This function bears an interesting resemblance to the later reported important one (\ref{firsttake}). The insertion into (\ref{sepX}) of it, in its $0<\mu<1$ form, $\mu^2 (2 - \mu^2)$, leads to a separability probability 
prediction of $\frac{1}{3}$.)

Again--as in the immediately preceding seven-dimensional two-rebit setting--imposing, as both necessary and sufficient for separability \cite{asher,michal}, the positivity of the partial transpose of the density matrix, we find that the associated separability function assumes the value 1 at $\mu=1$. For $\mu=2$, our best estimate was 0.36848, which in line with the seven-dimensional analysis, would appear to be an approximation to the previously  unknown value of $\tilde{\chi}_2(\frac{1}{2})=\tilde{\chi}_2(2)$.
\subsubsection{Proof of the $\frac{8}{33}$-Two-Qubit Hilbert Schmidt Separability Probability Conjecture} \label{Verification}
We applied the Mathematica command GenericCylindricalDecomposition to an eight-variable set (plus $\mu$) of positivity conditions, enforcing the positive-definite nature of two-qubit ($4 \times 4$) density matrices (with their (1,2), (2,1), (3,4) and (4,3) entries nullified) and of their partial transposes, for $\mu>1$. (``GenericCylindricalDecomposition[ineqs,{$x_1,x_2,...$}]
finds the full-dimensional part of the decomposition of the region represented by the inequalities ineqs into cylindrical parts whose directions correspond to the successive $x_i$, together with any hypersurfaces containing the rest of the region.'')

These density matrices had their two $2 \times 2$ diagonal blocks, themselves set diagonal in nature. The parallel two-rebit analysis (sec.~\ref{7Dsection}) succeeded in reconstructing the Lovas-Andai function $\tilde{\chi_1} (\varepsilon )$, giving us confidence in this strategy. This pair of reduction strategies rendered the corresponding sets of density matrices as 11-dimensional and 7-dimensional in nature, rather than the standard full 15- and 9-dimensions, respectively. 

The cylindrical algebraic decomposition (CAD)--applied to the two-qubit positivity constraints (expressible in terms of $\mu$ and  four  [real part] $z_{ij}$ and four [imaginary part] $y_{ij}$ variables)--yielded three  complementary solutions. One of these consisted of three further complementary solutions. We analyzed each of the five irreducible solutions separately, employing them to perform integrations over the same set of four ($z_{23}, y_{23}, y_{24}$ and $z_{24}$) of the eight variables. Then, we summed the five results, remarkably simplifying to the four-dimensional integrand
\begin{equation} \label{FirstIntegrand}
\frac{12 \pi ^2 \left(-\left(\mu^2-1\right) y_{14}^2-\left(\mu^2-1\right)
   z_{14}^2+y_{13}^2+z_{13}^2-1\right) \left(\mu^2
   \left(y_{14}^2+z_{14}^2\right)+y_{13}^2+z_{13}^2-1\right)}{2 \pi^4
   \left(1-y_{13}^2+z_{13}^2\right)},
\end{equation}
subject to the constraints 
\begin{equation} \label{Constraints}
\mu>1\land -\frac{1}{\mu}<z_{14}<\frac{1}{\mu}\land -\frac{\sqrt{1-\mu^2
   z_{14}^2}}{\mu}<y_{14}<\frac{\sqrt{1-\mu^2 z_{14}^2}}{\mu}
\end{equation}   
\begin{displaymath}
   \land -\sqrt{1-\mu^2
   \left(y_{14}^2+z_{14}^2\right)}<y_{13}<\sqrt{1-\mu^2
   \left(y_{14}^2+z_{14}^2\right)}
\end{displaymath}
\begin{displaymath}
   \land -\sqrt{\mu^2
   \left(-\left(y_{14}^2+z_{14}^2\right)\right)-y_{13}^2+1}<z_{13}<\sqrt{\mu^2
   \left(-\left(y_{14}^2+z_{14}^2\right)\right)-y_{13}^2+1}.
\end{displaymath}
The transformation to a pair of polar coordinates 
\begin{equation}
\left\{z_{13}\to r_{13} \cos \left(\phi_{13}\right),z_{14}\to r_{14} \cos
   \left(\phi_{14}\right),y_{13}\to r_{13} \sin \left(\phi_{13}\right),y_{14}\to r_{14} \sin
   \left(\phi_{14}\right)\right\}.
\end{equation}
gave us a somewhat simpler integrand 
\begin{equation} \label{SecondIntegrand}
\frac{12 \pi ^2 r_{13} r_{14} \left(r_{14}^2 \mu^2+r_{13}^2-1\right) \left(-r_{14}^2
   \left(\mu^2-1\right)+r_{13}^2-1\right)}{2 \pi^4 \left(1-r_{13}^2\right)}.
\end{equation}
(Note four ``active'' variables in the first integrand (\ref{FirstIntegrand}), and only two radial and no angular ones in the second 
(\ref{SecondIntegrand}).)
The integration constraints (\ref{Constraints}) now  simply reduced to $r_{13}^2 +r_{14}^2 \mu^2 <1$, with $\mu>1$.  The integration result  
\begin{equation} \label{firsttake}
f(u)= \frac{4 \mu^2-1}{3 \mu^4},
\end{equation}
immediately followed.

Now, the function
that Lovai and Andai expressed hope in employing to verify  the  conjecture that the Hilbert-Schmidt two-qubit separability probability  is $\frac{8}{33} \approx 0.242424$, is
\begin{equation}  \label{VerifiedFormula}
\tilde{\chi_2}(\varepsilon)=f(\frac{1}{\varepsilon})=\frac{1}{3} \varepsilon ^2 \left(4-\varepsilon ^2\right).
\end{equation}
This can be seen since the denominator of the equation (\ref{sepC}) for $\mathcal{P}_{sep}(\mathbb{C})$  evaluates, as noted earlier, to 
\begin{equation} \label{Denominator}
\int\limits_{-1}^1\int\limits_{-1}^x  (1-x^2)^{2}(1-y^2)^{2} (x-y)^{2} \mbox{d} y\mbox{d} x=\frac{256}{1575},
\end{equation}
while the use of the newly-constructed $\tilde{\chi_2}(\varepsilon)$ yields a numerator value of 
\begin{equation}
\int\limits_{-1}^1\int\limits_{-1}^x  \tilde{\chi}_{2} \left(
\left.\sqrt{\frac{1-x}{1+x}}\right/ \sqrt{\frac{1-y}{1+y}}	
	\right)(1-x^2)^{2}(1-y^2)^{2} (x-y)^{2} \mbox{d} y\mbox{d} x=\frac{2048}{51975}
\end{equation}
with the ratio giving the  $\frac{8}{33}$ result. (For the reduced 11-dimensional two-qubit setting, making use of $\tilde{\chi_2}(\varepsilon)$, 
we were able to compute the associated 
separability probability as $\frac{746149}{21}-3600 \pi ^2 \approx 0.328918$. Proceeding similarly in the reduced 7-dimensional two-rebit setting,
we obtained an associated separability probability of 0.4197023.)

It is somewhat startling to compare the quite simple (polynomial) nature of $\tilde{\chi_2}(\varepsilon)$ with its two-rebit (polylogarithmic/inverse hyperbolic tangent) counterpart ((\ref{BasicFormula}), (\ref{poly})).
Let us now present (Fig.~\ref{fig:qubitdiffZ}) the two-qubit version of Fig.~\ref{fig:rebitsep}, showing again a random distribution of residuals, serving as further validation/support for the newly-constructed $\tilde{\chi_2}(\varepsilon)$.
\begin{figure}
    \centering
    \includegraphics{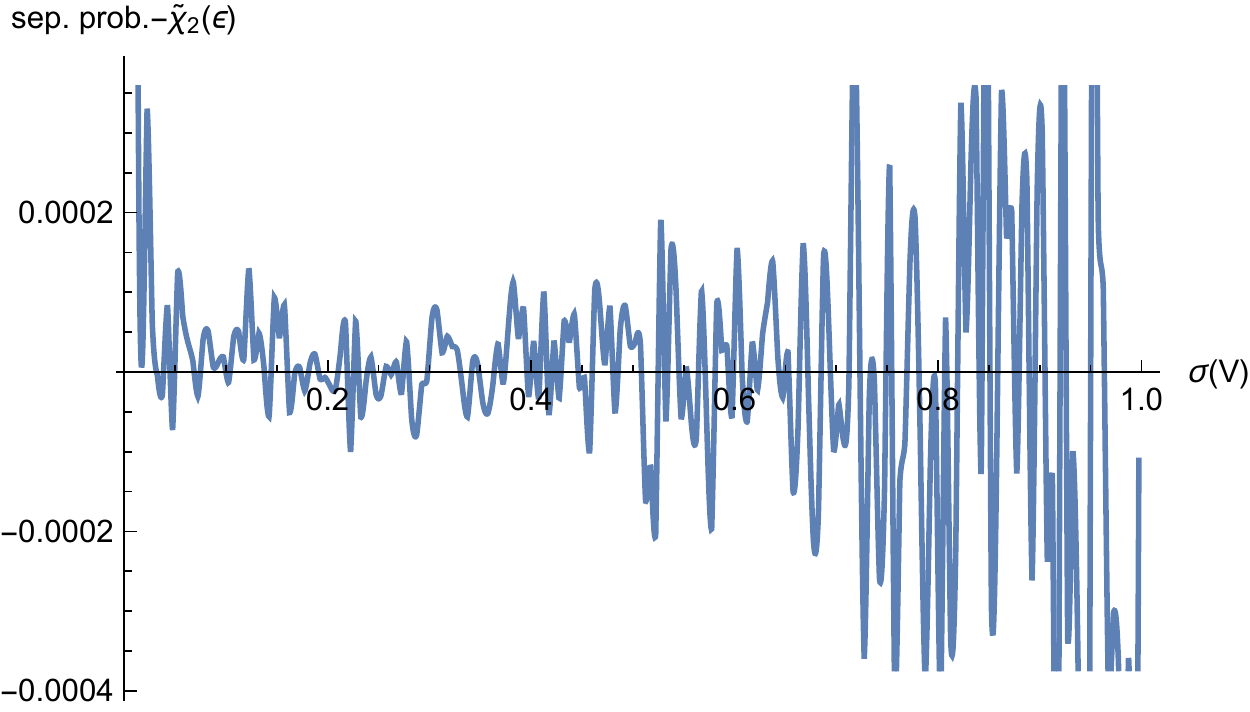}
    \caption{Result of subtracting $\tilde{\chi}_2 (\varepsilon )$ from the estimated two-qubit separability probability curve (Fig.~\ref{fig:qubitsep}). Fig.~\ref{fig:rebitdiff} is the two-rebit analogue.}
    \label{fig:qubitdiffZ}
\end{figure}

Let us interestingly note that it was conjectured in 2007 \cite[eqs. (93), (95); sec. 9.2]{slater833} that the ``two-qubit separability function'' (in the Slater framework) had the form
\begin{equation}
\frac{6}{71} (3 -\mu^2) \mu^2,
\end{equation}
somewhat similar in nature to (\ref{VerifiedFormula}) (cf. Fig.~\ref{fig:QubitMU}, ~\ref{fig:QubitMUDiff}).

A formidable challenge, to continue this line of research, is now to establish that the "two-quater[nionic]bit" Hilbert-Schmidt separability probability is $\frac{26}{323}$. This would move us, first, from the original 9-dimensional two-rebit and  15-dimensional two-qubit settings to a 27-dimensional one.  But these dimensions can be reduced to 7-, 11- and 19-, using the apparently acceptable strategy--that has given us  $\tilde{\chi_1} (\varepsilon )$ and $\tilde{\chi_2} (\varepsilon )$--of setting the two $2 \times 2$ diagonal blocks themselves to diagonal form. In turn, this leads to cylindrical algebraic decompositions with 4, 8 and 16 variables--with the last, quaternionic one, still seemingly computationally unfeasible.  (Theorem 3 of \cite{andai2006volume} yields $\frac{\pi ^{12}}{315071454005160652800000}=\frac{\pi ^{12}}{2^{15} \cdot 3^{10} \cdot 5^5 \cdot 7^3 \cdot 11^2 \cdot 13^2 \cdot 17 \cdot 19 \cdot 23}$ for the volume of the state space of quaternionic $4 \times 4$ density matrices.) 

It should be pointed out that the manner of derivation of $\tilde{\chi_2} (\varepsilon )$ here is distinctly different from that employed by Lovas and Andai  \cite[App. A]{lovasandai} in obtaining the form of $\tilde{\chi_1} (\varepsilon )$, though it has also been able to find this result here using the cylindrical algebraic decomposition approach (sec.~\ref{Reproduction}). 
\subsection{Common features of two-rebit, two-qubit and two-quaterbit constraint sets}
To begin, let us make the temporary change of notation,
\begin{equation}
 z_{13}=r_{13}, z_{14}=r_{14}, z_{23}=r_{23}, z_{24}=r_{24}.   
\end{equation}
To construct the two-rebit function $\tilde{\chi_1} (\varepsilon )$, in our mode of analysis above,
we need, first,  to ensure the positive-definiteness of the associated $4 \times 4$ real-entry density matrix $D$ (with its (1,2), (2,1), (3,4), (4,3)-entries nullified). To accomplish this, 
we must enforce the pair of constraints
\begin{equation} \label{Constraint0}
 -r_{13}^2-r_{23}^2+1>0   
\end{equation}
and
\begin{equation} \label{Constraint1}
-r_{13}^2-2 r_{13} r_{14} r_{23} r_{24}-r_{14}^2+r_{14}^2
   r_{23}^2-r_{23}^2+\left(r_{13}^2-1\right) r_{24}^2+1>0.
\end{equation}
Additionally, to ensure the positive-definiteness  of its partial transpose (and, thus, the separability of $D$ \cite{augusiak2008universal}),
we must enforce the constraint
\begin{equation} \label{Constraint2}
 \mu ^4 \left(-r_{14}^2\right)+\mu ^2 \left(-r_{13}^2-2 r_{13} r_{14} r_{23}
   r_{24}+r_{14}^2 r_{23}^2+\left(r_{13}^2-1\right) r_{24}^2+1\right)-r_{23}^2 > 0.
\end{equation}

Now, moving on to the two-qubit case, let us employ polar coordinates of the form, 
\begin{equation}
z_{13} =r_{13} \cos{\phi_{13}}, y_{13} =r_{13} \sin{\phi_{13}},\ldots.
\end{equation}
Then, the constraint (\ref{Constraint0}) remains as is,
while both (\ref{Constraint1}) and  (\ref{Constraint2}) are modified by replacing the term
$2 r_{13} r_{14} r_{23} r_{24}$ by
\begin{equation} \label{lastterm}
2 r_{13} r_{14} r_{23} r_{24} \cos \left(\phi_{13}-\phi_{14}-\phi_{23}+\phi_{24}\right).
\end{equation}
The range of $C_{\mathbb{C}}=\cos \left(\phi_{13}-\phi_{14}-\phi_{23}+\phi_{24}\right)$ is $[-1,1]$.
(So, implicitly, $C_{\mathbb{R}}=1$.)

The same set of three constraints, as in this two-qubit case, holds as well (generalizing from polar to 
hyperspherical coordinates \cite{10.2307/2308932}) in the two-quaterbit case \cite{hildebrand2008semidefinite}, but for the 
replacement in two of the three constraints of the factor $C_{\mathbb{C}}$
by an (apparently) much more cumbersome term $C_{\mathbb{Q}}$. (We employ the ``Moore determinant" for our
quaternionic calculations \cite{moore1922determinant}.) This term is composed of twelve angular, and again none of the four 
radial variables. (However, simulations strongly indicate that the range of this expression is also $[-1,1]$.)

Further, we observe that only the two even powers of $\mu$, that is $\mu^2$ and $\mu^4$, appear in the constraints.
In the two-rebit case, the corresponding integrand in the multidimensional integration is simply 1, while in the two-qubit case it would be
$r_{13} r_{14} r_{23} r_{24}$, and in the two-quaterbit instance, it would be $(r_{13} r_{14} r_{23} r_{24})^3$.
\subsubsection{Attempted construction of  $\tilde{\chi_4} (\varepsilon )$} \label{Attempted1}
We were able to again obtain the two-qubit formula (\ref{VerifiedFormula}) for  $\tilde{\chi_2} (\varepsilon )$, using the just indicated
set of three integration constraints. So, we naturally attempted to extend the scheme of analysis to the two-quaterbit
case. 

Our parallel calculation, then, yielded (here, as a beginning exercise, we take $C_{\mathbb{C}}=C_{\mathbb{Q}}$)
\begin{equation} \label{QuaterbitStart}
   \tilde{\chi_4} (\varepsilon)=  -\frac{1}{385} \varepsilon ^4 \left(75 \varepsilon ^4+128 \varepsilon ^2-588\right)
\end{equation}
(again a function of $\varepsilon^2$). But when we substituted this function into our ansatz (\ref{sepX}), with $\alpha =2$, we obtain a separability probability result of 
$\frac{58}{969} \approx 0.0598555$, rather than the strongly-supported value of $\frac{26}{323} \approx 0.0804954$  \cite[eq. (4)]{slaterJModPhys}. 
(An effort to similarly study the [non-division-algebra] $\alpha =\frac{3}{2}$ case, with an integrand of  $(r_{13} r_{14} r_{23} r_{24})^2$, having a presumed separability probability of $\frac{36061}{262144}$, led to intractable integrals involving elliptic functions.)
So, we must conclude that either our ansatz (\ref{sepX}), for this particular case, is not proper or perhaps more likely
that our treatment of the twelve-angular-variable factor (ranging from -1 to 1) as the cosine function of a single
variable, in the same manner as the two-qubit term  $C_{\mathbb{C}}$ is, has led us astray. (We found specific support for the ansatz in this case by computing the volume form for the $2 \times 2$ self-adjoint {\it quaternionic} matrices, denoted by $\mathcal{M}^{sa}_{2 \mathbb{Q}}$ in the notation of \cite[Table 1]{lovasandai}. It was of the form $\frac{1}{4} \sin (2 \eta ) \sin ^2(\theta ) (x-y)^4 \sin ^3(\phi ) \cos (\phi )$.)  Certainly, it would be desirable to obtain a concise re-expression of this twelve-angular-variable term.

In our efforts in this latter regard, we have obtained an interesting concise expression of certain nine-dimensional sections of the twelve-dimensional body $C_{\mathbb{Q}}$.
The (1,3), (1,4), (2,3) and (2,4) quaternionic entries (of absolute value no greater than 1) of the $4 \times 4$ density matrix $D$ were each parameterized, in the standard hyperspherical manner,
by a single radial and three angular variables. There are six possible pairs of such entries. If we equated the three angles of the (1,3)-(1,4) or (1,3)-(2,3) or (1,4)-(2,4) or (2,3)-(2,4) pair (thus, reducing the dimensionality from twelve to nine), then the twelve-dimensional (radii-free) term collapses greatly to the {\it six}-variable (conditional) expression,
\begin{equation} \label{vanishing}
\cos{\theta_1}  \cos{\theta_2}  +\sin{\theta_1}  \sin{\theta_2}(\cos{\eta_1} \cos{\eta_2}+\cos{(\phi_1-\phi_2)} \sin{\eta_1} \sin{\eta_2}).
\end{equation}
Here, the $\theta$'s and $\eta$'s are the corresponding two latitudinal angles (varying from 0 to $\pi$), and the $\phi$'s the corresponding  longitudinal angles (varying from 0 to $2 \pi$) of the {\it unmatched} pair, the three equated angles of the matched pair {\it vanishing} from the expression (\ref{vanishing}). If the three angles of {\it both} pairs are equated, the expression (\ref{vanishing}) further reduces to -1.

The {\it marginal} distribution (integrating out the eight latitudinal angles) of $C_{\mathbb{Q}}$ over the longitudinal angles equalled $\frac{16 \pi^4}{81}  C_{\mathbb{C}}$. (The relations between conditional, marginal and full/joint distributions comprise a research topic of considerable interest 
\cite{arnold1989compatible,gelman1993characterizing}.)

The term, $-2 r_{13} r_{14} r_{23} r_{24} C_{\mathbb{Q}}$ arising in the constraints stems from summing two identical terms in the expansion of the Moore determinant of
the two-quaterbit density matrix with, in our now usual analytical manner, its (1,2), (2,1), (3,4), (4,3) entries nullified. One of the $24 =4!$ terms in the expansion corresponds to minus the {\it ordered} product of the (1,4), (4,2), (2,3) and (3,1) entries, while the second identical term corresponds to minus the ordered product of the (1,3), (3,2), (2,4) and (4,1) entries. The two-rebit and two-qubit constraint terms  $-2 r_{13} r_{14} r_{23} r_{24} C_{\mathbb{R}}= -2 r_{13} r_{14} r_{23} r_{24}$ and  $-2 r_{13} r_{14} r_{23} r_{24} C_{\mathbb{C}}$ arise similarly, but the ordering of the entries in the product becomes irrelevant.

If we assume that the Lovas-Andai two-quaterbit function has the polynomial structure exhibited by (\ref{QuaterbitStart}) then we could obtain the presumed separability probability $\frac{26}{323}$ using the remarkably simple function
\begin{equation} \label{SimpleQuaterbit}
 \tilde{\chi_4} (\varepsilon)= \varepsilon ^4 \left(2-\varepsilon ^4\right). 
\end{equation}
Slightly less simple, but also similarly successful would be
\begin{equation}
  \tilde{\chi_4} (\varepsilon)= \epsilon ^4 \left(-26 \epsilon ^4+32 \epsilon ^2-5\right).   
\end{equation}
We wanted to test the fit of these last two functions by the generation of random two-quaterbit matrices--but had not yet found an effective manner of doing so (cf. \cite{osipov,mittelbach2012sampling,wang2013structure}). (The possible use of Ginibre ensembles, in the manner of \cite{osipov}, and the associated issues in doing so, has been addressed by C. F. Dunkl in App.~\ref{CholGini}.) The computational resources employed by Fei and Joynt in this regard greatly exceeded those available to us \cite{FeiJoynt}.
\subsubsection{Attempted construction of  $\tilde{\chi_8} (\varepsilon )$ for the presumptive octonionic case ($\mathbb{K}=\mathbb{O}$)} \label{AttemptOcto}
We formally repeated, now with $d=8$, the form of analysis used in our attempted construction (sec.~\ref{Attempted1}) of  $\tilde{\chi_4} (\varepsilon )$.
So, we (again, naively) assumed that we had a radial-variable-free factor of the form $C_{\mathbb{O}}=C_{\mathbb{C}}=\cos \left(\phi_{13}-\phi_{14}-\phi_{23}+\phi_{24}\right)$, and further used as our jacobian $(\rho_{11} \rho_{22} \rho_{33} \rho_{44})^7$, rather than $(\rho_{11} \rho_{22} \rho_{33} \rho_{44})^3$ (cf. \cite{slateroctonionic}). Then, the pseudo-Lovas-Andai function we obtained was
\begin{equation}
   \tilde{\chi_8} (\varepsilon)= \frac{\varepsilon ^8 \left(2458624 \varepsilon ^2-45 \left(847 \varepsilon ^4+17408 \varepsilon
   ^2+24192\right) \varepsilon ^4+1159340\right)}{1707849}.
\end{equation}
This led to a separability probability estimate of $\frac{11410114}{5429220123} \approx 0.00210161$, while the formal prediction 
\cite{slaterJModPhys} is
$\frac{44482}{4091349} \approx 0.0108722$, with a consequent ratio of these two values of $\frac{5705057}{29513807} \approx 0.193301$. 

A function that does produce $\frac{44482}{4091349}$ is
\begin{equation}
\tilde{\chi_8} (\varepsilon)=\varepsilon ^8 \left(-305 \varepsilon ^8+131 \varepsilon ^6+5 \varepsilon ^4+5
   \varepsilon ^2+3\right).
\end{equation}
\section{X-states functions conform to Dyson-index ansatz}
Since the (1,2)-, (2,1)-, (3,4)-, (4,3)-entry nullified two-quaterbit 
problem was proving challenging, it seemed of interest to investigate what arises
when we additionally nullify the (1,3)-, (3,1)-, (2,4)-, (4,2)-entries, reducing
then to the X-states framework \cite{Xstates2,khvedelidze2016spectrum}.
I, first, examined the two-rebit and two-qubit problems, and then went on
to the two-quaterbit question.
I was able to reproduce the corresponding $X$-statess PPT-probabilities 
$\frac{16}{3 \pi^2}, \frac{2}{5}$ and $\frac{2}{7}$
reported in  \cite{dunkl2015separability}.

Now, the interesting finding, certainly conforming 
to the Dyson-index ansatz, is that the 
functions of the variable ($\varepsilon$) that in the Lovas-Andai
framework is  the singular-value ratio,  are simply
\begin{equation}
\varepsilon, \varepsilon^2, \varepsilon^4
\end{equation}
in these three scenarios. (Let us note that $\varepsilon^2$ is a factor of 
 $\tilde{\chi_2} (\varepsilon ) = \varepsilon^2 (4-\varepsilon^2)/3$ and  $\varepsilon^4$ is a factor of our 
 then conjectured  $\tilde{\chi_4} (\varepsilon ) = \varepsilon^4 (2 -\varepsilon^4)$. ``It is somewhat interesting
 that the identity function approximates well  $\tilde{\chi_1} (\varepsilon ))$'' \cite[p. 6]{lovasandai}. In the 
 X-states case, we see that the relationship is an exact one.)

If we insert these functions into (\ref{sepX}),
rather than the true 
X-states PPT-probabilities, we obtain
$\frac{16}{9}- \frac{35 \pi^2}{256} \approx 0.428418, \frac{13}{66} \approx 0.19697$ and $\frac{124}{2907} \approx
0.0426557$, all smaller than the true values.

Also, if we formally proceed with the (non-division-agebra-based) case $\alpha=\frac{3}{2}$, by setting one 
of the three non-real components of the quaternionic off-diagonal entries to zero, we get
(still conforming to the ansatz), $\varepsilon^{\frac{3}{2}}$. The corresponding
X-states PPT-probability is $\frac{1024}{315 \pi^2} \approx 0.329374$, while the 
use of $\varepsilon^{\frac{3}{2}}$ in (\ref{sepX}), with   $\alpha=\frac{3}{2}$, gives $\frac{2816}{147}-\frac{129633075 \pi ^2}{67108864} 
\approx 0.0915121$.
\section{Formula for the Lovas-Andai two-quaterbit function  $\tilde{\chi_4} (\varepsilon)$}
The simple nature of the results of the X-states analyses led us to consider the slightly expanded/intermediate 
scenarios in which either the two members of the (1,3)-, (3,1)- or of the (2,4)-, (4,2)- pairs are not constrained to 
zero. We were quite surprised to find that in both the two-rebit and two-qubit cases, the associated 
separability functions were precisely  the previously found $\tilde{\chi_1} (\varepsilon)$ and  $\tilde{\chi_2} (\varepsilon)$.
(So, it appeared that we could obtain these functions by nullifying as many as three pairs of off-diagonal
entries, rather than just two, as had been our strategy up until this point in time.)

This encouraged us to similarly examine the two-quaterbit case. We employed hyperspherical coordinates for the four
components of each quaternion. Then, the positivity constraints to be enforced only involved the three radial, 
and none of the angular variables.
We rather readily arrived at the result
\begin{equation}
\tilde{\chi_4} (\varepsilon) = \frac{1}{35} \varepsilon ^4 \left(15 \varepsilon ^4-64 \varepsilon ^2+84\right).
\end{equation}
Substitution of this function into the ansatz (\ref{sepX}), with $\alpha =2$, gave us a numerator of
\begin{equation}
\int\limits_{-1}^1\int\limits_{-1}^x  \tilde{\chi}_{4} \left(
\left.\sqrt{\frac{1-x}{1+x}}\right/ \sqrt{\frac{1-y}{1+y}}	
	\right)(1-x^2)^{4}(1-y^2)^{4} (x-y)^{4} \mbox{d} y\mbox{d} x = \frac{1048576}{430890075}
\end{equation}
and a denominator of 
\begin{equation}
\int\limits_{-1}^1\int\limits_{-1}^x  (1-x^2)^{4}(1-y^2)^{4}(x-y)^{4}  \mbox{d} y \mbox{d} x =\frac{524288}{17342325}
\end{equation}
yielding the relatively long-standing conjecture \cite{FeiJoynt2,slaterJModPhys,MomentBased} of
\begin{equation}
\mathcal{P}_{sep/PPT}(\mathbb{Q})= \frac{26}{323}.
\end{equation}
(Charles Dunkl has pointed out that $524288 =2^{19}$, $1048576 =2^{20}$, $17342325= 3^2 \cdot 5^2 \cdot 7^2 \cdot 11^2 \cdot 13$, and $430890075 = 3^2 \cdot 5^2 \cdot 7^2 \cdot 11^2 \cdot 17 \cdot 19$.)

Now we will be able to further pursue this line of approach (of nullifying {\it three } off-diagonal pairs) 
to find the Lovas-Andai functions 
$\tilde{\chi_d} (\varepsilon)$, {\it in full generality}.
\section{General Construction of the Lovas-Andai Formulas} \label{GeneralConstruction}
In their interesting, important 
Conclusion section, Lovas and Andai write: ``The structure of the unit ball in operator norm of $2\times 2$ matrices plays a 
  critical role in separability probability of qubit-qubit and rebit-rebit 
  quantum systems.
It is quite surprising that the space of $2\times 2$ real or complex matrices 
  seems simple, but to compute the volume of the set
\begin{equation*}
\Big\{\begin{pmatrix}a & b\\ c& e\end{pmatrix} \Big\vert\ a, b, c, e \in \mathbb{K},
 \norm{\begin{pmatrix} a & b\\ c& e\end{pmatrix}} <1,\ \
 \norm{\begin{pmatrix} a & \varepsilon b\\ \frac{c}{\varepsilon}& e
\end{pmatrix}}  <1 \Big\}
\end{equation*}
  for a given parameter $\varepsilon\in [0,1]$, which is the value of 
  the function $\chi_{d}(\varepsilon)$, is a very challenging problem.
The gist of our considerations is that the behavior of the function 
  $\chi_{d}(\varepsilon)$ determines the separability probabilities with respect
  to the Hilbert-Schmidt measure.'' (The operator norm $\norm{\cdot}$ is the largest singular value or Schatten-$\infty$ norm. Let us note 
  that Gl{\"o}ckner studied functions on the quaternionic unit ball \cite{glockner2001functions}.)
  
It appears that the cylindrical-algebraic-decomposition approach we have applied to $4 \times 4$ density matrices $D$ with diagonal $2 \times 2$ diagonal blocks $D_1, D_2$ (and even one additional pair of nullified entries) to obtain the trio $\tilde{\chi_1} (\varepsilon )$,  $\tilde{\chi_2} (\varepsilon )$ and $\tilde{\chi_4} (\varepsilon )$  specifically answers this ``very challenging'' question, but in a manner quite different than Lovas and Andai applied in deriving $\tilde{\chi_1} (\varepsilon )$ \cite[App. A]{lovasandai}.

Our analyses, however, now are able to reveal that this problem has an extremely succinct formulation.
We employ the set of constraints (imposing--in quantum-information-theoretic terms--the positivity of the density matrix and its partial transpose),
\begin{equation}
r_{23}^2<1\land \left(r_{14}^2-1\right) \left(r_{23}^2-1\right)>r_{24}^2\land r_{23}^2
   \left(\varepsilon ^2 r_{14}^2-1\right)>\varepsilon ^2 \left(\varepsilon ^2
   r_{14}^2+r_{24}^2-1\right).   
\end{equation}
 Then, subject to these constraints, we have to integrate the jacobian (corresponding to the hyperspherical parameterization
of the three off-diagonal non-nullified entries of the density matrix) $\left(r_{14} r_{23} r_{24}\right){}^{d-1}$
over the unit cube $[0,1]^3$. Dividing the result of the integration by
\begin{equation} \label{denominator}
\frac{\pi  4^{-d} \Gamma \left(\frac{d}{2}+1\right)^2}{d^3 \Gamma
   \left(\frac{d+1}{2}\right)^2},
\end{equation}
yields the desired $\tilde{\chi_d} (\varepsilon )$. (If we take $r_{24}=0$, and a jacobian of 
$\left(r_{14} r_{23}\right){}^{d-1}$, we revert to the X-states setting, and obtain simply 
$\varepsilon^d$ as the corresponding  function.)

This last result (\ref{denominator}) is obtained by integrating the same jacobian 
$\left(r_{14} r_{23} r_{24}\right){}^{d-1}$ over the unit cube,
subject to the constraints (imposing the positivity of the density matrix),
\begin{equation}
r_{23}^2<1\land \left(r_{14}^2-1\right) \left(r_{23}^2-1\right)>r_{24}^2.
\end{equation}

In fact, we have found that for $d=8$, presumptively corresponding
to the fourth division algebra, the octonions ($\mathcal{O}$) (cf. sec.~\ref{AttemptOcto}), the function
\begin{equation}
\tilde{\chi_8} (\varepsilon )=\frac{1}{1287}  \varepsilon ^8 \left(1155 \varepsilon ^8-7680 \varepsilon ^6+20160 \varepsilon
   ^4-25088 \varepsilon ^2+12740\right),   
\end{equation}
yielded through the indicated pair of three-dimensional constrained integrations, does give the value $\frac{44482}{4091349}$, generated 
in previous studies \cite{slater833,slaterJGP2,JMP2008,ratios2}.

Further, we have confirmed through symbolic means for $d=6,10,12,14,16,18,20$ (in addition, to
the cases $d=2,4,8$ already studied) and numerical means for $d=3,5 $ (in adddition, to the two-rebit 
case $d=1$)
that this methodology, reproduces the corresponding exact (rational) values (parameterized by
$\alpha =\frac{d}{2}$) reported in \cite[eqs. (1)-(3)]{slaterJModPhys} (also eqs.~(\ref{Hou1})-\ref{Hou3})) above), based on a 
certain ``concise'' formulation of a hypergeometric-based expression. 

For even $d$, the constant term in $\tilde{\chi_d} (\varepsilon )$ is 
given by
\begin{equation}
\frac{8^d \Gamma \left(\frac{d+1}{2}\right)^3}{\pi ^{3/2} \Gamma \left(\frac{3
   d}{2}+1\right)},
\end{equation}
while the coefficient of $\varepsilon^2$ is 
\begin{equation}
-\frac{2^{3 d-1} 3^{-\frac{3}{2} (d+1)} d^3 \Gamma
   \left(\frac{d+1}{2}\right)^3}{\sqrt{\pi } \Gamma \left(\frac{d}{2}+\frac{2}{3}\right)
   \Gamma \left(\frac{d}{2}+\frac{4}{3}\right) \Gamma \left(\frac{d}{2}+2\right)}.
\end{equation}
We have obtained further formulas in this series up to the coefficient of $\varepsilon^{22}$.

Further, we have (for both even and odd values of $d$),
\begin{equation} \label{Almost}
\tilde{\chi_d} (\varepsilon )=
\end{equation}
\begin{displaymath}
\frac{\varepsilon ^d \Gamma (d+1) \left(\Gamma (d+1)^2 \,
   _3\tilde{F}_2\left(-\frac{d}{2},\frac{d}{2},d;\frac{d}{2}+1,\frac{3
   d}{2}+1;\varepsilon ^2\right)+2 \,
   _2F_1\left(-\frac{d}{2},\frac{d}{2};\frac{d+2}{2};\varepsilon ^2\right)\right)}{2
   \Gamma \left(\frac{d}{2}+1\right)^2}-
\end{displaymath}
\begin{displaymath}
\frac{4^d d \varepsilon ^d \Gamma \left(\frac{d+1}{2}\right)^2}{\pi  \Gamma
   \left(\frac{d}{2}+1\right)^2} \int_0^1 r_{14}^{2 d-1} \, _2F_1\left(-\frac{d}{2},\frac{d}{2};\frac{d+2}{2};r_{14}^2\right)
   \left(1-\varepsilon ^2 r_{14}^2\right){}^{d/2} d r_{14}
\end{displaymath}
(the tilde indicating regularization).
\subsection{Master formula for $\tilde{\chi_d} (\varepsilon )$}
An integration-by-parts 
(\url{https://mathoverflow.net/q/279065/47134}), and subsequent simplification of (\ref{Almost}), then yields (Fig.~\ref{fig:MasterFormula})
\begin{equation} \label{Ourformula}
\tilde{\chi_d} (\varepsilon )=
\end{equation}
\begin{displaymath}
\frac{\varepsilon ^d \Gamma (d+1)^3 \,
   _3\tilde{F}_2\left(-\frac{d}{2},\frac{d}{2},d;\frac{d}{2}+1,\frac{3
   d}{2}+1;\varepsilon ^2\right)}{\Gamma \left(\frac{d}{2}+1\right)^2}.
\end{displaymath}
\begin{figure}
\includegraphics{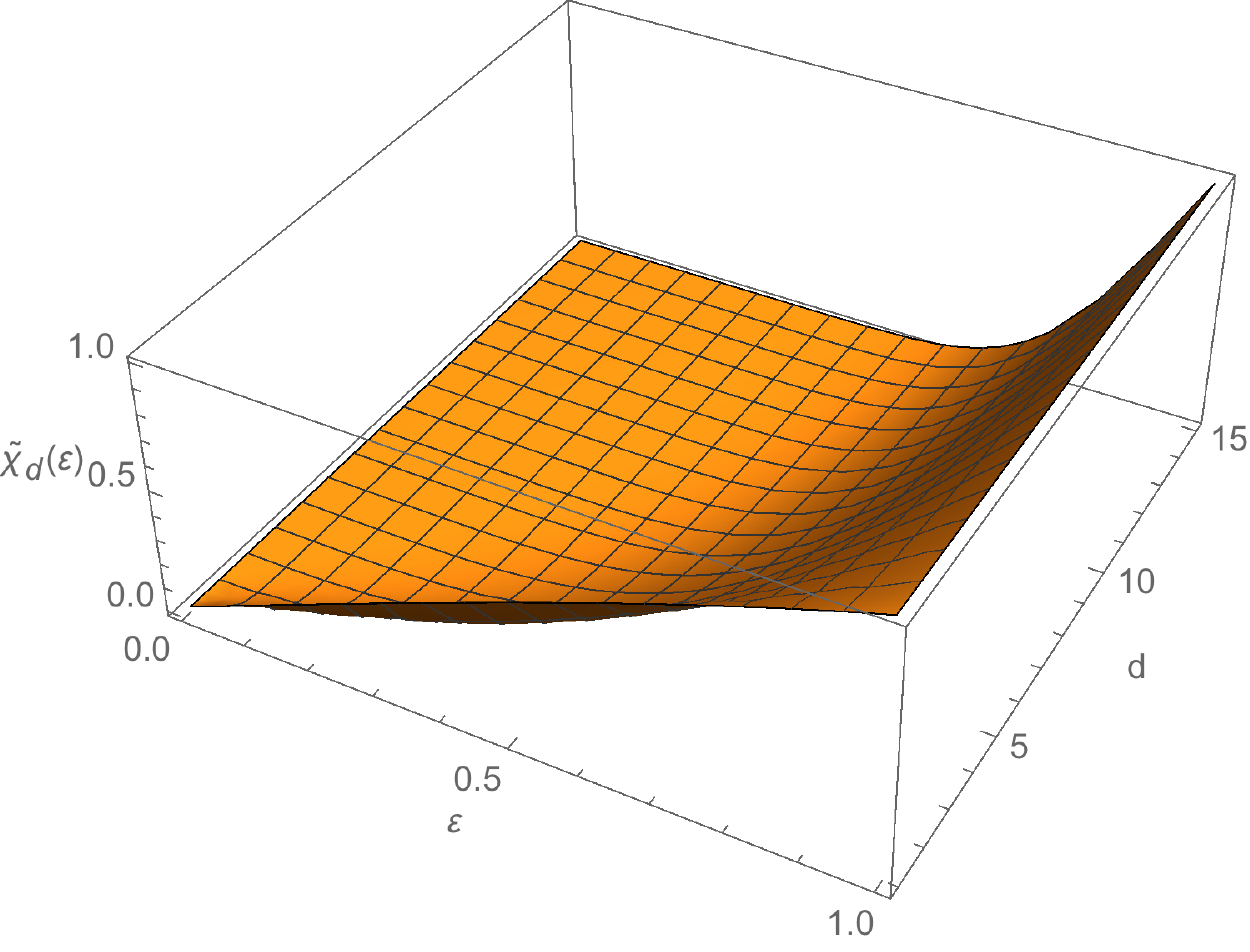}
\caption{\label{fig:MasterFormula}Lovas-Andai master formula (\ref{Ourformula}) for $\tilde{\chi_d} (\varepsilon )$}
\end{figure}

What one would now aspire to accomplish is to replicate (now for general $d$, rather than specifically $d=1$) the course pursued 
by Lovas and Andai in the proof of their Theorem 2, establishing the two-rebit separability probability as $\frac{29}{64}$. 
They employed two changes-of-variables and an integration-by-parts in their argument, recasting the problem as involving an integration
over $s \in [0,\infty]$ and $t \in [0,1]$. 
The generalization for our purposes of their ($d=1$) integrand factor
\begin{equation}
\frac{256 s^4 t^3 \left(1-t^2\right)}{(s+t)^5 (s t+1)^5}
\end{equation}
is
\begin{equation}
f(d,s,t)= \frac{2^{5 d+3} \left(\frac{s t}{(s+t)^2}\right)^{d+1} \left(\frac{s t}{(s
   t+1)^2}\right)^d \left(\frac{1}{s t+1}-\frac{t}{s+t}\right)^d}{(s t+1)^2} =
\end{equation}
\begin{equation}
(-1)^d 2^{5 d+3} s^{3 d+1} t^{2 d+1} \left(t^2-1\right)^d ((s+t) (s t+1))^{-3 d-2}.
\end{equation}
This, in fact, can be integrated over $s \in [0,\infty]$, yielding
\begin{equation} \label{PreFactor}
(-1)^d 2^{5 d+3} t^{-4 d-3} \left(t^2-1\right)^d \Gamma (3 d+2)^2 \, _2\tilde{F}_1\left(3
   d+2,3 d+2;6 d+4;1-\frac{1}{t^2}\right).    
\end{equation}
So, we are faced with the task of integrating the product of this term and $\tilde{\chi_d} (t )$, given by (\ref{Ourformula}), over $t \in [0,1]$.
Division of this ``numerator'' result by the denominator (obtained by substituting $\alpha =\frac{d}{2}$ in (\ref{General})),
\begin{equation} \label{denominatorfactor}
\frac{\pi  3^{-3 d/2} 8^d d \Gamma \left(\frac{3 d}{2}\right) \Gamma (d+1)^2}{\Gamma
   \left(\frac{d}{2}+\frac{5}{6}\right) \Gamma \left(\frac{d}{2}+\frac{7}{6}\right)
   \Gamma \left(\frac{5 d}{2}+2\right)},  
\end{equation}
would then give us the Hilbert-Schmidt separability/PPT-probability for the corresponding $d$-setting.
\subsubsection{MeijerG-based formulas for separability/PPT-probabilities for even $d$} \label{MeijerGsection}
We have, in fact, not yet to this point in time, been able to explicitly perform the indicated integration
$t \in [0,1]$, while allowing $d$ to be free, though the integration can readily be carried out for any specific even value of $d$.
In our quest for such a general formula, 
we have conducted indefinite integrations over $T=\sqrt{t}$, for even values of $d$ and have analyzed 
the results to try to uncover a general rule. We have found that for any specific even value of $d$, the corresponding indefinite 
integration yields a weighted sum of $1 +\frac{3 d}{2}$ MeijerG functions
\begin{equation}
 T^{-\frac{3 d}{2}-1} G_{3,3}^{2,3}\left(\frac{1}{T}|
\begin{array}{c}
 -3 d-1,-3 d-1,-\frac{3 d}{2} \\
 0,0,-\frac{3 d}{2}-1 \\
\end{array}
\right) +   
\end{equation}
\begin{displaymath}
\sum _{i=1}^{\frac{3 d}{2}-1} T^{-\frac{3 d}{2}+i-1} f(d,i)
   G_{3,3}^{2,3}\left(\frac{1}{T}|
\begin{array}{c}
 -3 d-1,-3 d-1,i-\frac{3 d}{2} \\
 0,0,-\frac{3 d}{2}+i-1 \\
\end{array}
\right)+
\end{displaymath}
\begin{displaymath}
f\left(d,\frac{3 d}{2}\right) G_{3,3}^{2,3}\left(\frac{1}{T}|
\begin{array}{c}
 1,-3 d,-3 d \\
 1,1,0 \\
\end{array}
\right)
\end{displaymath}
times a factor of 
\begin{equation} \label{NumFactor}
\frac{\pi ^3 \left(-\frac{1}{2}\right)^{d-1} 3^{-\frac{9 d}{2}-3} \Gamma (3 d+2)}{\Gamma
   \left(\frac{d}{2}+\frac{5}{6}\right)^3 \Gamma \left(\frac{d}{2}+\frac{7}{6}\right)^3
   \Gamma \left(\frac{3 d}{2}+1\right)^4}.    
\end{equation}
The ratio of this ``numerator factor'' (\ref{NumFactor}) to the ``denominator factor'' (\ref{denominatorfactor})
substantially simplifies (correct for both odd and even $d$) to
\begin{equation}
\frac{(-1)^{d+1} \Gamma \left(\frac{5 d}{2}+2\right)}{\Gamma \left(\frac{d}{2}+1\right)^2
   \Gamma \left(\frac{3 d}{2}+1\right)^3 \Gamma (3 d+2)}.    
\end{equation}

Let us note that (with $\mbox{i}$ being the imaginary unit)
\begin{equation}
f\left(d,\frac{3 d}{2}\right) = \frac{3 \mbox{i}^d \Gamma \left(\frac{3 d}{2}\right)^2}{8 \Gamma (d) \Gamma (2 d)},
\end{equation}
and
\begin{equation}
\frac{f\left(d,\frac{3 d}{2}-1\right)}{f(d,1)}=\frac{(27 \mbox{i})^d 4^{-2 d-3} (d+2) (5 (d-1) d+2) \Gamma \left(\frac{d}{2}-\frac{1}{3}\right)
   \Gamma \left(\frac{d}{2}+\frac{1}{3}\right) \Gamma
   \left(\frac{d}{2}+\frac{2}{3}\right) \Gamma
   \left(\frac{d}{2}+\frac{4}{3}\right)}{\sqrt{\pi } (d-1) \Gamma
   \left(d+\frac{1}{2}\right) \Gamma \left(\frac{d+3}{2}\right)^2} .   
\end{equation}

Also,
\begin{equation}
\sum _{i=1}^{\frac{3 d}{2}} f(d,i)=-1.    
\end{equation}

In general, giving us the weights $f(d,i)$ to be employed, we have the linear difference equation (constructed based on multiple applications of the Mathematica FindSequenceFunction command) shown in Fig.~\ref{fig:DifferenceRootExpression},
\begin{figure} 
\includegraphics[scale=0.75]{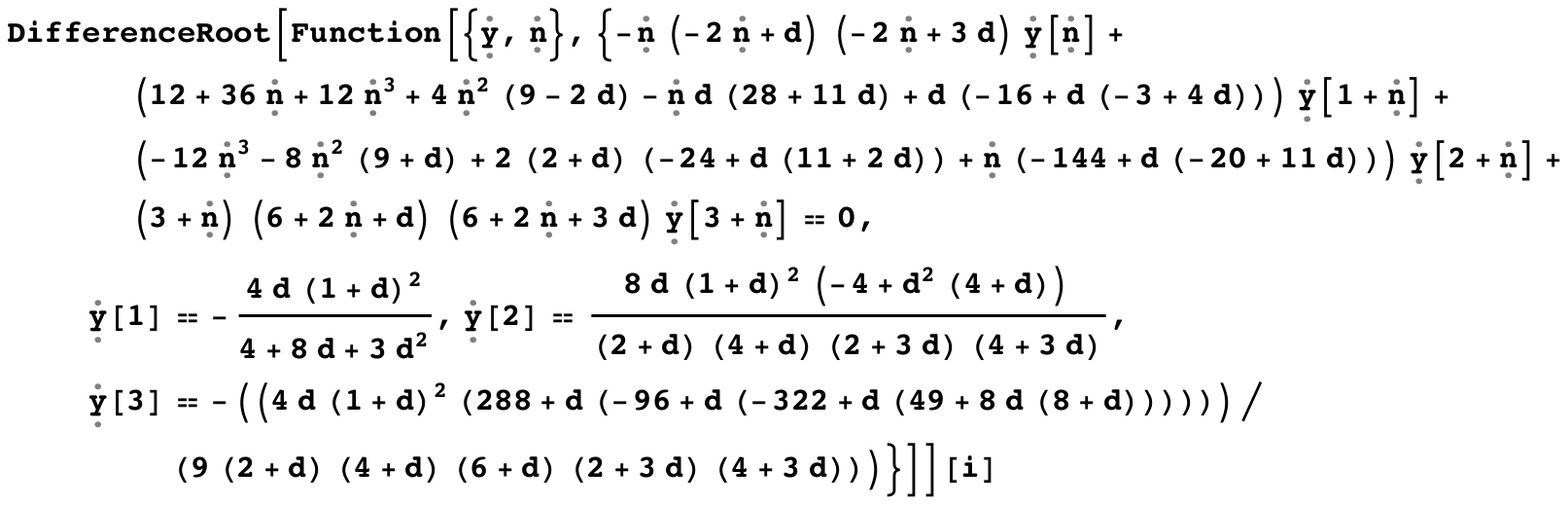}
\caption{\label{fig:DifferenceRootExpression}Linear difference equation for MeijerG summation weights $f(d,i)$}
\end{figure}

Now, one must evaluate the weighted sum of the $1 +\frac{3 d}{2}$ MeijerG functions at the specific end points $T=1$ and $T=0$, taking their
difference to arrive at the desired (definite integration) result. For the two-qubit case $d=2$, we found these two values to be
$148 +\frac{8}{33}$ and 148, respectively, giving us the correct result for the difference of $\frac{8}{33}$. For the two-quaterbit 
$d=4$ instance, we have $-333631 -\frac{297}{323}$ and -333632, giving us the $\frac{26}{323}$ oucome. Now for $d=6$, we have 
$1008871862+\frac{2999}{103385}$ and $10088871862$, consistent with \cite[eq.  (4)]{slaterJModPhys}, and similarly for $d=8$, with
the upper value being $-3543784402375-\frac{4046867}{4091349}$ and the lower value being
-3543784402376, the difference being $\frac{44482}{4091349}$, as expected. So, for odd values of $\frac{d}{2}$, we appear to have pairs of positive limits, and for even values, negative limits. 

In fact, if we replace $\alpha$ in Fig. 3 of \cite{slaterJModPhys} by $\frac{d}{2}$, we arrive at a large hypergeometric-based expression that serves as an alternative--succeeding for both odd and even values of $d$--to these MeijerG-related results.

See also App.~\ref{CFDNotes} for an alternative (arguably, superior/finite in character) approach developed by C. Dunkl to the MeijerG one just outlined.

\subsection{Equivalence argument of C. Koutschan}
It remains now to formally demonstrate the equivalence in predicted Hilbert-Schmidt separabilty/PPT-probabilities yielded by 
the Lovas-Andai-based procedure developed
in this  paper  and the earlier-presented ``concise formula''.

Let us note that with our new formula (\ref{Ourformula}) for $\tilde{\chi_d} (\varepsilon )$, we are able--at least for 
even $d$-to compute the exact rational values of the corresponding separability/PPT-probabilities. With the earlier concise formula ((\ref{Hou1})-(\ref{Hou3})) \cite[eqs. (1)-(3)]{slaterJModPhys}
we are only able--but to apparently arbitrarily high-accuracy--to approximate these values. The case of odd values of $d$ appears to be somewhat more problematical/challenging in obtaining the corresponding exact values. 

To be most specific, to compute the $d$-th separability-PPT probability ($d=1,2,4$ corresponding to $\mathbb{R}, \mathbb{C}, \mathbb{O},\ldots$), we must integrate over $t \in [0,1]$ the product of 
\begin{equation} \label{FinalLovasAndai}
\frac{(-1)^d 3^{\frac{3 d}{2}+1} 8^{2 d+1} t^{-3 (d+1)} \left(t^2-1\right)^d \Gamma
   \left(\frac{d}{2}+\frac{5}{6}\right) \Gamma \left(\frac{d}{2}+\frac{7}{6}\right)
   \Gamma \left(\frac{d+1}{2}\right) \Gamma \left(\frac{3 (d+1)}{2}\right) \Gamma
   \left(\frac{5 d}{2}+2\right) \Gamma (3 d+2)}{\pi ^2 \Gamma \left(\frac{d}{2}+1\right)}
\end{equation}
and
\begin{equation} \label{FinalLovasAndai2}
\, _2\tilde{F}_1\left(3 d+2,3 d+2;6 d+4;1-\frac{1}{t^2}\right) \,
   _3\tilde{F}_2\left(-\frac{d}{2},\frac{d}{2},d;\frac{d}{2}+1,\frac{3 d}{2}+1;t^2\right),    
\end{equation}
where  $\tilde{}$ indicates regularization. (We follow the use of Lovas and Andai in employing
either $t$ or $\varepsilon$, in different settings. When $d$ is even, the $3\tilde{F2}$ function terminates, and is a polynomial in $t$, Dunkl
indicated.)

In fact, C. Koutschan has been able, applying creative telescoping with the use of  his HolonimicFunctions package \cite{Koutschan2010}, to derive a recurrence of order 4 for this integral $I$, 
involving the terms $I(d), I(d+2), I(d+4)$. He has also  
constructed an order 6 recurrence for the large hypergeometric-based expression ($G$), involving $ G(d), G(d+2), G(d+4), G(d+6)$, given in Fig. 3 of \cite{slaterJModPhys} that yields the separability/PPT-probabilities. Equivalently to $G$, we can apparently employ \cite[p. 26]{SlaterFormulas}

He further checked that the order-6 recurrence for G is a left-multiple
(in the [Ore] operator sense) of the order-4 recurrence. Indeed, G also
satisfies the order-4 recurrence (Fig.~\ref{fig:Order4Recurrence}).
\begin{figure}
    \centering
    \includegraphics[scale=0.7]{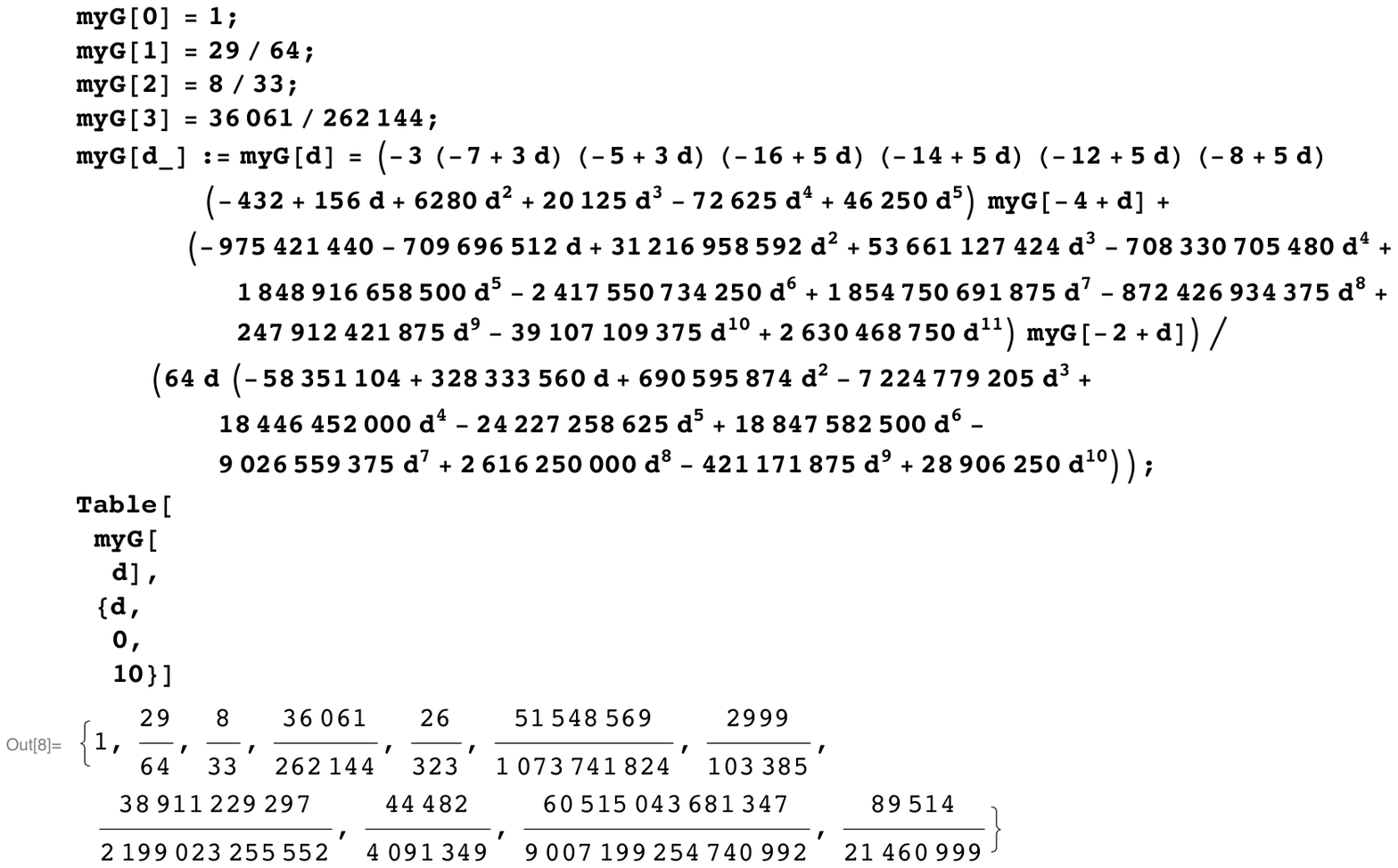}
    \caption{Order-4 recurrence satisfied by: (1) the large $_{7}F_{6}$-hypergeometric-based expression ($G$) [and its ``concise'' reformulation ((\ref{Hou1})-(\ref{Hou3}))]; (2) the Lovas-Andai-based integral $I(d)$ of the product of 
 (\ref{FinalLovasAndai}) and (\ref{FinalLovasAndai2});  and (3) the specialized random induced measure formula (\ref{InducedMeasureCase}).}
    \label{fig:Order4Recurrence}
\end{figure}

He, then, confirmed that certain initial conditions, namely, that $I(0)=G(0), ..., I(3)=G(3)$
are satisfied.
By numerically
evaluating G, and then rationalizing, he found that the first 4 values of
the sequence G(d) were the rational numbers $\left\{\frac{29}{64},\frac{8}{33},\frac{36061}{262144},\frac{26}{323}\right\}$. By applying the
recurrence, he saw that G(d) is rational for each natural number d. As a
consistency check, he compared the values for G(4), ..., G(10) that he got
from: (1) numerical evaluation; and  (2) by applying the recurrence. Indeed, they agreed.

Equivalently to $G$, in this argument, we can apparently employ \cite[p. 26]{SlaterFormulas}
\begin{equation} \label{InducedMeasureCase}
\mathcal{P}_{sep/PPT}(0,d)= 2 Q(0,d)= 1-    
\frac{\sqrt{\pi } 2^{-\frac{9 d}{2}-\frac{5}{2}} \Gamma \left(\frac{3 (d+1)}{2}\right)
   \Gamma \left(\frac{5 d}{4}+\frac{19}{8}\right) \Gamma (2 d+2) \Gamma \left(\frac{5
   d}{2}+2\right)}{\Gamma (d)} \times
\end{equation}
\begin{displaymath}
\, _6\tilde{F}_5\left(1,d+\frac{3}{2},\frac{5 d}{4}+1,\frac{1}{4} (5 d+6),\frac{5
   d}{4}+\frac{19}{8},\frac{3 (d+1)}{2};\frac{d+4}{2},\frac{5
   d}{4}+\frac{11}{8},\frac{1}{4} (5 d+7),\frac{1}{4} (5 d+9),2 (d+1);1\right).
\end{displaymath}
(More generally, $Q(k,d)$ gives that portion, for random {\it induced} measure, parameterized by $k$, of the total separability/PPT-probability for which the determinantal inequality 
$|\rho^{PT}| >|\rho|$ holds. The sum of the six upper parameters of the $_{6}F_{5}$ function here minus the sum of the 
five lower ones is ``d-free'', equalling $-\frac{1}{2}$. This indicates
that terminating the infinite sum associated with the $_{6}F_{5}$ function after $n$ terms, leads to a truncation error
of $O(n^{-\frac{1}{2}})$. In general, $_{p+1}F_p$ converges at t=1 provided the sum of the $p+1$ upper parameters minus the sum 
of the $p$ lower parameters is less than zero  \url{http://dlmf.nist.gov/16.2#i} 16.2(iii).)

\subsection{Monotone Measure Application}
In section 4 of their recent study \cite{lovasandai}, Lovas and Andai extend their analyses from one involving the (non-monotone \cite{ozawa2000entanglement}) Hilbert-Schmidt measure to
one based on the operator monotone function $\sqrt{x}$. They are able to conclude (for the case $d=1$) that the applicable ``separability function" in this case, 
$\tilde{\eta}_d(\varepsilon)$, 
is precisely the same as the Hilbert-Schmidt counterpart  $\tilde{\chi}_d(\varepsilon)$. However, rather than the complementary ``normalization factor'' ((\ref{sepX}), with $\alpha = \frac{d}{2}$),
\begin{equation} \label{MonotoneNumerator}
 {\int\limits_{-1}^1\int\limits_{-1}^x  (1-x^2)^{d}(1-y^2)^{d}(x-y)^{d}  \mbox{d} y \mbox{d} x} ,  
\end{equation}
it is necessary to employ
\begin{equation} \label{monotoneDenominator}
 {\int\limits_{-1}^1\int\limits_{-1}^x  (1-x^2)^{-\frac{d}{4}}(1-y^2)^{-\frac{d}{4}}(x-y)^{d}  \mbox{d} y \mbox{d} x}. 
\end{equation}
(We have not, to this point in time, been able to perform an integration parallel to that yielding (\ref{PreFactor}), expressing the normalization term as a bivariate function of $t$ and 
$d$.) Proceeding, as before, for specific values of $d$, we are able to verify their numerical (two-rebit [$d=1$]) separability probability result $\mathcal{P}_{sep.\sqrt{x}}(\mathbb{R})$ of 0.26223.
(We, further, observe that the normalization term (\ref{monotoneDenominator}), although not amenable
apparently to exact integration, clearly evaluates to $\frac{2 \pi}{3}$.)

Now, quite strikingly, we obtain for the two-qubit ($d=2$) analysis, the ratio of
$\frac{\pi ^2}{2}-\frac{128}{27}$ to $\frac{\pi^2}{2}$, that is,
\begin{equation}
\mathcal{P}_{sep.\sqrt{x}}(\mathbb{C})  = 1-\frac{256}{27 \pi ^2} \approx 0.0393251.
\end{equation}
(We observe that such results--as with the Hilbert-Schmidt 
$\frac{8}{33}$--appear to reach their most simple/elegant in the [standard, 15-dimensional] two-qubit setting.)

For the two-quaterbit ($d=4$) instance, we obtain the ratio of 
$\frac{4 \pi ^2}{3}-\frac{5513}{420}$ to $1.478504859 \times 10^{13}$, yielding (the ``infinitesimal'') result
\begin{equation}
\mathcal{P}_{PPT.\sqrt{x}}(\mathbb{Q})  = 2.2510618339 \times 10^{-15}.
\end{equation}

In light of these three results, it seems of clear interest to pursue parallel analyses for the 
interesting variety of monotone metrics, with the minimal monotone (Bures) \cite{dittmann1999explicit,vsafranek2017discontinuities} one seemingly of 
particular interest. One question of note is whether the original (Hilbert-Schmidt-based) Lovas-Andai function $\tilde{\chi}_d(\varepsilon)$, given by (\ref{Ourformula}), will
continue to be appropriate (as it has in the $\sqrt{x}$ case), and only the complementary normalization factor will change. 
The analyses of X-states, in these regards, might be informative (cf. \cite{slater2015bloch}).

\section{Concluding Remarks}
We have found (sec.~\ref{7Dsection}) that the Lovas-Andai two-rebit separability function $\tilde{\chi}_1(\varepsilon)$ also serves as the Slater separability function in a reduced (from nine to seven-dimensional) setting where the $2 \times 2$ diagonal block matrices $D_1, D_2$ are themselves diagonal. Additionally, we know that the Lovas-Andai two-qubit separability function $\tilde{\chi}_2(\varepsilon)$  serves as the Slater separability functions in a reduced (from fifteen to eleven-dimensional) setting where the $2 \times 2$ diagonal block matrices $D_1, D_2$ are themselves diagonal. It remains a question of some interest as to what the Slater two-rebit and two-qubit separability functions themselves are in the full nine- and fifteen-dimensional settings, in particular, the possibility that the two-qubit separability function might be $\frac{6}{71} (3-\mu^2) \mu^2$ 
(cf. Figs.~\ref{fig:QubitMU}, ~\ref{fig:QubitMUDiff}). (Can the solutions in the Lovas-Andai setting be ``lifted'' to those in the Slater one 
[cf. eqs.(\ref{jacLA})-(\ref{final})]?) Also, we note that Lovas and Andai did not specifically consider $D_1$ and $D_2$ to be diagonal. So, if would be interesting to ascertain whether their same conclusions (such as the formula for  $\tilde{\chi}_1 (\varepsilon )$) could have been reached under such  assumptions.

The counterpart rebit-retrit and qubit-qutrit $6 \times 6$ problems (sec.~\ref{sixbysix}) might also be productively studied when the $3 \times 3$ diagonal blocks are themselves diagonal. The problems under consideration would then be 14 and 23-dimensional in nature, as opposed to 20 and 35-dimensional, with lower-dimensional CAD's still.

In brief summary, let us emphasize that, at this point in time, we have basically {\it four} quite distinct formulas for the generalized two-qubit Hilbert-Schmidt separability/PPT-probabilities. In order of chronological development, we have the large expression containing six $_{7}F_{6}$ hypergeometric functions (all with argument $\frac{27}{64}$) given in \cite[Fig. 3]{slaterJModPhys}, developed on the basis of extensive moment (density approximation) calculations \cite{Provost,MomentBased}. Then, we have the ``concise'' reexpression of this formula obtained by Qing-Hu Hou through the application of Zeilberger's algorithm (creative telescoping) ((\ref{Hou1})-(\ref{Hou3})) \cite[eqs. (1)-(3)]{slaterJModPhys}. Next, we have a formula (\ref{InducedMeasureCase}) containing a single $_{6}F_{5}$ hypergeometric function, that is the specialization of an ``induced measure'' formula \cite[p. 26]{SlaterFormulas} to the Hilbert-Schmidt case ($k=0$). (This ``specialization'' relies upon the observation that, in the Hilbert-Schmidt case, the separability probability is {\it equally} divided between the cases where the nonnegative determinant of a partial transpose is greater or less than the determinant of the density matrix itself.) Finally, we have the formula developed here, the product of (\ref{FinalLovasAndai}) and (\ref{FinalLovasAndai2}), requiring an integration over $t \in [0,1]$, within the Lovas-Andai framework. (Strategies for carrying out the integration are presented in sec.~\ref{MeijerGsection} and App.~\ref{CFDNotes}). Given the constructions by C. Koutschan, using his HolonomicFunctions program \cite{Koutschan2013}, that the first, third and last of these
four formulas satisfy the same order-4 recurrence, we essentially possess a demonstration of the equivalence of all four formulas (as stringent numerics further support). (Of the four formulas,
the only one Mathematica seems able to  exactly evaluate for d = 1, 2, 3, 4,…
is the $_{6}F_{5}$-based one.)

A problem still to be addressed is to extend the set of equivalent formulas studied above, applicable to the ($k=0$) Hilbert-Schmidt case, to the more general random induced measure setting \cite{induced,LatestCollaboration} \cite[sec. XIII, App. E]{SlaterFormulas} (where many $d$-specific formulas for $\mathcal{P}_{sep/PPT}(k,d)$ are given). (Here $k=K-4$, where the measure is induced in the space of $4 \times 4$ mixed states by the natural, 
rotationally invariant measure on the set of all pure states of a $4 \times K$ system.) 
\appendix
\section{{\it Absolute} separability probabilities}
Those separable states that can not be entangled through unitary operations have
been designated as {\it absolutely} separable \cite[p. 392]{ingemarkarol}.

In \cite{slater2009eigenvalues}, we reported exact (but now decidedly not rational-valued, and much
smaller-valued) 
formulas for the Hilbert-Schmidt absolute separability probabilities
for the two-rebit, two-qubit and two-quaterbit states. For the convenience and interest of the reader,
we present them here, while simplifying the forms of the last two.

The two-rebit absolute separability probability is expressible as 
\cite[eq. (32)]{slater2009eigenvalues} 
\begin{equation}
 \frac{6928-2205 \pi }{16 \sqrt{2}}   \approx 0.0348338,
\end{equation}
the two-qubit as
\cite[eq. (34)]{slater2009eigenvalues}
\begin{equation}
1-\frac{3217542976-5120883075 \pi +16386825840 \tan ^{-1}\left(\sqrt{2}\right)}{32768
   \sqrt{2}}-\frac{29901918259}{497664}    \approx 0.00365826
\end{equation}
and the two-quaterbit as
\cite[eq. (36)]{slater2009eigenvalues}
\begin{equation}
 \frac{13}{3043362286338048} \times   
\end{equation}
\begin{displaymath}
(806338156306739134839776-658857590468226345222144 \sqrt{2}+
\end{displaymath}
\begin{displaymath}
1048604423167357891775325
   \sqrt{2} \pi -3355534154135545253681040 \sqrt{2} \tan ^{-1}\left(\sqrt{2}\right)) \approx 0.0000401326.
\end{displaymath}
Let us note that the integer components of the denominators appearing above are either simply powers of 2, or involve high powers of 2.
Also, $ \tan ^{-1}\left(\sqrt{2}\right)) \approx 0.955317$--the angle between the space diagonal of a cube and any of its three connecting angles--has been termed the ``magic angle'' (see the eponymous wikipedia article). We have also been able to obtain absolute separability probabilities in the two-rebit case, again featuring this particular angle prominently, when the Hilbert-Schmidt ($k=0$) measure is replaced by random induced measures \cite{induced} for $k=1,2,3$.

It appears to be a substantial challenge--using eq. (4) of  \cite{roland} to find the $6 \times 6$ counterparts to these
three formulas for $4 \times 4$ systems.
\section{Rebit-retrit and qubit-qutrit analyses} \label{sixbysix}

Let us now attempt to extend the two-rebit and two-qubit line of analysis above to rebit-retrit and qubit-qutrit settings--now, of course, passing from consideration of $4 \times 4$ density matrices to $6 \times 6$ ones. Lovas and Andai, in their quite recent study, had not yet addressed such issues. In \cite{slater833}, candidate (Slater-type) separability functions had been proposed. Two dependent variables (cf. 
the use of  $\mu=\sqrt{\frac{\rho_ {11} \rho_ {44}}{\rho_ {22} \rho_ {33}}}$ in the lower-dimensional setting above) had been employed \cite[eq. (44)]{slater833}. Let us now refer to these two variables as $\tau_1=\sqrt{\frac{\rho_ {11} \rho_{55}}{\rho_ {22} \rho_ {44}}}$  and $\tau_2 =\sqrt{\frac{\rho_ {22} \rho_{66}}{\rho_ {33} d _{55}}}$. But, interestingly, it was argued that only a single dependent variable $\tau= \tau_1 \tau_2 = 
\sqrt{\frac{\rho_ {11} \rho_{66}}{\rho_ {33} \rho_ {44}}}$ sufficed for modeling the corresponding separability functions.
The separability function in the rebit-retrit case was proposed to be simply proportional to $\tau$ \cite[eq.(98)]{slater833}.

In our effort to extend the Lovas-Andai analyses \cite{lovasandai} to this setting, we now took $D_1$ and $D_2$ to equal the upper and lower
diagonal $3 \times 3$ blocks of the $6 \times 6$ density matrix in question. Then, we computed the three singular values ($s_1 \geq s_2 \geq s_3$) of $D_2^{1/2} D_1^{-1/2}$, and took the ratio variables $\varepsilon_1 = \frac{s_2}{s_1}$ and $\varepsilon_2 = \frac{s_3}{s_2}$ as the dependent ones in question. (An issue of possible concern is that, unlike the $4 \times 4$ case \cite{augusiak2008universal}, positivity of the determinant of the partial transpose of a $6 \times 6$ density matrix is only a necessary, but not sufficient condition for separability.) Also, in the case of diagonal $D$, again the two variables in the $\mu$ framework are equal to those in the $\varepsilon$ setting, or to their reciprocals.

Then, we generated 3,436 million rebit-retrit and 2,379 million qubit-qutrit density matrices, randomly with respect to Hilbert-Schmidt measure. 
(These sizes are much larger than those employed in 2007--for similar purposes--in \cite{slater833}.) We appraised the separability of the density matrices $D$ by testing whether the partial transpose, using the four $3 \times 3$ blocks, had all its six eigenvalues positive. The separability probability estimates were $0.13180011 \pm 0.0000113109$ and $0.02785302 \pm 6.6124281 \cdot 10^{-6}$, respectively. (We can reject the qubit-qutrit conjecture of $\frac{32}{1199} \approx  0.0266889$ advanced in \cite[sec. 10.2]{slater833}. A possible alternative candidate is $\frac{72}{2585} =\frac{5 \cdot 11 \cdot 47}{2^3 \cdot 3^2} \approx 0.027853$, while in the rebit-retrit case, we have $\frac{298}{2261} =\frac{7 \cdot 17 \cdot 19}{2 \cdot 149} \approx 0.1318001$.) Further, our estimates of the probabilities that $D$ had two [the most possible \cite{PhysRevA.87.064302}] negative eigenvalues, and hence a positive determinant, although being entangled, were $0.0334197 \pm 0.0000409506$ in the rebit-retrit case, and $0.0103211 \pm 0.000031321$ in the qubit-qutrit instance.) 

In the two-variable settings, we partition the square $[0,1]^2$ of possible separability probability results into an 
$80 \times 80$ grid, and in the one-variable setting, use a partitioning (as in the two-rebit and two-qubit analyses above) into 200 subintervals of [0,1]. In Fig.~\ref{fig:TritRatioTau} we show the ratio of the square of the rebit-retrit separability probability to the
qubit-qutrit separability probability as a function of $\tau$, while in Fig.~\ref{fig:TritRatioTau1Tau2}, we show a two-dimensional version. Fig.~\ref{fig:TritRatioEpsilon} is the analog of this last plot using the singular-value ratios  $\varepsilon_1$ and $\varepsilon_2$.
\begin{figure}
\includegraphics{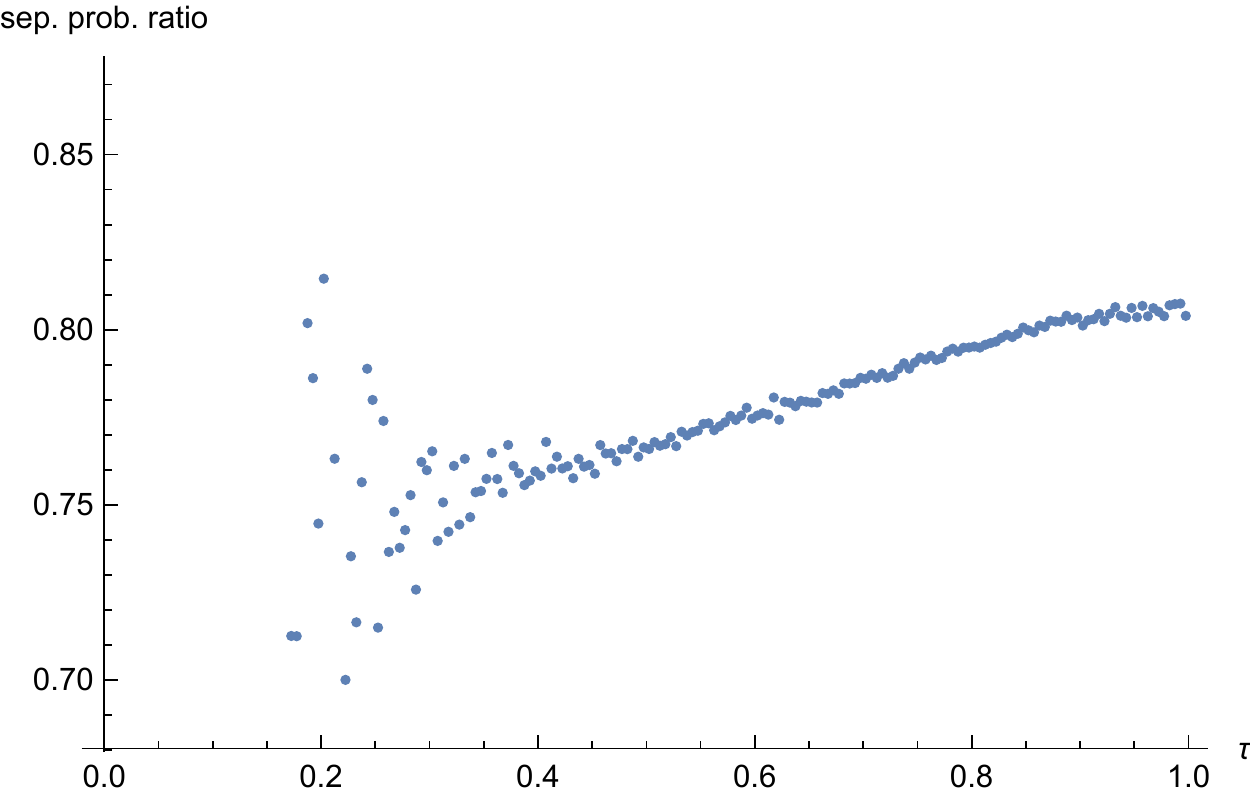}
\caption{\label{fig:TritRatioTau}The ratio of the square of the rebit-retrit separability probability to the
qubit-qutrit separability probability as a function of $\tau= \tau_1 \tau_2 = 
\sqrt{\frac{\rho_ {11} \rho_{66}}{\rho_ {33} \rho_ {44}}}$}
\end{figure}
As in Figs.~\ref{fig:SlaterRatio}, \ref{fig:LovasAndaiRatio} and \ref{fig:TwoDimensionalRatio}, we observe a gradual increase in these Dyson-index-oriented analyses. In Figs.~\ref{fig:DiagonalRebitRetrit} and \ref{fig:DiagonalQubitQutrit}, we show the highly linear (``diagonal'') rebit-retrit and qubit-qutrit separability
probabilities, holding  $\tau_1=\tau_2$.
\begin{figure}
\includegraphics{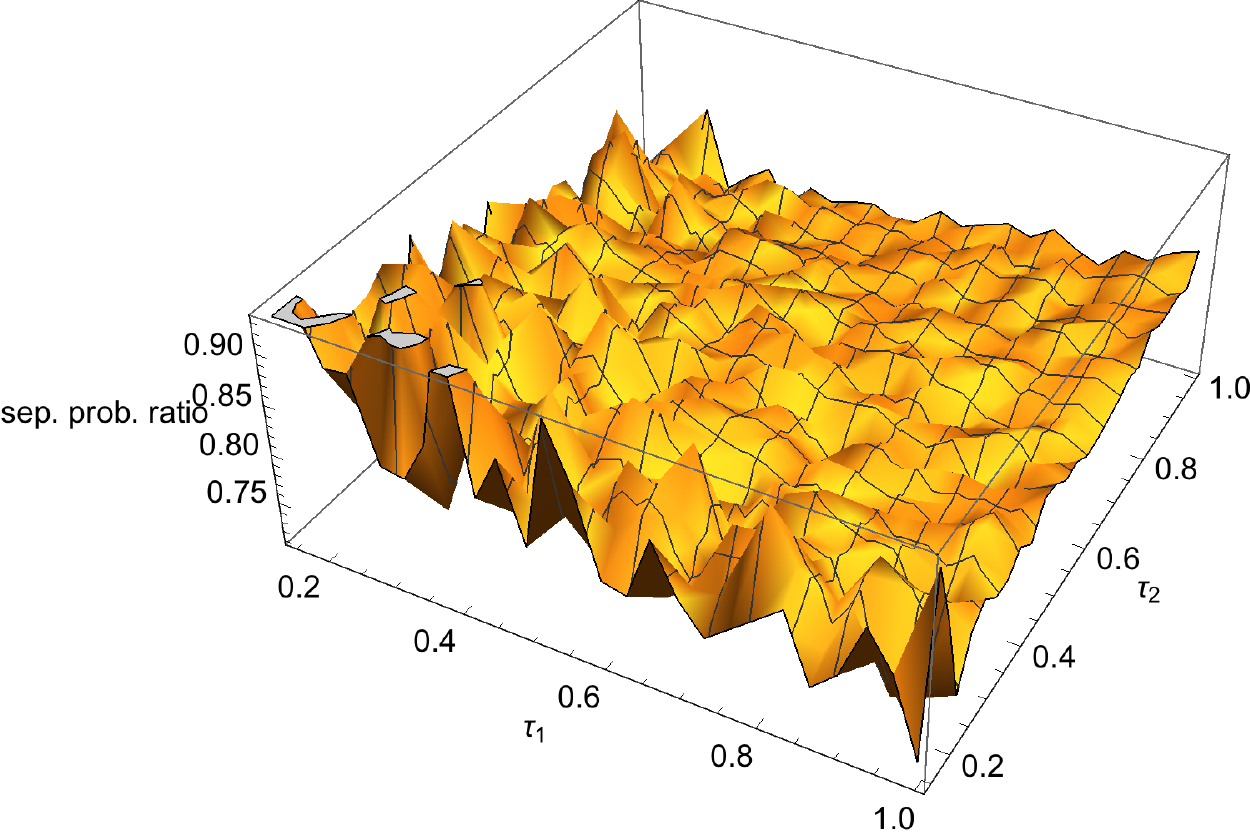}
\caption{\label{fig:TritRatioTau1Tau2}The ratio of the square of the rebit-retrit separability probability to the
qubit-qutrit separability probability as a function of $\tau_1=\sqrt{\frac{\rho_ {11} \rho_{55}}{\rho_ {22} \rho_ {44}}}$  and $\tau_2 =\sqrt{\frac{\rho_ {22} d _{66}}{\rho_ {33} d _{55}}}$}
\end{figure}
\begin{figure}
\includegraphics{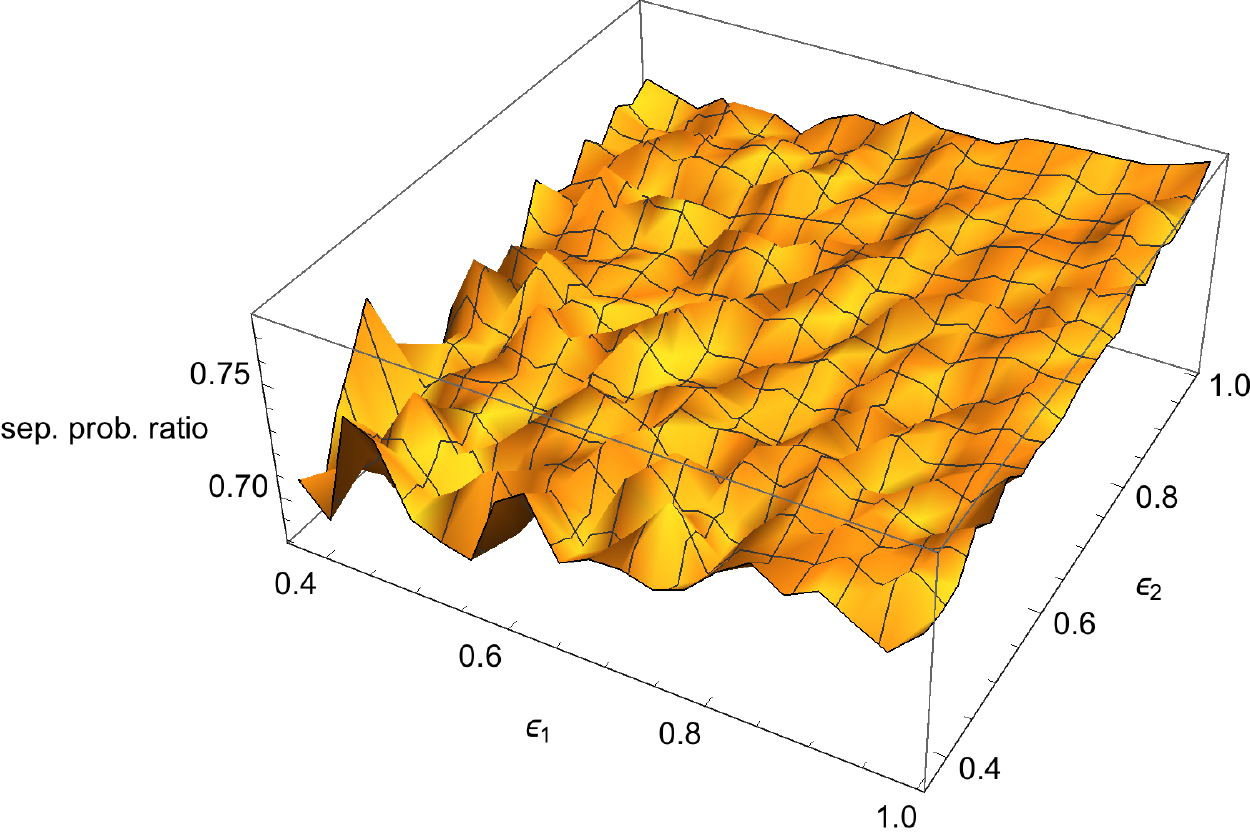}
\caption{\label{fig:TritRatioEpsilon}The ratio of the square of the rebit-retrit separability probability to the
qubit-qutrit separability probability as a function of the singular value ratios $\varepsilon_1$ and $\varepsilon_2$}
\end{figure}
\begin{figure}
    \centering
    \includegraphics{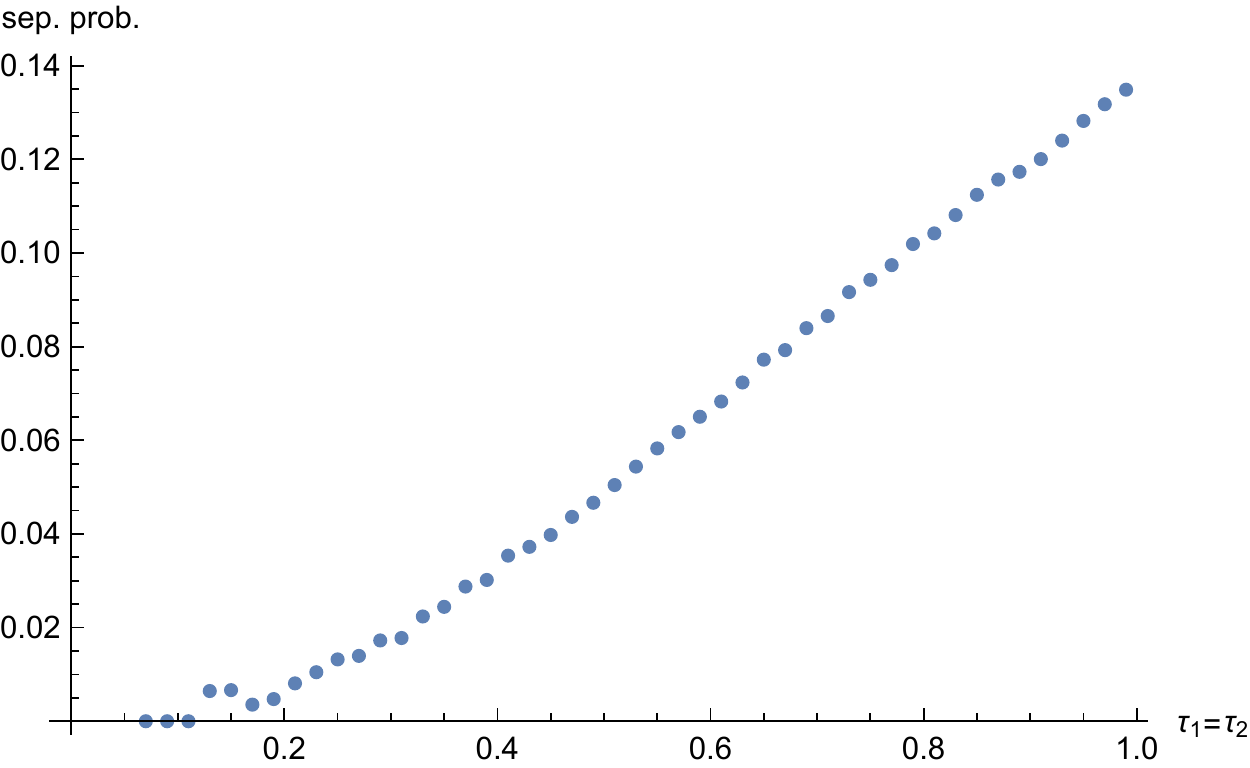}
    \caption{Rebit-retrit separability probabilities for $\tau_1=\tau_2$}
    \label{fig:DiagonalRebitRetrit}
\end{figure}
\begin{figure}
    \centering
    \includegraphics{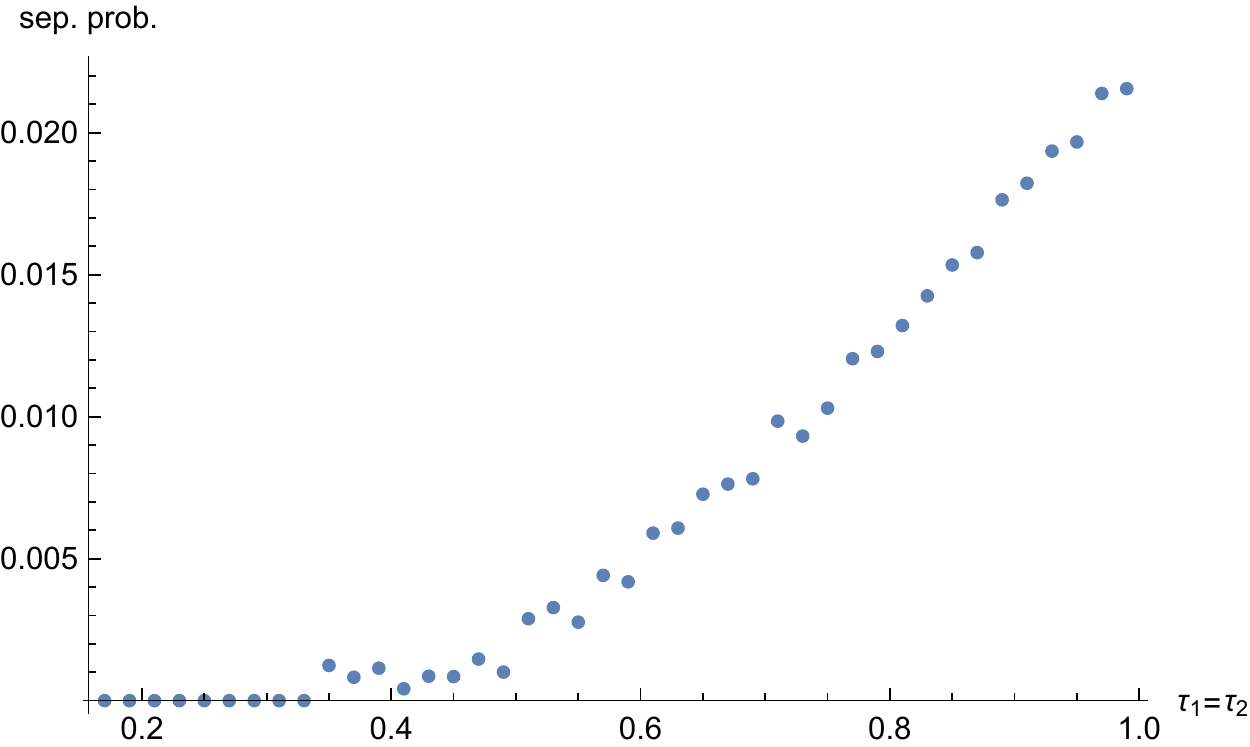}
    \caption{Qubit-qutrit separability probabilities for $\tau_1=\tau_2$}
    \label{fig:DiagonalQubitQutrit}
\end{figure}

Let us note that Mendon{\c{c}}a, and Marchiolli, and Hedemann have recently
shown \cite[App. A]{mendoncca2017maximally} that for qubit-qutrit $X$-states,
the partial transposes can--in contrast to more general such $6 \times 6$ systems--have no more than one negative eigenvalue.
Therefore, positivity of the determinant of the partial transpose 
is both necessary and sufficient for separability, in this case. Nevertheless, Dunkl
has been able to conclude that the Hilbert-Schmidt separability probabilities
reported in \cite{dunkl2015separability} for {\it two-qubit} $X$-states, continue to hold
in these higher-dimensional qubit-qutrit $X$-state systems.
\section{Comparison of Ginibre and Cholesky methods for quaternion
positive-definite matrices--by C. F. Dunkl} \label{CholGini}

The calculations depend on integrating monomials over the unit sphere in
$\mathbb{R}^{N}$. We use the Pochhammer symbol $\left(  a\right)  _{n}%
:=\prod_{i=1}^{n}\left(  a+i-1\right)  $. If $a\neq0,-1,-2,\ldots$ then
$\Gamma\left(  a+n\right)  /\Gamma\left(  a\right)  =\left(  a\right)  _{n}$.

\begin{lemma}
\label{monint}Let $S_{N-1}$ be the unit sphere in $\mathbb{R}^{N}$ with the
inherited rotation-invariant measure $d\omega$ and let $n_{1},n_{2}%
,\ldots,n_{N}\in\mathbb{N}_{0}$ ($\left\{  0,1,2,3\ldots\right\}  $) then%
\[
\int_{S_{N-1}}\prod_{i=1}^{N}\left\vert x_{i}\right\vert ^{n_{i}}%
d\omega\left(  x\right)  =2\prod_{i=1}^{N}\Gamma\left(  \frac{n_{i}+1}%
{2}\right)  /\Gamma\left(  \frac{1}{2}\left(  \sum_{i=1}^{N}n_{i}+N\right)
\right)  .
\]

\end{lemma}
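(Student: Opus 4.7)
The plan is to evaluate the Gaussian integral
\[
I:=\int_{\mathbb{R}^{N}}\prod_{i=1}^{N}\left\vert x_{i}\right\vert ^{n_{i}}\,e^{-\left\vert x\right\vert ^{2}}\,dx
\]
in two different ways and equate the results. This is the classical trick for extracting spherical monomial integrals, and here it delivers exactly the asymmetric combination of Gamma functions that appears on the right-hand side.

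First I would pass to polar coordinates $x=r\omega$ with $r\in[0,\infty)$ and $\omega\in S_{N-1}$. Since $\prod_{i}|x_{i}|^{n_{i}}=r^{\sum n_{i}}\prod_{i}|\omega_{i}|^{n_{i}}$ and $dx=r^{N-1}\,dr\,d\omega$, the integral factors as
\[
I=\left(\int_{0}^{\infty}r^{\sum_{i}n_{i}+N-1}e^{-r^{2}}dr\right)\cdot\int_{S_{N-1}}\prod_{i=1}^{N}|\omega_{i}|^{n_{i}}\,d\omega(\omega).
\]
The substitution $u=r^{2}$ reduces the radial factor to $\tfrac{1}{2}\Gamma\!\left(\tfrac{1}{2}(\sum_{i}n_{i}+N)\right)$.

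Second, I would exploit the product structure of the integrand and the measure to write
\[
I=\prod_{i=1}^{N}\int_{-\infty}^{\infty}|x_{i}|^{n_{i}}e^{-x_{i}^{2}}\,dx_{i}=\prod_{i=1}^{N}\Gamma\!\left(\tfrac{n_{i}+1}{2}\right),
\]
where each one-dimensional integral is evaluated by the standard substitution $u=x_{i}^{2}$ together with the evenness of $|x_{i}|^{n_{i}}e^{-x_{i}^{2}}$. Equating the two expressions for $I$ and solving for the spherical integral yields the claimed identity, including the factor of $2$ from the radial computation.

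There is no serious obstacle: both one-dimensional evaluations reduce to the defining integral for the Gamma function, and the polar decomposition on $\mathbb{R}^{N}$ is standard. The only thing to check in passing is that the integrals converge, which holds because $n_{i}\geq 0$ makes each integrand locally bounded and the Gaussian decay handles infinity. The result then follows immediately with no further analysis, and it is precisely in the form needed for the subsequent Ginibre versus Cholesky comparison for quaternionic positive-definite matrices.
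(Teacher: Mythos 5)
Your proof is correct and follows essentially the same route as the paper: evaluate a Gaussian integral of the monomial both by polar decomposition and by factoring over coordinates, then divide out the radial part. The only cosmetic difference is that you use the weight $e^{-|x|^2}$ where the paper uses $e^{-|x|^2/2}$, which merely shifts some powers of $2$ that cancel in either case.
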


\begin{proof}
Let $n=\sum_{i=1}^{N}n_{i}$ and $f\left(  x\right)  =\prod_{i=1}^{N}\left\vert
x_{i}\right\vert ^{n_{i}}$. In spherical polar coordinates%
\[
\int_{\mathbb{R}^{N}}f\left(  x\right)  \exp\left(  -\frac{\left\vert
x\right\vert ^{2}}{2}\right)  dx=\int_{0}^{\infty}r^{n}\exp\left(
-\frac{r^{2}}{2}\right)  r^{N-1}dr\int_{S_{N-1}}f\left(  x\right)
d\omega\left(  x\right)  .
\]
The left hand side equals (by the substitution $x_{i}^{2}=2t$ )%
\[
\prod_{i=1}^{N}\int_{-\infty}^{\infty}\left\vert x_{i}\right\vert ^{n_{i}%
}e^{-\left\vert x_{i}\right\vert ^{2}/2}dx_{i}=\prod_{i=1}^{N}2^{\left(
n_{i}+1\right)  /2}\Gamma\left(  \frac{n_{i}+1}{2}\right)  ,
\]
and%
\[
\int_{0}^{\infty}r^{n}\exp\left(  -\frac{r^{2}}{2}\right)  r^{N-1}%
dr=2^{\left(  N+n\right)  /2-1}\Gamma\left(  \frac{n+N}{2}\right)  .
\]
Divide the left side by this to obtain $\int_{S}f\left(  x\right)
d\omega\left(  x\right)  $.
\end{proof}

Denote the right hand side by $I\left[  n_{1},n_{2},\ldots,n_{N}\right]  $. We
use $0\$n$ to denote $0$ listed $n$ times. Thus the surface measure of
$S_{N-1}$ is $I[0\$N]$. We need another lemma for integrating powers of sums
of squares.

\begin{lemma}
\label{sqint}Suppose $1\leq A\leq B$ and $n=1,2,3,\ldots$then the normalized
integral%
\[
\int_{S_{B-1}}\left(  \sum_{i=1}^{A}y_{i}^{2}\right)  ^{n}d\omega\left(
y\right)  \left\{  \int_{S_{B-1}}d\omega\left(  y\right)  \right\}
^{-1}=\dfrac{\left(  A/2\right)  _{n}}{\left(  B/2\right)  _{n}}.
\]

\end{lemma}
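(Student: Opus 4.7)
The plan is to reduce the claim directly to Lemma~\ref{monint} via the multinomial theorem. Expanding
\[
\left(\sum_{i=1}^{A} y_i^2\right)^n = \sum_{k_1+\cdots+k_A = n} \binom{n}{k_1,\ldots,k_A}\,y_1^{2k_1}\cdots y_A^{2k_A}
\]
and integrating termwise over $S_{B-1}$ produces a finite sum of monomial integrals $I[2k_1,\ldots,2k_A,0,\ldots,0]$ (with $B-A$ trailing zeros), each evaluable in closed form by Lemma~\ref{monint}. After dividing by the total surface measure $I[0,\ldots,0]$, the $B-A$ factors of $\Gamma(1/2)=\sqrt{\pi}$ coming from the zero-exponent slots cancel against $B-A$ of the $B$ such factors in the denominator, so that using $\Gamma(k+1/2)/\Gamma(1/2) = (1/2)_k$ and $\Gamma(n+B/2)/\Gamma(B/2) = (B/2)_n$, the normalized integral rearranges cleanly to
\[
\frac{1}{(B/2)_n}\sum_{k_1+\cdots+k_A=n} \binom{n}{k_1,\ldots,k_A}\prod_{i=1}^{A}(1/2)_{k_i}.
\]

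The final step is the multinomial Chu--Vandermonde identity
\[
\sum_{k_1+\cdots+k_A=n}\binom{n}{k_1,\ldots,k_A}\prod_{i=1}^{A}(a_i)_{k_i} = (a_1+\cdots+a_A)_n,
\]
applied with each $a_i = 1/2$; this converts the inner sum into $(A/2)_n$ and yields the claimed ratio $(A/2)_n/(B/2)_n$. The multinomial identity itself follows by a routine induction on $A$ from the binomial Chu--Vandermonde $(a+b)_n = \sum_{k=0}^n\binom{n}{k}(a)_k(b)_{n-k}$, so no genuine difficulty enters there.

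Conceptually, an even slicker alternative avoids the combinatorial sum altogether: under the normalized surface measure on $S_{B-1}$, the variable $u = y_1^2 + \cdots + y_A^2$ has a $\mathrm{Beta}(A/2,(B-A)/2)$ distribution on $[0,1]$ (immediate from the Gaussian-polar construction already used in Lemma~\ref{monint}, integrating out the last $B-A$ coordinates via a single radial integral), and its $n$-th moment is $B(A/2+n,(B-A)/2)/B(A/2,(B-A)/2) = (A/2)_n/(B/2)_n$. Either route is short, and the only ``obstacle'' worth flagging is the standard Pochhammer/Beta bookkeeping; I would present the multinomial approach in the paper because it keeps the argument entirely within the tools already introduced.
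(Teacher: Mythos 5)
Your proposal is correct. The paper itself gives essentially no details here: its ``proof'' of Lemma~\ref{sqint} consists of the single remark that the argument is similar to that of Lemma~\ref{monint} (i.e., multiply by a Gaussian, pass to polar coordinates, and factor the integral over $\mathbb{R}^B$ into a radial piece times the spherical piece --- here one would split $\int_{\mathbb{R}^B}(\sum_{i\le A}x_i^2)^n e^{-|x|^2/2}\,dx$ as a product over the first $A$ and last $B-A$ coordinates), with Dirichlet integrals mentioned as an alternative. Your second, ``slicker'' route --- identifying $u=\sum_{i=1}^A y_i^2$ as $\mathrm{Beta}(A/2,(B-A)/2)$-distributed via the Gaussian-polar construction and reading off its $n$-th moment --- is in substance exactly that intended argument. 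Your primary route is genuinely different in mechanism: it expands by the multinomial theorem, applies Lemma~\ref{monint} termwise to the monomials $y_1^{2k_1}\cdots y_A^{2k_A}$, and then resums via the multinomial Chu--Vandermonde identity $\sum\binom{n}{k_1,\ldots,k_A}\prod_i(a_i)_{k_i}=(a_1+\cdots+a_A)_n$ with all $a_i=\tfrac12$. The bookkeeping you describe (cancelling the $B-A$ factors of $\Gamma(1/2)$ and converting the rest to Pochhammer symbols) checks out, and the Chu--Vandermonde step is standard (e.g., from comparing coefficients in $\prod_i(1-x)^{-a_i}=(1-x)^{-\sum a_i}$). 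What your route buys is that it stays strictly downstream of the already-proved Lemma~\ref{monint} at the cost of importing one combinatorial identity; what the paper's (and your alternative) route buys is brevity and no new identity. Either is acceptable; given that the paper's Lemma~\ref{monint} proof already sets up the Gaussian factorization, the Beta-moment route is arguably the more economical choice in context.
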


\begin{proof}
The argument is similar to the proof of Lemma \ref{monint}. An alternative
approach would rely on Dirichlet integrals
\end{proof}

The Cholesky method begins with a random point on $S_{27}\subset
\mathbb{R}^{28}$ to form an upper triangular matrix $A$ such that
$A_{11},A_{22},A_{33},A_{44}\geq0$ and $A_{ij}\in\mathbb{H}$ for $1\leq
i<j\leq4$. Then $Q:=A^{\ast}A$ is positive-definite and $trQ=1$. In particular
$Q_{11}=A_{11}^{2}$ and $\det Q=\left(  A_{11}A_{22}A_{33}A_{44}\right)  ^{2}%
$. The Jacobian is $A_{11}^{13}A_{22}^{9}A_{33}^{5}A_{44}$. For the measure
$\left(  \det Q\right)  ^{k}$ the $n^{th}$ moment of $Q_{11}$ (that is
$\mathcal{E}\left(  Q_{11}^{n}\right)  $) is given by%
\begin{align*}
\mu_{n}  &  =\frac{I\left[  2n+2k+13,2k+9,2k+5,2k+1,0\$24\right]
}{I[2k+13,2k+9,2k+5,2k+1,0\$24]}\\
&  =\frac{\Gamma\left(  k+7+n\right)  \Gamma\left(  4k+28\right)  }%
{\Gamma\left(  k+7\right)  \Gamma\left(  4k+28+n\right)  }=\frac{\left(
k+7\right)  _{n}}{\left(  4k+28\right)  _{n}},\\
\mu_{1}  &  =\frac{1}{4},~\mu_{2}=\frac{k+8}{4\left(  4k+29\right)  },\\
\mu_{3}  &  =\frac{\left(  k+8\right)  \left(  k+9\right)  }{8\left(
4k+29\right)  \left(  2k+15\right)  }.
\end{align*}

The Ginibre method for $M\times4$ begins with a random point on $S_{16M-1}%
\subset\mathbb{R}^{16M}$ to form an $M\times4$ matrix $H$ with $H_{ij}%
\in\mathbb{H}$ for $1\leq i\leq M,1\leq j\leq4$ and $\sum_{i=1}^{M}\sum
_{j=1}^{4}\left\vert H_{ij}\right\vert ^{2}=1$. (For a quaternion
$q=x_{1}+x_{2}\boldsymbol{i}+x_{3}\boldsymbol{j}+x_{4}\boldsymbol{k}$ define
$\overline{q}=x_{1}-x_{2}\boldsymbol{i}-x_{3}\boldsymbol{j}-x_{4}%
\boldsymbol{k}$ then $\left\vert q\right\vert ^{2}=\overline{q}q=\sum
_{i=1}^{4}x_{i}^{2}$.) Then $Q:=H^{\ast}H$ is positive-definite and $trQ=1$.
In particular $Q_{11}=\sum_{i=1}^{M}\left(  A^{\ast}\right)  _{1i}A_{i1}%
=\sum_{i=1}^{M}\overline{A_{i1}}A_{i1}=\sum_{i=1}^{M}\left\vert A_{i1}%
\right\vert ^{2}$. In real terms each $\left\vert A_{i1}\right\vert ^{2}$ is a
sum of four squared real variables so rewrite $Q_{11}=\sum_{j=1}^{4M}x_{j}%
^{2}$ where $\left(  x_{j}\right)  _{j=1}^{16M}$ is a random point on
$S_{16M-1}$. Apply Lemma \ref{sqint} with $A=4M$ and $B=16M$ to obtain%
\[
\nu_{n}=\mathcal{E}\left(  Q_{11}^{n}\right)  =\mathcal{E}\left(  \sum
_{j=1}^{4M}x_{j}^{2}\right)  ^{n}=\frac{\left(  2M\right)  _{n}}{\left(
8M\right)  _{n}}%
\]
for $n=1,2,3,\ldots$. In particular $\nu_{1}=\frac{1}{4}$ and $\nu_{2}%
=\dfrac{2M+1}{4\left(  8M+1\right)  }$.

Thus $\nu_{n}=\dfrac{\left(  2M\right)  _{n}}{\left(  8M\right)  _{n}}$ and
$\mu_{n}=\dfrac{\left(  k+7\right)  _{n}}{\left(  4k+28\right)  _{n}}$ are
equal for all $n$ exactly when $k+7=2M$. In itself this is not a proof that
the Ginibre method produces $\left(  \det Q\right)  ^{2M-7}$ times the HS
measure. This statement is a consequence of equation (4.6) in \cite{induced}.

In particular the Ginibre method does not lead to the HS measure for any $M$
since $k$ is necessarily odd.

The above calculations can be adapted to other values of the parameter
$\alpha$ (with $\alpha=\frac{1}{2}$ for $\mathbb{R}$, $\alpha=1$ for
$\mathbb{C}$, and $\alpha=2$ for $\mathbb{H}$). The Cholesky method starts
with the sphere in $\mathbb{R}^{4+12\alpha}$ ($4$ on diagonal, $12\alpha$ off
diagonal) and the Jacobian is $A_{11}^{1+6\alpha}A_{22}^{1+4\alpha}%
A_{33}^{1+2\alpha}A_{44}$. The \textit{n}th moment of $Q_{11}$ is%
\[
\frac{I\left[  2n+1+6\alpha+2k,1+4\alpha+2k,1+2\alpha+2k,1+2k,0\$12\alpha
\right]  }{I\left[  1+6\alpha+2k,1+4\alpha+2k,1+2\alpha+2k,1+2k,0\$12\alpha
\right]  }=\frac{\left(  k+3\alpha+1\right)  _{n}}{\left(  4k+12\alpha
+4\right)  _{n}}.
\]
Similarly to above the Ginibre matrix size $M\times4$ comes from a random
point on $S_{8\alpha M-1}\subset\mathbb{R}^{8\alpha M}$ and $Q_{11}$ is the
sum of $2\alpha M$ squares $x_{i}^{2}$ and the \textit{n}th moment of $Q_{11}$
is
\[
\frac{\left(  \alpha M\right)  _{n}}{\left(  4\alpha M\right)  _{n}},
\]
which agrees with the $\left(  \det Q\right)  ^{k}\times HS$ when $k=\alpha
M-3\alpha-1=\left(  M-3\right)  \alpha-1$. To apply the formula:%
\begin{align*}
\alpha &  =\frac{1}{2},k=\frac{1}{2}(M-5),\\
\alpha &  =1,k=M-4,\\
\alpha &  =2,k=2M-7.
\end{align*}
Thus the Ginibre method does produce HS random density matrices with $M=5$ for
$\mathbb{R}$ and $M=4$ for $\mathbb{C}$.

\section{A Further Formula for the Lovas-Andai Integral--by C. F.  Dunkl} \label{CFDNotes}
We deal here with the evaluation of the integral for even~$d$.%
\begin{equation}
\int_{0}^{1}t^{-3\left(  1+d\right)  }~_{2}F_{1}\left(
\genfrac{}{}{0pt}{}{2+3d,2+3d}{4+6d}%
;1-\frac{1}{t^{2}}\right)  \left(  1-t^{2}\right)  ^{d}~_{3}F_{2}\left(
\genfrac{}{}{0pt}{}{-\frac{d}{2},\frac{d}{2},d}{1+\frac{d}{2},1+\frac{3d}{2}}%
;t^{2}\right)  \mathrm{d}t\label{probint}%
\end{equation}

The first step is to rewrite the $_{2}F_{1}$ series as a power series in
$t^{2}$. By use of the identity%
\[
_{2}F_{1}\left(
\genfrac{}{}{0pt}{}{a,b}{c}%
;x\right)  =\left(  1-x\right)  ^{-a}~_{2}F_{1}\left(
\genfrac{}{}{0pt}{}{a,c-b}{c}%
;\frac{x}{x-1}\right)
\]
we obtain%
\[
_{2}F_{1}\left(
\genfrac{}{}{0pt}{}{2+3d,2+3d}{4+6d}%
;1-\frac{1}{t^{2}}\right)  =t^{2\left(  2+3d\right)  }~_{2}F_{1}\left(
\genfrac{}{}{0pt}{}{2+3d,2+3d}{4+6d}%
;1-t^{2}\right)
\]
Thus the desired integral equals
\[
\int_{0}^{1}t^{1+3d}~_{2}F_{1}\left(
\genfrac{}{}{0pt}{}{2+3d,2+3d}{4+6d}%
;1-t^{2}\right)  \left(  1-t^{2}\right)  ^{d}~_{3}F_{2}\left(
\genfrac{}{}{0pt}{}{-\frac{d}{2},\frac{d}{2},d}{1+\frac{d}{2},1+\frac{3d}{2}}%
;t^{2}\right)  \mathrm{d}t
\]

By expanding the series we can integrate term-by-term. The typical term is%
\begin{align*}
\int_{0}^{1}t^{1+3d+2j}\left(  1-t^{2}\right)  ^{d+n}\mathrm{d}t &  =\frac
{1}{2}\int_{0}^{1}s^{j+3d/2}\left(  1-s\right)  ^{d+n}\mathrm{d}s\\
&  =\frac{1}{2}B\left(  j+\frac{3d}{2}+1,d+n+1\right)  =\frac{1}{2}%
\frac{\Gamma\left(  j+\frac{3d}{2}+1\right)  \Gamma\left(  d+n+1\right)
}{\Gamma\left(  j+\frac{5d}{2}+2+n\right)  }\\
&  =\frac{1}{2}\frac{\Gamma\left(  \frac{3d}{2}+1\right)  \Gamma\left(
d+1\right)  }{\Gamma\left(  \frac{5d}{2}+2\right)  }\frac{\left(  d+1\right)
_{n}}{\left(  j+\frac{5d}{2}+2\right)  _{n}}\frac{\left(  \frac{3d}%
{2}+1\right)  _{j}}{\left(  \frac{5d}{2}+2\right)  _{j}},
\end{align*}
by use of the change-of-variable $s=t^{2}$ and the beta function. The result
for the integral (\ref{probint}) is (with $d$ even)%
\begin{equation}
\frac{1}{2}\frac{\Gamma\left(  \frac{3d}{2}+1\right)  \Gamma\left(
d+1\right)  }{\Gamma\left(  \frac{5d}{2}+2\right)  }\sum_{j=0}^{d/2}%
\frac{\left(  -\frac{d}{2}\right)  _{j}\left(  \frac{d}{2}\right)  _{j}\left(
d\right)  _{j}}{\left(  1+\frac{d}{2}\right)  _{j}\left(  2+\frac{5d}%
{2}\right)  _{j}j!}~_{3}F_{2}\left(
\genfrac{}{}{0pt}{}{2+3d,2+3d,d+1}{4+6d,j+\frac{5d}{2}+2}%
;1\right)  .\label{big1}%
\end{equation}
By use of the Gauss sum and contiguous hypergeometric series we can produce
finite expressions for the integral. We state the Gauss sum (with $\dot
{c}>a+b$) and define utility functions.%
\begin{align*}
S\left(  a,b,c\right)   &  :=\frac{\Gamma\left(  c-a-b\right)  \Gamma\left(
c\right)  }{\Gamma\left(  c-a\right)  \Gamma\left(  c-b\right)  },\\
_{2}F_{1}\left(
\genfrac{}{}{0pt}{}{a,b}{c}%
;1\right)   &  =S\left(  a,b,c\right)  .
\end{align*}
For $k=0,1,2\ldots$ and $k<\min\left(  c-a-b,g\right)  $ define
\[
S_{32}\left(  a,b,c;g,k\right)  :=~_{3}F_{2}\left(
\genfrac{}{}{0pt}{}{a,b,g}{c,g-k}%
;1\right)  .
\]

\begin{proposition}
For $k=0,1,2\ldots$ and $k<\min\left(  c-a-b,g\right)  $%
\[
S_{32}\left(  a,b,c;g,k\right)  =\frac{\Gamma\left(  c-a-b\right)
\Gamma\left(  c\right)  }{\Gamma\left(  c-a\right)  \Gamma\left(  c-b\right)
}\sum_{j=0}^{k}\frac{\left(  -k\right)  _{j}\left(  a\right)  _{j}\left(
b\right)  _{j}}{j!\left(  1+a+b-c\right)  _{j}\left(  g-k\right)  _{j}}.
\]

\end{proposition}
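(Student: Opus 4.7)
The plan is to expand the $_{3}F_{2}$ at argument $1$ as an iterated sum, interchange the summations, and evaluate each inner series by Gauss's classical $_{2}F_{1}$ summation formula. The quasi-cancellation between the upper parameter $g$ and the lower parameter $g-k$, which differ by the non-negative integer $k$, is what converts the apparently infinite series into a finite linear combination of Gauss sums.

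The key preparatory identity is
\[
\frac{(g)_n}{(g-k)_n} \;=\; \sum_{j=0}^{k}\frac{(-k)_j\,(-n)_j}{j!\,(g-k)_j}\qquad(n\in\mathbb{N}_0),
\]
valid for every non-negative integer $k$. Both sides equal $(g-k+n)_k/(g-k)_k$, so the identity reduces to expanding the degree-$k$ polynomial $(g-k+n)_k$ in $n$ in the basis $\{(-n)_j\}_{j=0}^{k}$, which is a form of Chu--Vandermonde; a one-line induction on $k$ also works. Substituting this expansion into the series defining $S_{32}$ and swapping the finite inner $j$-sum with the outer $n$-sum yields
\[
S_{32}(a,b,c;g,k) \;=\; \sum_{j=0}^{k}\frac{(-k)_j}{j!\,(g-k)_j}\sum_{n=0}^{\infty}\frac{(a)_n(b)_n\,(-n)_j}{n!\,(c)_n}.
\]

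Since $(-n)_j=0$ for $n<j$ and $(-n)_j=(-1)^j n!/(n-j)!$ for $n\geq j$, the shift $m=n-j$ turns the inner sum into
\[
(-1)^j\,\frac{(a)_j(b)_j}{(c)_j}\;{}_{2}F_{1}\!\left(\genfrac{}{}{0pt}{}{a+j,\,b+j}{c+j};1\right),
\]
and the hypothesis $k<c-a-b$ ensures $(c+j)-(a+j)-(b+j)>0$ for every $j\leq k$, so Gauss's theorem applies and evaluates this $_{2}F_{1}$ as $\Gamma(c+j)\Gamma(c-a-b-j)/\bigl(\Gamma(c-a)\Gamma(c-b)\bigr)$. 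Using $\Gamma(c+j)=(c)_j\Gamma(c)$ cancels the stray $(c)_j$ from the shift, and the elementary reduction $\Gamma(c-a-b-j)=(-1)^j\Gamma(c-a-b)/(1+a+b-c)_j$ supplies the denominator Pochhammer symbol $(1+a+b-c)_j$ of the claimed formula. The two powers of $(-1)^j$ cancel, the Gauss prefactor $S(a,b,c)=\Gamma(c-a-b)\Gamma(c)/\bigl(\Gamma(c-a)\Gamma(c-b)\bigr)$ factors out of the $j$-sum, and what remains is exactly the asserted right-hand side.

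The main obstacle is purely bookkeeping: tracking the several $(-1)^j$ factors, the Pochhammer reductions of shifted Gammas, and the distinction between $(g-k)_j$ and $(g-k)_n$. There is no analytic subtlety, because the standing hypothesis $k<\min(c-a-b,g)$ is designed precisely to guarantee (i) absolute convergence of the original $_{3}F_{2}$ at $1$, (ii) applicability of Gauss's formula in every inner summand, and (iii) that the denominators $(g-k)_j$ never vanish for $0\leq j\leq k$.
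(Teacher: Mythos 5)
Your proof is correct and follows essentially the same route as the paper's: the same finite expansion of $(g)_n/(g-k)_n$ via Chu--Vandermonde (you write the falling factorial as $(-1)^j(-n)_j$ where the paper keeps $n(n-1)\cdots(n-j+1)$ and absorbs the signs through $\binom{k}{j}=(-1)^j(-k)_j/j!$), the same interchange of sums, the same index shift $m=n-j$, and the same Gauss evaluation with the reductions $\Gamma(c+j)=(c)_j\Gamma(c)$ and $(c-a-b-j)_j=(-1)^j(1+a+b-c)_j$. No gaps.
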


\begin{proof}
For any $n$ there is the formula (easy to verify, by finite differences, or
the Chu-Vandermonde sum)%
\[
\frac{\left(  g+n-k\right)  _{k}}{\left(  g-k\right)  _{k}}=\sum_{j=0}%
^{k}\binom{k}{j}\frac{n\left(  n-1\right)  \left(  n-2\right)  \cdots\left(
n-j+1\right)  }{\left(  g-k\right)  _{j}}.
\]
Then%
\[
~_{3}F_{2}\left(
\genfrac{}{}{0pt}{}{a,b,g}{c,g-k}%
;1\right)  =\sum_{n=0}^{\infty}\frac{\left(  a\right)  _{n}\left(  b\right)
_{n}}{\left(  c\right)  _{n}n!}\frac{\left(  g\right)  _{n}}{\left(
g-k\right)  _{n}}%
\]
and
\[
\frac{\left(  g\right)  _{n}}{\left(  g-k\right)  _{n}}=\frac{\left(
g\right)  _{n-k}\left(  g+n-k\right)  _{k}}{\left(  g-k\right)  _{k}\left(
g\right)  _{n-k}}=\frac{\left(  g+n-k\right)  _{k}}{\left(  g-k\right)  _{k}}.
\]
Also%
\begin{gather*}
\sum_{n=0}^{\infty}\frac{\left(  a\right)  _{n}\left(  b\right)  _{n}}{\left(
c\right)  _{n}n!}n\left(  n-1\right)  \left(  n-2\right)  \cdots\left(
n-j+1\right)  =\sum_{n=j}^{\infty}\frac{\left(  a\right)  _{n}\left(
b\right)  _{n}}{\left(  c\right)  _{n}\left(  n-j\right)  !}\\
=\frac{\left(  a\right)  _{j}\left(  b\right)  _{j}}{\left(  c\right)  _{j}%
}\sum_{m=0}^{\infty}\frac{\left(  a+j\right)  _{m}\left(  b+j\right)  _{m}%
}{\left(  c+j\right)  _{m}m!}=\frac{\left(  a\right)  _{j}\left(  b\right)
_{j}}{\left(  c\right)  _{j}}S\left(  a+j,b+j,c+j\right)  .
\end{gather*}
making the change of index $m=n-j$ (observe that $n\cdots\left(  n-j+1\right)
=0$ for $0\leq n\leq j-1$). The $S$-term equals%
\[
\frac{\Gamma\left(  c-a-b-j\right)  \Gamma\left(  c+j\right)  }{\Gamma\left(
c-a\right)  \Gamma\left(  c-b\right)  }=\frac{\left(  c\right)  _{j}}{\left(
c-a-b-j\right)  _{j}}\frac{\Gamma\left(  c-a-b\right)  \Gamma\left(  c\right)
}{\Gamma\left(  c-a\right)  \Gamma\left(  c-b\right)  }%
\]
by use of the relation $\Gamma\left(  t\right)  \left(  t\right)  _{j}%
=\Gamma\left(  t+j\right)  $; also by reversal $\left(  c-a-b-j\right)
_{j}=\left(  -1\right)  ^{j}\left(  1+a+b-c\right)  _{j}$. The binomial
coefficient $\binom{k}{j}=\left(  -1\right)  ^{j}\frac{\left(  -k\right)
_{j}}{j!}$. Combine the ingredients and this proves the formula.
\end{proof}

Consider the typical term in (\ref{big1})%
\[
_{3}F_{2}\left(
\genfrac{}{}{0pt}{}{2+3d,2+3d,d+1}{4+6d,j+\frac{5d}{2}+2}%
;1\right)  .
\]
Set $a=2+3d,b=1+d,c=4+6d,g=2+3d$, $k=\frac{d}{2}-j$. Thus the desired integral
(\ref{probint})  equals%
\begin{align*}
&  \frac{1}{2}\frac{\Gamma\left(  \frac{3d}{2}+1\right)  \Gamma\left(
d+1\right)  }{\Gamma\left(  \frac{5d}{2}+2\right)  }\sum_{j=0}^{d/2}%
\frac{\left(  -\frac{d}{2}\right)  _{j}\left(  \frac{d}{2}\right)  _{j}\left(
d\right)  _{j}}{\left(  1+\frac{d}{2}\right)  _{j}\left(  2+\frac{5d}%
{2}\right)  _{j}j!}S_{32}\left(  2+3d,1+d,4+6d;2+3d,\frac{d}{2}-j\right)  \\
&  =\frac{1}{2}\frac{\Gamma\left(  \frac{3d}{2}+1\right)  \Gamma\left(
d+1\right)  }{\Gamma\left(  \frac{5d}{2}+2\right)  }\frac{\Gamma\left(
4+6d\right)  \Gamma\left(  1+2d\right)  }{\Gamma\left(  2+3d\right)
\Gamma\left(  3+5d\right)  }\\
&  \times\sum_{j=0}^{d/2}\frac{\left(  -\frac{d}{2}\right)  _{j}\left(
\frac{d}{2}\right)  _{j}\left(  d\right)  _{j}}{\left(  1+\frac{d}{2}\right)
_{j}\left(  2+\frac{5d}{2}\right)  _{j}j!}\sum_{i=0}^{d/2-j}\frac{\left(
j-\frac{d}{2}\right)  _{i}\left(  2+3d\right)  _{i}\left(  1+d\right)  _{i}%
}{i!\left(  -2d\right)  _{i}\left(  2+\frac{5d}{2}+j\right)  _{i}}.
\end{align*}
The double sum in the last line can be rewritten as%
\[
\sum_{i\geq0,j\geq0}^{i+j\leq d/2}\frac{\left(  -\frac{d}{2}\right)
_{i+j}\left(  \frac{d}{2}\right)  _{j}\left(  d\right)  _{j}\left(
2+3d\right)  _{i}\left(  1+d\right)  _{i}}{\left(  2+\frac{5d}{2}\right)
_{i+j}\left(  1+\frac{d}{2}\right)  _{j}i!j!\left(  -2d\right)  _{i}}.
\]
Collect the (independent of $t$) prefactors (from the product of (\ref{FinalLovasAndai}) and (\ref{FinalLovasAndai2}))%
\[
18\frac{\Gamma\left(  \frac{1}{2}+\frac{d}{2}\right)  ^{3}\Gamma\left(
\frac{7}{6}+\frac{d}{2}\right)  ^{2}\Gamma\left(  \frac{5}{6}+\frac{d}%
{2}\right)  ^{2}\Gamma\left(  1+2d\right)  }{\pi^{7/2}\Gamma\left(  1+\frac
{d}{2}\right)  \Gamma\left(  3+5d\right)  }3456^{d}.
\]
Evaluate for even $d$, $\Gamma\left(  \frac{1}{2}+\frac{d}{2}\right)
=\Gamma\left(  \frac{1}{2}\right)  \left(  \frac{1}{2}\right)  _{d/2}$ (and
$\Gamma\left(  \frac{1}{2}\right)  =\sqrt{\pi}$);
\begin{align*}
\Gamma\left(  \frac{7}{6}+\frac{d}{2}\right)  \Gamma\left(  \frac{5}{6}%
+\frac{d}{2}\right)   &  =\left(  \frac{7}{6}\right)  _{d/2}\left(  \frac
{5}{6}\right)  _{d/2}\Gamma\left(  \frac{7}{6}\right)  \Gamma\left(  \frac
{5}{6}\right)  \\
&  =\left(  \frac{7}{6}\right)  _{d/2}\left(  \frac{5}{6}\right)  _{d/2}%
\frac{1}{6}\Gamma\left(  \frac{1}{6}\right)  \Gamma\left(  \frac{5}{6}\right)
,
\end{align*}
and $\Gamma\left(  \frac{1}{6}\right)  \Gamma\left(  \frac{5}{6}\right)
=\dfrac{\pi}{\sin\frac{\pi}{6}}=2\pi$ (recall $\Gamma\left(  t\right)
\Gamma\left(  1-t\right)  =\dfrac{\pi}{\sin\pi t}$). Put it all together (even
$d$)%
\[
\mathcal{P}_{sep/PPT}\left(  d\right)  =3456^{d}\frac{\left(  \frac{1}{2}\right)  _{d/2}%
^{3}\left(  \frac{7}{6}\right)  _{d/2}^{2}\left(  \frac{5}{6}\right)
_{d/2}^{2}\left(  2d\right)  !}{\left(  \frac{d}{2}\right)  !\left(  3\right)
_{5d}}\sum_{i\geq0,j\geq0}^{i+j\leq d/2}\frac{\left(  -\frac{d}{2}\right)
_{i+j}\left(  \frac{d}{2}\right)  _{j}\left(  d\right)  _{j}\left(
2+3d\right)  _{i}\left(  1+d\right)  _{i}}{\left(  2+\frac{5d}{2}\right)
_{i+j}\left(  1+\frac{d}{2}\right)  _{j}i!j!\left(  -2d\right)  _{i}}.
\]

\section{Remark on Lovas-Andai paper}
It certainly appears that the work of Lovas and Andai \cite{lovasandai}--inspired by that of Milz and Strunz \cite{milzstrunz}--is highly innovative and successful in finding the two-rebit separability function $\tilde{\chi}_1(\varepsilon)$, and verifying the conjecture that the two-rebit Hilbert-Schmidt separability probability is $\frac{29}{64}$. However, in our study of the Lovas-Andai paper, we remain unconvinced by the chain of arguments on page 13 leading to the result $\frac{1}{4}$, and have posted a stack exchange question (\url{https://mathematica.stackexchange.com/q/144277/29989}) in this regard.

\acknowledgments
A number of people provided interesting comments in regard
to questions posted on the mathematics, Mathematica, mathoverflow and physics stack exchanges.
I discussed the two-quaterbit PPT-probability problem--and other items--extensively with
(the always helpful/insightful) Charles Dunkl. Christoph Koutschan, as noted, performed certain calculations laying the
foundation for a formal proof that the Lovas-Andai and ``concise'' formulas
yield the same set of results. Christian Krattenthaler also responded to certain  queries.

\bibliography{main}

\end{document}